\pgfplotsset{compat=newest}
\numberwithin{equation}{section}
\newtheorem{theorem}{Theorem}[section]
\newtheorem{lemma}[theorem]{Lemma}
\newtheorem{definition}[theorem]{Definition}
\newtheorem{remark}[theorem]{Remark}
\newtheorem{proposition}[theorem]{Proposition}
\title[Nonlinear resonances in resonator crystals]{Analysis of nonlinear resonances in resonator crystals: Tight-binding approximation and existence of subwavelength soliton-like solutions}
\date{}
\author{Habib Ammari} %
\address[H. Ammari]{Department of Mathematics, ETH Z\"{u}rich, R\"{a}mistrasse 101, CH-8092 Z\"{u}rich, Switzerland; Hong Kong Institute for Advanced Study, City University of Hong Kong, Kowloon Tong, Hong Kong}
\email{habib.ammari@math.ethz.ch}
\author{Jiayu Qiu} %
\address[J. Qiu]{Department of Mathematics, ETH Z\"{u}rich, R\"{a}mistrasse 101, CH-8092 Z\"{u}rich, Switzerland}
\email{jiayu.qiu@sam.math.ethz.ch}
\begin{document}

\begin{abstract}
This work provides a mathematical framework for elucidating physical mechanisms for confining waves at subwavelength scales in periodic systems of nonlinear resonators. A discrete approximation in terms of the linear capacitance operator is provided to characterize the nonlinear subwavelength resonances. Moreover, the existence of subwavelength soliton-like localized waves in periodic systems of nonlinear resonators is proven. As a by-product,  a tight-binding approximation of the capacitance operator is shown to be valid for crystals of subwavelength resonators. Both full- and half-space crystals are considered. The framework developed in this work opens the door to the study of topological properties of periodic lattices of subwavelength nonlinear resonators, such as the emergence of nonlinearity-induced topological edge states, and to elucidate the interplay between nonlinearity and disorder.
\end{abstract}

\maketitle

\bigskip

\noindent \textbf{Keywords.} Nonlinear subwavelength resonance, nonlinear periodic structure, nonlinear discrete approximation, Kerr nonlinearity, Helmholtz equation, soliton-like solution. \par

\bigskip

\noindent \textbf{AMS Subject classifications.} 35P30, 35C20, 74J20.
\\

\tableofcontents

\section{Introduction}

Subwavelength physics is concerned with wave interactions in materials with structures at subwavelength scales. Its ultimate goal is to manipulate waves at extremely low frequencies. Recent breakthroughs, such as the emergence of the field of metamaterials (\textit{i.e.}, microstructured materials with unusual localization and transport properties), have allowed us to do this in a way that is robust and overcomes traditional diffraction limits using high-contrast resonators; see, \textit{e.g.}, \cite{paa,sheng,fink}.

In the linear case, periodic, truncated periodic, and finite subwavelength resonator systems have been intensively studied. Using first-principle mathematical analysis, new fundamental insights into the mechanisms responsible for the peculiar features of wave localization, scattering, and guiding at deep subwavelength scales have been revealed; see, \textit{e.g.}, \cite{paa,cbms}. 
The obtained results are based on the \textit{capacitance formulation}, which is a powerful tool for characterizing the subwavelength resonant modes of a system of high-contrast resonators.  The capacitance formulation provides a discrete approximation to the low-frequency part of the spectrum of the continuous PDE model, valid in the high-contrast asymptotic limit. 

Recently, there has been a shift in the approach to subwavelength physics. A class of nonlinear artificial materials consisting of nonlinear subwavelength high-contrast resonator building blocks has been introduced. This class of materials has revived the exciting prospect of realizing the long-standing goal of subwavelength robust confinement and guiding of waves (see, \textit{e.g.}, \cite{rev-acoustics,active}). 

Although very significant advances in experimental and numerical modelling have been achieved in the field of nonlinear subwavelength physics during recent decades, little is known from a mathematical point of view. In \cite{bowen-dielectric}, a rigorous mathematical framework for the analysis of nonlinear dielectric resonances in wave scattering by high-index resonators with Kerr-type nonlinearities has been provided and the existence of nonlinear dielectric resonances in the subwavelength regime, bifurcating from the zero solution at the corresponding linear resonances, has been proved. More closely to the present work, in \cite{ammari2025nonlinear_resonance}, the resonance problem for the cubic nonlinear Helmholtz equation in the subwavelength regime in the presence of a finite number of high-contrast  resonators has been considered and a discrete model has been derived to approximate the subwavelength resonances.  However, as far as we know, spectral properties of periodic nonlinear systems of subwavelength resonators have not yet been analyzed in the mathematical literature. In contrast to the linear response regime studied in depth in \cite{paa,cbms}, there is a clear lack of a deep understanding of the theory of nonlinear subwavelength resonances in periodic systems. It is important to rigorously evaluate the effect of nonlinearities on the confinement properties of periodic systems of subwavelength resonators and prove whether the properties demonstrated in the linear case persist to nonlinearity and if a new phenomenon pertaining to nonlinearity emerges.

The objective of this paper is to provide a mathematical framework for elucidating physical mechanisms for confining waves at subwavelength scales in periodic systems of nonlinear resonators. Our main focus is twofold: (i) to provide, under the assumption of high contrast in the material parameters of the resonators, a discrete approximation in terms of the linear capacitance operator to characterize the nonlinear resonances and (ii) to prove the existence of soliton-like localized waves in periodic systems of nonlinear high-contrast resonators. Such solitons are supported in the subwavelength regime. As a by-product, by establishing new properties of the \textit{Dirichlet-to-Neumann (DtN) map} corresponding to the structure that is exterior to the resonators, we show for the first time that a \textit{tight-binding approximation} of the capacitance operator holds for crystals of sbwavelength resonators. We consider both full- and half-space crystal settings. 

The paper is organized as follows. In Section \ref{sec:2}, we introduce the nonlinear resonance problem and some notation. We also state our main results on the capacitance operator, the approximation of nonlinear resonances, and the existence of subwavelength soliton-like solutions. Finally, we discuss the relation of our work to previous work on nonlinear Schr\"odinger equations. Section \ref{sec:3} is devoted to prove key properties of the capacitance operator and to derive a useful representation of the exterior DtN map. In Section \ref{sec:4}, the discrete approximation of nonlinear resonances is rigorously proved. In Section \ref{sec_gap_soliton}, the existence and characterization of soliton-like solutions are proved. Finally, some concluding remarks and open problems are given in Section \ref{sec:conclusion}. We emphasize that our results in this paper have applications in nanophotonics and nanophononics, enhanced wave-matter interactions at subwavelength scales, and active metamaterials; see, \textit{e.g.}, \cite{soliton1,soliton2,soliton3,active,nonreciprocal,rev-acoustics}.

\section{Problem setting and main results} \label{sec:2}

\subsection{Notation} \label{sec:notation}
\begin{itemize}
    \item Throughout this paper, $L^{p}(O)=L^{p}(O;\mathbb{C})$ ($p\geq 1$) denotes the complex-valued $p-$integrable functions on the open set $O$. The corresponding Sobolev space $W^{k,p}(O)$ is defined in the standard way. When $p=2$, we denote $H^{k}(O):=W^{k,2}(O)$.
    \item The complex $p-$integrable $d-$dimensional vector-valued functional space on a discrete set $S$ is denoted by $\ell^{p}(S;\mathbb{C}^{d})$. When $p=2$, $\ell^{p}(S;\mathbb{C}^{d})$ is a complex (real, resp.) Hilbert space, with its inner product defined in the standard way;
    \item $(\cdot,\cdot)_{\mathcal{H}}$ denotes the inner product of the Hilbert space $\mathcal{H}$. The $L^2(\mathcal{D})$ inner product, which is used most frequently in this paper, is written simply as $(\cdot,\cdot)_{\mathcal{D}}$. We will also abbreviate the notation for vector-valued functions as $(\nabla u,\nabla v)_{\mathcal{D}}:=\int_{\mathcal{D}}\nabla u\cdot \overline{\nabla v}$. The dual space of $H^1(\mathcal{D})$ induced by $(\cdot,\cdot)_{\mathcal{D}}$ is denoted by $(H^1(\mathcal{D}))^*$, and its corresponding norm is denoted by $\|\cdot\|_{-1,\mathcal{D}}$.
    \item The curly bracket $\langle\cdot,\cdot\rangle$ denotes the dual pairing between $H^{-1/2}(\mathcal{D})$ and $H^{1/2}(\mathcal{D})$;
    \item $\|\cdot \|_{\mathcal{B}}$ denotes the norm of the Banach space $\mathcal{B}$. In particular, the $H^k(\mathcal{D})$ ($k=0,1$) norm is denoted in the shorthand form as $\|\cdot\|_{k,\mathcal{D}}$;
    \item $\Re(z)$ (resp. $\Im(z)$) denotes the real (resp. imaginary) part of a complex number $z$.
\end{itemize}

\subsection{Formulation of the nonlinear resonance problem}
We consider the following nonlinear wave equation in $\mathbb{R}^2$:
\begin{equation} \label{eq_pde_wave_eq}
\left\{
\begin{aligned}
&\frac{\partial^2 u}{\partial t^2}-\frac{1}{n_{e}^2}\Delta u=0 \quad \text{in } \Omega:= \mathbb{R}^2\backslash \overline{\mathcal{D}}, \\
&\frac{\partial^2 u}{\partial t^2}-\frac{1}{n_{i}^2(1+\sigma |u|^2)}\Delta u=0 \quad \text{in } \mathcal{D}, \\
&u\big|_{-}=u\big|_{+},\quad \text{on } \partial \mathcal{D}, \\
&\delta \frac{\partial u}{\partial \nu}\big|_{-}=\frac{\partial u}{\partial \nu}\big|_{+} \quad \text{on } \partial \mathcal{D}.
\end{aligned}
\right.
\end{equation}
Here, the interior structure $\mathcal{D}=\cup_{\bm{n}\in\mathbb{Z}^2}(\mathcal{D}_{prim}+n_1\bm{e}_1+n_2\bm{e}_2)$ denotes the identical inclusions arranged periodically with respect to some two-dimensional lattice $\mathbb{Z}\bm{e}_1\oplus \mathbb{Z}\bm{e}_2$. The primitive inclusions $\mathcal{D}_{prim}$ are contained in the unit cell $Y:=\{t_1\bm{e}_1+t_2\bm{e}_2:0<t_1<1,0<t_2<1 \}$, and composed of several disadjoint components $\mathcal{D}_{prim}=\mathcal{D}_{prim}^{(1)}\cup \mathcal{D}_{prim}^{(2)}\cup\cdots\cup \mathcal{D}_{prim}^{(d)}$. Each $\mathcal{D}_{prim}^{(i)}$ is assumed to be simply connected and with a $C^1$ boundary. The translation of each primitive resonator is denoted by $\mathcal{D}^{(1)}_{\bm{n}}:=\mathcal{D}_{prim}^{(1)}+n_1\bm{e}_1+n_2\bm{e}_2$.

The constants $n_e^2$ (resp. $n_i^2$) in \eqref{eq_pde_wave_eq} denote the exterior (resp. interior) refractive index. $\sigma$, which introduces nonlinearity to problem \eqref{eq_pde_wave_eq}, is the Kerr constant that characterizes the dependence of the interior refractive index on the amplitude. The constant $\delta$ is positive and small. It describes the high contrast between the material parameters of the interior and exterior of the structure.

The time-harmonic solution to \eqref{eq_pde_wave_eq} is determined by the following nonlinear eigenvalue problem:
\begin{equation} \label{eq_pde_station_eq}
\left\{
\begin{aligned}
&-\frac{1}{n_{e}^2}\Delta u-\lambda u=0 \quad \text{in } \Omega, \\
&-\frac{1}{n_{i}^2}\Delta u-\lambda (1+\sigma |u|^2)u=0 \quad \text{in } \mathcal{D}, \\
&u\big|_{-}=u\big|_{+} \quad \text{on } \partial \mathcal{D}, \\
&\delta \frac{\partial u}{\partial \nu}\big|_{-}=\frac{\partial u}{\partial \nu}\big|_{+} \quad \text{on } \partial \mathcal{D}.
\end{aligned}
\right.
\end{equation}
We say that $u$ is \textit{a continuous soliton with eigenvalue $\lambda$} if $(u,\lambda)$ solves \eqref{eq_pde_station_eq} with $u\in L^2(\mathbb{R}^2)$. We are particularly interested in the soliton in the subwavelength regime where $\lambda=\mathcal{O}(\delta)$; hence throughout this paper, we will assume that $\lambda,\delta\ll 1$. For simplicity of presentation, we also set $n_{i}=1$ and $\text{vol}(\mathcal{D}_{prim}^{(i)})=1$ in the sequel.

\subsection{Main results}
The first part of this paper aims to develop a discrete approximation to solve the nonlinear eigenvalue problem \eqref{eq_pde_station_eq}. To this end, we first introduce the Dirichlet-to-Neumann (DtN) map associated with the exterior structure.
\begin{definition}[Exterior Dirichlet-to-Neumann map]
For $\phi\in H^{1/2}(\partial \mathcal{D})$, the exterior Dirichlet-to-Neumann map $\mathcal{T}^{\lambda}$ is defined as
\begin{equation*}
\begin{aligned}
\mathcal{T}^{\lambda}:\quad 
H^{1/2}(\partial \mathcal{D}) &\to H^{-1/2}(\partial \mathcal{D}), \\
\phi & \mapsto \frac{\partial u}{\partial \nu}\big|_{\partial \mathcal{D}},
\end{aligned}
\end{equation*}
where $u$ is the unique solution to the following equation:
\begin{equation} \label{eq_d2n_def}
\left\{
\begin{aligned}
&-\frac{1}{n_{e}^2}\Delta u-\lambda u=0 \quad \text{in } \Omega, \\
&u\big|_{+}=\phi \quad \text{on } \partial \Omega .
\end{aligned}
\right.
\end{equation}
\end{definition}
The map $\mathcal{T}^{\lambda}$ is well defined in the subwavelength regime (\textit{i.e.}, for $|\lambda| \ll 1$) as justified by the following proposition proved in Section \ref{sec:3}.
\begin{proposition} \label{prop_d2n_map}
There exists $\lambda_{0}>0$, which depends only on $\mathcal{D}$ and $n_{e}$, such that $\mathcal{T}^{\lambda}$ is well defined for all $|\lambda|<\lambda_0$. 
In addition, the operator-valued function $$\lambda\mapsto \mathcal{T}^{\lambda}\in \mathcal{B}(H^{1/2}(\partial \mathcal{D}),H^{-1/2}(\partial \mathcal{D}))$$ is analytic in the complex disc $\{\lambda\in\mathbb{C}:\,|\lambda|<\lambda_0\}$ and continuous on $\overline{\{\lambda\in\mathbb{C}:\,|\lambda|<\lambda_0\}}$. 
\end{proposition}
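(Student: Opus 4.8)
The plan is to study the exterior problem \eqref{eq_d2n_def} variationally and to exhibit $\mathcal{T}^\lambda$ as a holomorphic perturbation of the $\lambda=0$ map. First I would set up the solution operator at $\lambda=0$: for $\phi\in H^{1/2}(\partial\mathcal{D})$ the problem $-\Delta u=0$ in $\Omega$, $u|_+=\phi$, with the natural decay/boundedness condition at infinity (finite Dirichlet energy, so that $u$ lies in the Beppo–Levi space $\dot H^1(\Omega)$), has a unique solution by the Lax–Milgram theorem, since the Dirichlet form $a_0(u,v)=\int_\Omega \nabla u\cdot\overline{\nabla v}$ is coercive on the homogeneous space modulo constants and $\Omega$ is connected and unbounded in $\mathbb{R}^2$ (here one must be slightly careful in two dimensions, but the half-space/full-space periodic exterior still supports a well-posed finite-energy harmonic extension because the complement of $\mathcal{D}$ is "thin" in the appropriate capacitary sense — alternatively one works in the weighted space with logarithmic weight). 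Call the resulting bounded solution operator $E_0:H^{1/2}(\partial\mathcal{D})\to \dot H^1(\Omega)$ and set $\mathcal{T}^0\phi=\partial_\nu(E_0\phi)|_+$, which is bounded $H^{1/2}\to H^{-1/2}$ by the trace theorem and Green's identity.

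Next I would treat general $\lambda$ by writing the sesquilinear form $a_\lambda(u,v)=\frac{1}{n_e^2}\int_\Omega\nabla u\cdot\overline{\nabla v}-\lambda\int_\Omega u\,\overline{v}$ and viewing $-\lambda\,\mathrm{Id}$ as a perturbation. The key analytic point is that on the solution space the operator $u\mapsto \int_\Omega u\,\overline v$ is bounded relative to the Dirichlet form — i.e. the resolvent $(\,-\Delta - \lambda\,)^{-1}$ on $\Omega$ with Dirichlet data is well defined for $\lambda$ in a neighborhood of $0$. This follows because $0$ is below the bottom of the essential spectrum of the relevant exterior operator (the Laplacian on the unbounded domain $\Omega$ with mixed conditions has spectrum starting at $0$ and no embedded obstruction near $0$ for $\lambda$ complex with small modulus, by a Fredholm-analytic argument). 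Concretely I would fix $\phi$, seek $u=E_0\phi + w$ with $w\in H^1_0$-type space, and reduce to solving $(I-\lambda K)w = \lambda K_\phi$ where $K$ is the (compact, or at least bounded) operator associated with $\int_\Omega \cdot\,\overline{(\cdot)}$ against the Dirichlet form. The Neumann series $\sum_{k\ge 0}\lambda^k K^k$ converges for $|\lambda|<\lambda_0:=\|K\|^{-1}$, giving both well-posedness and, term by term, an absolutely convergent power series for $u=u(\lambda)$ in $\dot H^1(\Omega)$, hence for $\mathcal{T}^\lambda\phi=\partial_\nu u(\lambda)|_+$ in $H^{-1/2}(\partial\mathcal{D})$; uniformity in $\|\phi\|_{1/2}\le 1$ yields analyticity of $\lambda\mapsto\mathcal{T}^\lambda$ as a $\mathcal{B}(H^{1/2},H^{-1/2})$-valued map on $\{|\lambda|<\lambda_0\}$, and continuity up to the closed disc follows from uniform convergence of the Neumann series on $\{|\lambda|\le\rho\}$ for every $\rho<\lambda_0$ together with the Hadamard three-circle/Vitali argument, or simply by noting the series converges uniformly on the closed disc of radius $\lambda_0-\epsilon$ and then showing the limit extends continuously to $|\lambda|=\lambda_0$ via the explicit geometric bound.

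The main obstacle I anticipate is the two-dimensional functional-analytic setup of the exterior problem: unlike in $\mathbb{R}^d$ with $d\ge 3$, the homogeneous Sobolev space $\dot H^1$ is not a space of (classes of) functions with good decay, constants are "almost" in the energy space, and the correct radiation/decay condition at infinity for the finite-energy harmonic extension needs care — in particular one must make sure the exterior solution is the physically relevant one (the $L^2_{loc}$ solution with controlled growth) and that $\mathcal{T}^0$ is well defined and the relative-boundedness constant $\|K\|$ is finite, which is where the periodic geometry of $\Omega$ (and the distinction between full-space and half-space crystals) actually enters. Once the $\lambda=0$ theory and the bound $\|K\|<\infty$ are in place, the analyticity and continuity are a routine consequence of the convergent Neumann/Born series, and the choice $\lambda_0=\|K\|^{-1}$ depends only on $\mathcal{D}$ and $n_e$ as claimed.
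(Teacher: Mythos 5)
Your overall strategy --- write the exterior DtN map through the resolvent of the exterior Dirichlet Laplacian and then obtain analyticity from resolvent analyticity --- is the same as the paper's, but two points in your proposal are off, and the first one is where all the real work lives.

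First, you remark that "the Laplacian on the unbounded domain $\Omega$ with mixed conditions has spectrum starting at $0$ and no embedded obstruction near $0$," and you treat the resolvent near $\lambda=0$ (and the relative boundedness constant $\|K\|$) as essentially routine. This is backwards: if the spectrum started at $0$, then $0\in\text{Spec}(\mathcal{L}^{ext})$ and the resolvent would \emph{not} exist near $\lambda=0$, so $\mathcal{T}^\lambda$ would not be well defined in the subwavelength regime. The actual content of the proposition is precisely the opposite statement, namely that $\inf\text{Spec}(\mathcal{L}^{ext})>0$; this is also exactly what makes your $\|K\|$ finite, since the relative boundedness of $\int_\Omega u\,\overline v$ with respect to the Dirichlet form on the unbounded domain $\Omega$ is a Poincar\'e-type inequality and fails without it. The paper proves this positive lower bound by Floquet-decomposing $\mathcal{L}^{ext}$ over the Brillouin zone and establishing a Poincar\'e inequality, uniform in the quasi-momentum $\bm{\kappa}$, for functions in $\mathcal{U}_{\bm{\kappa}}H^1_0(\Omega)$ that vanish on $\partial\mathcal{D}$; the Dirichlet condition imposed on a \emph{periodic, non-negligible} array of boundaries is what forces the gap. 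Your hand-waving toward Beppo--Levi spaces, logarithmic weights, or capacitary thinness does not produce this bound, and without it the argument cannot close.

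Second, even once $\|K\|<\infty$ is established, the claim that the Neumann series argument yields continuity \emph{up to the closed disc} $\overline{\{|\lambda|<\lambda_0\}}$ with $\lambda_0=\|K\|^{-1}$ does not hold: the geometric bound $\sum_k|\lambda|^k\|K\|^k$ diverges at $|\lambda|=\|K\|^{-1}$, and neither the three-circle theorem nor Vitali's theorem gives boundary continuity for a power series at its radius of convergence. The remedy, and what the paper does, is to take $\lambda_0$ strictly smaller than the distance from $0$ to $\text{Spec}(\mathcal{L}^{ext})$ (the paper takes half of it), so that $\overline{\{|\lambda|\le\lambda_0\}}$ sits compactly inside the open resolvent set; analyticity and hence continuity on the closed disc are then automatic. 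With these two corrections your proposal would align with the paper's argument; the extension step (the paper's localized lifting $E$ versus your harmonic extension $E_0$) is a harmless difference of presentation.
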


\begin{remark}
Note that for inclusions with some special geometry, the DtN map $\mathcal{T}^{\lambda}$ can be calculated explicitly; see Remark \ref{rmk_explicit_d2n}.
\end{remark}

With the DtN map $\mathcal{T}^{\lambda}$, \eqref{eq_pde_station_eq} can be transformed into the interior structure and written in the weak form:
\begin{equation} \label{eq_pde_station_weak}
    \mathfrak{a}^{cont}_{\lambda,\delta}(u,v):=\int_{\mathcal{D}}\nabla u\cdot\overline{\nabla v}-\lambda \int_{\mathcal{D}} u\cdot \overline{v}-\lambda \sigma \int_{\mathcal{D}} |u|^2 u\cdot \overline{v} -\delta \int_{\partial \mathcal{D}}\mathcal{T}^{\lambda}[u\big|_{\partial \mathcal{D}}]\cdot \overline{v}=0,
\end{equation}
for any $v\in H^1(\mathcal{D})$.

The key ingredient for the discrete approximation to solve the eigenvalue problem \eqref{eq_pde_station_eq} is the following capacitance operator associated with our periodic structure. 
\begin{definition}[Capacitance operator] \label{def_cap_operator}
For $a=(a_{\bm{n}})\in \ell^2(\mathbb{Z}^2;\mathbb{C}^{d})$, the capacitance operator $\mathcal{C}$ is defined as
\begin{equation*}
\begin{aligned}
\mathcal{C}:\quad 
\ell^2(\mathbb{Z}^2;\mathbb{C}^{d}) &\to \ell^2(\mathbb{Z}^2;\mathbb{C}^{d}), \\
(a_{\bm{n}}) & \mapsto (\sum_{\bm{m}\in \mathbb{Z}^2}\mathcal{C}_{\bm{n},\bm{m}}a_{\bm{m}}),
\end{aligned}
\end{equation*}
where the elements $\mathcal{C}_{\bm{n},\bm{m}}\in \mathbb{M}^{d\times d}$ are defined by
\begin{equation}
\label{def:capop}
\mathcal{C}_{\bm{n},\bm{m}}^{i,j}:=-\int_{\partial \mathcal{D}}\mathcal{T}^{0}[\mathbbm{1}_{\partial \mathcal{D}^{(j)}_{\bm{m}}}]\cdot \mathbbm{1}_{\partial \mathcal{D}^{(i)}_{\bm{n}}}, \quad (1\leq i,j\leq d).
\end{equation}
\end{definition}
The operator $\mathcal{C}$ is well defined and, as proved in Section \ref{sec:3}, is periodic and tight-binding in the following sense.
\begin{proposition} \label{prop_cap_operator}
The capacitance operator $\mathcal{C}$ is periodic and real valued, that is, $$\mathcal{C}_{\bm{n}+\bm{e}_k,\bm{m}+\bm{e}_k}=\mathcal{C}_{\bm{n},\bm{m}}\in\mathbb{M}^{d\times d}(\mathbb{R}) \qquad  (k=1,2).$$ Moreover, there exists $\alpha,\beta>0$ such that
\begin{equation*}
\|\mathcal{C}_{\bm{n},\bm{m}}\|\leq \alpha e^{-\beta |\bm{n}-\bm{m}|}.
\end{equation*}
\end{proposition}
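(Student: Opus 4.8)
The plan is to first establish the periodicity and real-valuedness, which follow almost directly from the translation invariance of the exterior problem, and then to spend the real effort on the exponential decay estimate. For periodicity, note that the exterior domain $\Omega$ is invariant under the lattice translations $\tau_{\bm{e}_k}$, so the Dirichlet problem \eqref{eq_d2n_def} at $\lambda=0$ commutes with these translations; consequently $\mathcal{T}^0[\mathbbm{1}_{\partial\mathcal{D}^{(j)}_{\bm{m}+\bm{e}_k}}] = \mathcal{T}^0[\mathbbm{1}_{\partial\mathcal{D}^{(j)}_{\bm{m}}}]\circ\tau_{-\bm{e}_k}$, and combining this with the change of variables in the boundary integral in \eqref{def:capop} gives $\mathcal{C}^{i,j}_{\bm{n}+\bm{e}_k,\bm{m}+\bm{e}_k}=\mathcal{C}^{i,j}_{\bm{n},\bm{m}}$. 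Real-valuedness follows because at $\lambda=0$ the defining PDE $-\tfrac{1}{n_e^2}\Delta u=0$ has real coefficients and the boundary data $\mathbbm{1}_{\partial\mathcal{D}^{(j)}_{\bm{m}}}$ is real, so the harmonic extension $u$ is real, hence so is its normal derivative and the integral in \eqref{def:capop}.

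For the exponential decay, the key is a Combes--Thomas type argument applied to $\mathcal{T}^0$, exploiting that $\mathcal{C}^{i,j}_{\bm{n},\bm{m}}$ is governed by a harmonic function on $\Omega$ whose boundary data is supported on $\partial\mathcal{D}^{(j)}_{\bm{m}}$ and which is being tested against data supported on $\partial\mathcal{D}^{(i)}_{\bm{n}}$, two sets separated by distance comparable to $|\bm{n}-\bm{m}|$. Concretely, I would let $u$ be the harmonic extension of $\mathbbm{1}_{\partial\mathcal{D}^{(j)}_{\bm{m}}}$ and write
\begin{equation*}
\mathcal{C}^{i,j}_{\bm{n},\bm{m}} = -\int_{\partial\mathcal{D}^{(i)}_{\bm{n}}}\frac{\partial u}{\partial\nu}\Big|_+ ,
\end{equation*}
so that it suffices to bound $u$ (and its gradient, via interior elliptic estimates and trace theorems near $\partial\mathcal{D}^{(i)}_{\bm{n}}$) in a neighborhood of $\partial\mathcal{D}^{(i)}_{\bm{n}}$. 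The standard device is the exponentially weighted energy estimate: test the weak formulation of the harmonic extension against $e^{2\beta\rho(x)}u$ where $\rho$ is a smooth function comparable to the distance to $\partial\mathcal{D}^{(j)}_{\bm{m}}$ with $|\nabla\rho|\le 1$, absorb the first-order term coming from $\nabla\rho$ into the Dirichlet term for $\beta$ small enough (independent of $\bm{n},\bm{m}$), and conclude $\int_\Omega e^{2\beta\rho}|\nabla u|^2 \le C\|u\|^2_{H^{1/2}(\partial\mathcal{D}^{(j)}_{\bm{m}})}\le C$. One must be slightly careful because the pure Laplacian has no zeroth-order coercivity at infinity in two dimensions; the way around this is to work with the decay of $|\nabla u|$ rather than $u$ itself — the Dirichlet form $\int_\Omega|\nabla u|^2$ is finite and controls everything needed, and the weighted estimate on $\nabla u$ on the annular region around $\partial\mathcal{D}^{(i)}_{\bm{n}}$ yields $\|\nabla u\|_{L^2}$ there $\lesssim e^{-\beta|\bm{n}-\bm{m}|}$, whence $\|\mathcal{C}^{i,j}_{\bm{n},\bm{m}}\|\lesssim e^{-\beta|\bm{n}-\bm{m}|}$ after a trace estimate with the explicit constant tracked uniformly in $\bm{n}$ by periodicity.

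The main obstacle I anticipate is making the weighted energy estimate genuinely uniform in $\bm{n},\bm{m}$ despite the absence of a decaying zeroth-order term: one needs the constants in the Caccioppoli/trace inequalities near each resonator to be the same, which is true by periodicity of the geometry, and one needs to verify that the cutoff function $\rho$ can be chosen globally with bounded gradient — handled by taking $\rho(x)=\min(\operatorname{dist}(x,\partial\mathcal{D}^{(j)}_{\bm{m}}),R)$ smoothed out, with $R\sim|\bm{n}-\bm{m}|$. A secondary subtlety is controlling $u$ at infinity in $\Omega$ (the half-space or full-space exterior) so that the integration by parts in the weighted estimate has no boundary contribution at infinity; this follows from the decay of the harmonic extension guaranteed by Proposition \ref{prop_d2n_map} and the finiteness of the Dirichlet energy, together with a standard cutoff-and-pass-to-the-limit argument. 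Once these uniformities are in place, the decay constant $\beta$ depends only on the geometry of $\mathcal{D}_{prim}$ and on $n_e$, and $\alpha$ absorbs the uniform trace and elliptic-estimate constants.
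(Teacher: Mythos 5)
Your treatment of periodicity (translation invariance of the exterior Dirichlet problem) and real-valuedness (real coefficients, real boundary data at $\lambda=0$) is correct and matches the paper's argument in substance; the paper phrases periodicity through the periodicity of the resolvent $(\mathcal{L}^{ext}-\lambda)^{-1}$ and the lifting operator $E$ appearing in the representation \eqref{eq_d2n_2}, but these are two descriptions of the same fact.

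For the exponential decay you reach for the right tool (an exponentially weighted, Combes--Thomas/Agmon-type energy estimate), and you correctly flag the obstacle: the weighted identity produces a cross term that, after Cauchy--Schwarz, leaves a contribution of the form $\beta\int_\Omega e^{2\beta\rho}|u|^2$ which must be absorbed, and for that you need a zeroth-order coercive term. However, your proposed fix --- ``work with the decay of $|\nabla u|$ rather than $u$ itself'' --- does not actually avoid this: even if your goal is only $\int_\Omega e^{2\beta\rho}|\nabla u|^2$, the cross term still involves $\int_\Omega e^{2\beta\rho}|u|^2$, so some Poincar\'e-type inequality is unavoidable. The missing ingredient, which is also the key structural point of Section \ref{sec:3}, is that the exterior operator $\mathcal{L}^{ext}$ with Dirichlet boundary conditions on the \emph{periodic array} of resonators has $\inf\operatorname{Spec}(\mathcal{L}^{ext})>0$: this is the content of Step~1 in the proof of Proposition \ref{prop_d2n_map}, established via the Floquet transform and a Poincar\'e inequality that uses the fact that $u$ vanishes on $\partial\mathcal{D}$. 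This spectral gap is precisely what makes the Combes--Thomas estimate go through --- equivalently, it supplies the weighted Poincar\'e inequality $\int_\Omega e^{2\beta\rho}|u|^2\le C\int_\Omega e^{2\beta\rho}|\nabla u|^2$ needed to absorb the cross term for $\beta$ small. The paper uses it in a cleaner packaged form: since $\lambda=0$ is isolated from $\operatorname{Spec}(\mathcal{L}^{ext})$, the resolvent kernel of $(\mathcal{L}^{ext}-\lambda)^{-1}$ decays exponentially (citing a Combes--Thomas argument), and combining this with the locality of the lifting operator $E$ in \eqref{eq_d2n_2} (whose contribution vanishes when $\bm{n}\ne\bm{m}$) yields the stated bound. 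So your argument is the right shape, but you need to replace the vague ``work with $|\nabla u|$'' step by the explicit coercivity of $\mathcal{L}^{ext}$ coming from the Dirichlet conditions on the full periodic resonator lattice; without it, the weighted estimate does not close.
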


Now we are prepared to introduce the discrete approximation of \eqref{eq_pde_station_eq}. We say that $a=(a_{\bm{n}})$ is \textit{a discrete soliton with eigenvalue $\lambda$} if $(a,\lambda)\in \ell^2(\mathbb{Z}^2;\mathbb{C}^{d})\times \mathbb{R}$ solves the following equation:
\begin{equation} \label{eq_discrete_station_eq}
\mathcal{C}a-\lambda(1+\sigma|a|^2)a=0.
\end{equation}
Here, $(|a|^2 a)_{\bm{n}}^{(i)}:=|a_{\bm{n}}^{(i)}|^2 a_{\bm{n}}^{(i)}$. The weak formulation of \eqref{eq_discrete_station_eq} is given by
\begin{equation}
\mathfrak{a}^{disc}_{\lambda}(a,b):=(\mathcal{C}a ,b)_{\ell^2(\mathbb{Z}^2;\mathbb{C}^{d})}-\lambda(a ,b)_{\ell^2(\mathbb{Z}^2;\mathbb{C}^{d})}-\lambda\sigma(|a|^2 a ,b)_{\ell^2(\mathbb{Z}^2;\mathbb{C}^{d})}=0,
\end{equation}
for any $b\in \ell^2(\mathbb{Z}^2;\mathbb{C}^d)$.

The first main result of this paper justifies the discrete approximation of \eqref{eq_pde_station_eq} using \eqref{eq_discrete_station_eq}. Its proof is presented in full detail in Section \ref{sec:4}.
\begin{theorem} \label{thm_tight_binding_approx}
There exist $\delta_0,\lambda_0^{cont},\lambda_0^{disc}>0$ and thresholds of amplitudes $M_{0}^{cont},M_{0}^{disc}>0$ such that the following holds. Suppose that $\delta<\delta_0$. If $u\in L^2(\mathbb{R}^2)$ is a continuous soliton with eigenvalue $\lambda^{cont}<\lambda_0^{cont}$ and satisfies $\|u\|_{L^2(\mathbb{R}^2)}\leq M_0^{cont}$, then $$a_{u}:=(\int_{\mathcal{D}_{\bm{n}}^{(1)}}u,\cdots,\int_{\mathcal{D}_{\bm{n}}^{(d)}}u)\in \ell^2(\mathbb{Z}^2;\mathbb{C}^{d})$$ is approximately a discrete soliton with eigenvalue $\lambda^{disc}=\lambda^{cont}/\delta$ in the sense that
\begin{equation*}
\sup_{\|b\|_{\ell^2(\mathbb{Z}^2;\mathbb{C}^{d})}=1}|\mathfrak{a}^{disc}_{\lambda^{disc}}(a_{u},b)|=\mathcal{O}(\delta).
\end{equation*}
Conversely, if $a\in \ell^2(\mathbb{Z}^2;\mathbb{C}^d)$ is a discrete soliton with eigenvalue $\lambda^{disc}<\lambda^{disc}_0$ and satisfies $\|a\|_{\ell^2(\mathbb{Z}^2;\mathbb{C}^d)}\leq M_0^{disc}$, then $u_a:=u^{int}_{a}\mathbbm{1}_{\mathcal{D}}+u^{ext}_{a}\mathbbm{1}_{\mathbb{R}^2\backslash\overline{\mathcal{D}}}$ is approximately a continuous soliton with eigenvalue $\lambda^{cont}=\delta\cdot \lambda^{disc}$ in the sense that
\begin{equation*}
\sup_{\|v\|_{H^1(\mathcal{D})}=1}|\mathfrak{a}^{cont}_{\lambda^{cont},\delta}(u_a,v)|=\mathcal{O}(\delta),
\end{equation*}
where $$u^{int}_{a}:=\sum_{\substack{1\leq i\leq d \\ \bm{n}\in \mathbb{Z}^2}}a_{\bm{n}}^{(i)}\mathbbm{1}_{\mathcal{D}_{\bm{n}}^{(i)}}+\mathfrak{e}_{a},$$ and $u^{ext}_{a}$ is the unique solution to \eqref{eq_d2n_def} with $\phi=u^{int}_{a}\big|_{\partial \mathcal{D}}$. Here, the remainder $\mathfrak{e}_{a}$ is defined in \eqref{eq_tight_bind_proof_14} with the estimate $\|\mathfrak{e}_a\|_{0,\mathcal{D}}=\mathcal{O}(\delta)$.
\end{theorem}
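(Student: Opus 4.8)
I would prove the two implications separately; in both, the engine is that in the high-contrast subwavelength regime the interior field collapses to a function that is constant on each resonator, and the only nonlocal information surviving in the limit is the pairing of the exterior DtN map with piecewise-constant traces, which is exactly what $\mathcal C$ encodes. All estimates will be carried out with constants uniform over the (infinitely many, but only $d$-shaped) resonators and in an $\ell^2$-summable form, using the orthogonal splitting $H^{1/2}(\partial\mathcal D)=\bigoplus_{\bm n,i}H^{1/2}(\partial\mathcal D^{(i)}_{\bm n})$ and $\mathrm{vol}(\mathcal D^{(i)}_{prim})=1$. The identity I will use repeatedly is immediate from Definition \ref{def_cap_operator}: writing, for $c,d\in\ell^2(\mathbb Z^2;\mathbb C^{d})$, $\widehat c:=\sum_{\bm n,i}c^{(i)}_{\bm n}\mathbbm 1_{\partial\mathcal D^{(i)}_{\bm n}}\in H^{1/2}(\partial\mathcal D)$ (and $\widehat d$ likewise),
\[
\int_{\partial\mathcal D}\mathcal T^{0}[\widehat c]\,\overline{\widehat d}=-(\mathcal C c,d)_{\ell^2(\mathbb Z^2;\mathbb C^{d})}.
\]

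For the direction from a continuous to a discrete soliton, let $u$ be a continuous soliton with eigenvalue $\lambda^{cont}$ and $\|u\|_{L^2(\mathbb R^2)}\le M_0^{cont}$, and set $\lambda^{disc}:=\lambda^{cont}/\delta<\lambda^{disc}_0$ (so $\lambda^{cont}=\mathcal O(\delta)$, $\lambda^{disc}=\mathcal O(1)$). First I would show $u$ is close to its piecewise average: after noting $u|_{\mathcal D}\in H^1(\mathcal D)$ by interior elliptic regularity, testing \eqref{eq_pde_station_weak} with $v=u$, bounding the boundary term by $\|\mathcal T^{\lambda^{cont}}\|$ and the trace inequality and the quartic term by the (uniform, piecewise) Gagliardo--Nirenberg estimate $\|u\|_{L^4(\mathcal D)}^4\lesssim\|u\|_{1,\mathcal D}^2\|u\|_{0,\mathcal D}^2$, and absorbing, gives $\|\nabla u\|_{0,\mathcal D}=\mathcal O(\sqrt\delta)$ once $\delta,M_0^{cont}$ are small; then, with $\widehat u:=\sum_{\bm n,i}\big(\int_{\mathcal D^{(i)}_{\bm n}}u\big)\mathbbm 1_{\mathcal D^{(i)}_{\bm n}}$ and $w:=u-\widehat u$ (vanishing cell-averages, $\nabla w=\nabla u$), testing \eqref{eq_pde_station_weak} with $v=w$ and using piecewise Poincar\'e yields $\|\nabla u\|_{0,\mathcal D}\lesssim(\lambda^{cont}+\delta)\|u\|_{0,\mathcal D}+\lambda^{cont}\sigma\big((M_0^{cont})^{3}+\|\nabla u\|_{0,\mathcal D}^{3}\big)$, into which the a priori bound $\|\nabla u\|_{0,\mathcal D}=\mathcal O(\sqrt\delta)$ is fed to dispose of the cubic term, upgrading the estimate to $\|\nabla u\|_{0,\mathcal D}=\mathcal O(\delta)$; hence $\|w\|_{0,\mathcal D}+\|w|_{\partial\mathcal D}\|_{H^{1/2}(\partial\mathcal D)}=\mathcal O(\delta)$. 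Next, for $\|b\|_{\ell^2}=1$ I would test \eqref{eq_pde_station_weak} with the piecewise-constant $v_b:=\sum_{\bm n,i}b^{(i)}_{\bm n}\mathbbm 1_{\mathcal D^{(i)}_{\bm n}}\in H^1(\mathcal D)$ (so $\nabla v_b=0$, $\|v_b\|_{1,\mathcal D}=1$), divide by $\delta$, and subtract from $\mathfrak a^{disc}_{\lambda^{disc}}(a_u,b)$: the $\ell^2$-term cancels exactly since $\int_{\mathcal D^{(i)}_{\bm n}}u=(a_u)^{(i)}_{\bm n}$; the nonlinear defect $\sum_{\bm n,i}\overline{b^{(i)}_{\bm n}}\big(\int_{\mathcal D^{(i)}_{\bm n}}|u|^2u-|(a_u)^{(i)}_{\bm n}|^2(a_u)^{(i)}_{\bm n}\big)$ is $\mathcal O\big((M_0^{cont})^{2}\|w\|_{0,\mathcal D}\big)=\mathcal O(\delta)$ by Cauchy--Schwarz in $\ell^2$ and $\big||z|^2z-|\zeta|^2\zeta\big|\lesssim(|z|^2+|\zeta|^2)|z-\zeta|$; and, by the capacitance identity, $(\mathcal C a_u,b)_{\ell^2}=-\int_{\partial\mathcal D}\mathcal T^{0}[\widehat u|_{\partial\mathcal D}]\,\overline{v_b|_{\partial\mathcal D}}$, so the boundary defect equals $\int_{\partial\mathcal D}\mathcal T^{\lambda^{cont}}[w|_{\partial\mathcal D}]\,\overline{v_b|_{\partial\mathcal D}}+\int_{\partial\mathcal D}(\mathcal T^{\lambda^{cont}}-\mathcal T^{0})[\widehat u|_{\partial\mathcal D}]\,\overline{v_b|_{\partial\mathcal D}}$, which is $\mathcal O(\|w|_{\partial\mathcal D}\|_{H^{1/2}})+\mathcal O(\lambda^{cont}\|a_u\|_{\ell^2})=\mathcal O(\delta)$ by the analyticity of $\lambda\mapsto\mathcal T^{\lambda}$ (Proposition \ref{prop_d2n_map}). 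Collecting these, $\sup_{\|b\|=1}|\mathfrak a^{disc}_{\lambda^{disc}}(a_u,b)|=\mathcal O(\delta)$.

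For the converse, let $a$ be a discrete soliton with eigenvalue $\lambda^{disc}<\lambda^{disc}_0$, $\|a\|_{\ell^2}\le M_0^{disc}$, $\lambda^{cont}:=\delta\lambda^{disc}=\mathcal O(\delta)$, $\widehat a:=\sum_{\bm n,i}a^{(i)}_{\bm n}\mathbbm 1_{\mathcal D^{(i)}_{\bm n}}$, and $\mathcal H_0:=\{\psi\in H^1(\mathcal D):\int_{\mathcal D^{(i)}_{\bm n}}\psi=0\ \forall\,\bm n,i\}$. I would define $\mathfrak e_a$ as the unique element of $\mathcal H_0$ with
\begin{equation}\label{eq_tight_bind_proof_14}
\int_{\mathcal D}\nabla\mathfrak e_a\cdot\overline{\nabla\psi}=\lambda^{cont}\!\int_{\mathcal D}(1+\sigma|\widehat a|^2)\widehat a\,\overline\psi+\delta\!\int_{\partial\mathcal D}\mathcal T^{0}[\widehat a|_{\partial\mathcal D}]\,\overline\psi\qquad\text{for all }\psi\in\mathcal H_0;
\end{equation}
Lax--Milgram applies since the form is $\mathcal H_0$-coercive (piecewise Poincar\'e) and the right-hand side functional has norm $\lesssim\lambda^{cont}\big(M_0^{disc}+(M_0^{disc})^{3}\big)+\delta M_0^{disc}=\mathcal O(\delta)$, whence $\|\mathfrak e_a\|_{1,\mathcal D}=\mathcal O(\delta)$. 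Set $u^{int}_a:=\widehat a+\mathfrak e_a$, let $u^{ext}_a$ solve \eqref{eq_d2n_def} with $\lambda=\lambda^{cont}$ and $\phi=u^{int}_a|_{\partial\mathcal D}$, and $u_a:=u^{int}_a\mathbbm 1_{\mathcal D}+u^{ext}_a\mathbbm 1_{\mathbb R^2\setminus\overline{\mathcal D}}$. For $\|v\|_{1,\mathcal D}=1$ split $v=\widehat v+v_\perp$ with $\widehat v:=\sum_{\bm n,i}\big(\int_{\mathcal D^{(i)}_{\bm n}}v\big)\mathbbm 1_{\mathcal D^{(i)}_{\bm n}}$, $v_\perp\in\mathcal H_0$, $\|v_\perp\|_{1,\mathcal D}\lesssim 1$, $c_v:=\big((\int_{\mathcal D^{(i)}_{\bm n}}v)\big)\in\ell^2$, and insert $u_a$ into $\mathfrak a^{cont}_{\lambda^{cont},\delta}(\cdot,v)$. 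First replace $\mathcal T^{\lambda^{cont}}$ by $\mathcal T^{0}$ and drop the $\mathfrak e_a$-contributions inside the DtN, and likewise drop the $\mathfrak e_a$-contributions to the $\lambda^{cont}$-linear and nonlinear terms; each of these is $\mathcal O(\delta^2)$ because $\|\mathfrak e_a\|_{1,\mathcal D}=\mathcal O(\delta)$ and $\|\mathcal T^{\lambda^{cont}}-\mathcal T^{0}\|=\mathcal O(\lambda^{cont})$ (Proposition \ref{prop_d2n_map}). Since $\nabla\widehat a=0$, the surviving gradient term is $\int_{\mathcal D}\nabla\mathfrak e_a\cdot\overline{\nabla v}=\int_{\mathcal D}\nabla\mathfrak e_a\cdot\overline{\nabla v_\perp}$, which \eqref{eq_tight_bind_proof_14} (with $\psi=v_\perp$) replaces by $\lambda^{cont}\int_{\mathcal D}(1+\sigma|\widehat a|^2)\widehat a\,\overline{v_\perp}+\delta\int_{\partial\mathcal D}\mathcal T^{0}[\widehat a|_{\partial\mathcal D}]\,\overline{v_\perp}$; subtracting the full $\lambda^{cont}$- and $\mathcal T^0$-terms leaves only the $\widehat v$-part $-\lambda^{cont}\int_{\mathcal D}(1+\sigma|\widehat a|^2)\widehat a\,\overline{\widehat v}-\delta\int_{\partial\mathcal D}\mathcal T^{0}[\widehat a|_{\partial\mathcal D}]\,\overline{\widehat v}$, which by the capacitance identity equals $\big(\delta\mathcal C a-\lambda^{cont}(1+\sigma|a|^2)a,\,c_v\big)_{\ell^2}=\delta\big(\mathcal C a-\lambda^{disc}(1+\sigma|a|^2)a,\,c_v\big)_{\ell^2}=0$ by the discrete soliton equation \eqref{eq_discrete_station_eq}. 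Hence $\sup_{\|v\|_{1,\mathcal D}=1}|\mathfrak a^{cont}_{\lambda^{cont},\delta}(u_a,v)|=\mathcal O(\delta)$ — in fact $\mathcal O(\delta^2)$ — and $\|\mathfrak e_a\|_{0,\mathcal D}=\mathcal O(\delta)$ as asserted.

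The main obstacle is the first implication: getting the sharp rate $\|\nabla u\|_{0,\mathcal D}=\mathcal O(\delta)$ through the two-step argument (absorb in a first energy identity, then bootstrap in a second) while keeping the non-Lipschitz cubic nonlinearity under control — this is precisely what forces the amplitude thresholds $M_0^{cont},M_0^{disc}$ — and, throughout, ensuring that every estimate is uniform in $(\bm n,i)$ and square-summable, so that testing against a unit $\ell^2$-vector collects a total error of order $\delta$ rather than a divergent sum. In the converse the only subtlety is structural: the right-hand side of \eqref{eq_tight_bind_proof_14} defines $\mathfrak e_a$ on $\mathcal H_0$ by mere coercivity, but it is engineered so that the residual it leaves on the complementary (piecewise-constant) test directions is governed exactly by $\mathcal C a-\lambda^{disc}(1+\sigma|a|^2)a$; thus it is the discrete eigenvalue equation that turns $u_a$ into an approximate continuous soliton, and also what pins $u_a^{int}$ down so that it solves the interior nonlinear equation to leading order with the residual confined to the transmission condition, where it is damped by $\delta$.
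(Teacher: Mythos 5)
Your proof is correct in substance but takes a genuinely different route from the paper. The paper builds both directions on a single auxiliary-form machinery (Propositions~\ref{prop_1st_aux_func}--\ref{prop_ansatz_aux_sol}): the form $\mathfrak{a}^{aux,I}_{\lambda,\delta}$ obtained by adding the cell-average products $\sum_{\bm n,i}\int_{\mathcal D^{(i)}_{\bm n}}u\,\int_{\mathcal D^{(i)}_{\bm n}}\overline v$ is shown to be invertible by a contraction iteration, an asymptotic expansion $u_f=u^{(0,0)}_a+\lambda u^{(1,0)}_a+\delta u^{(0,1)}_a+r_a$ is developed, and the theorem is read off the identity $\int_{\mathcal D^{(i)}_{\bm n}}u_f=a^{(i)}_{\bm n}$ in the forward direction and the identity $\mathfrak a^{aux,I}-\mathfrak a^{cont}=\sum\int u\int\overline v$ in the converse. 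You instead argue directly from the weak form: for the forward direction you perform a two-step energy absorption/bootstrap (first $v=u$ giving $\|\nabla u\|_{0,\mathcal D}=\mathcal O(\sqrt\delta)$, then $v=w=u-\widehat u$ with piecewise Poincar\'e giving $\mathcal O(\delta)$) and then test against piecewise constants, using the capacitance identity to match the discrete form term by term; for the converse you solve an elliptic problem by Lax--Milgram on the zero-cell-average space $\mathcal H_0$ to produce $\mathfrak e_a$, and exploit the orthogonal splitting $v=\widehat v+v_\perp$ so that the residual on the piecewise-constant directions is exactly $\delta\,\mathfrak a^{disc}_{\lambda^{disc}}(a,c_v)=0$. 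Your construction has the nice feature that $u^{int}_a$ has cell averages equal to $a$ exactly (your $\mathfrak e_a\in\mathcal H_0$), whereas the paper's remainder $\mathfrak e[a]=\lambda^{cont}u^{(1,0)}_a+\delta u^{(0,1)}_a+r_a$ does not vanish in average; both give $\|\mathfrak e_a\|_{0,\mathcal D}=\mathcal O(\delta)$ and residual $\mathcal O(\delta^2)$, consistent with the stated $\mathcal O(\delta)$. The paper's route buys a systematic framework reused for the uniqueness/well-posedness of both the auxiliary and remainder problems (and extendable to half-space), at the cost of the heavier iteration argument; your route is shorter and more elementary, but the intermediate estimates need to be carried out with explicit care about cell-wise uniformity and $\ell^2$-summability (e.g.\ the nonlinear defect bound should run through $\sup_{\bm n,i}\|u\|^2_{H^1(\mathcal D^{(i)}_{\bm n})}\cdot\|w\|_{0,\mathcal D}$ rather than $(M_0^{cont})^2\|w\|_{0,\mathcal D}$ directly, though the conclusion is the same since $\|u\|_{1,\mathcal D}=\mathcal O(1)$ after the bootstrap).

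One point to be explicit about: the theorem statement refers to the remainder $\mathfrak e_a$ ``defined in \eqref{eq_tight_bind_proof_14},'' i.e.\ the paper's specific expansion terms; your $\mathfrak e_a$ is a different (zero-mean) function, so you are proving a variant of the statement with a different but equally valid choice of remainder. Also, for the forward direction you implicitly need $u|_{\mathcal D}\in H^1(\mathcal D)$ to make $v=u$ an admissible test function; this is part of the meaning of ``solving \eqref{eq_pde_station_weak}'' and worth stating rather than invoking interior elliptic regularity.
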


Theorem \ref{thm_tight_binding_approx} can be extended to a half-space geometry with minor changes. (In fact, our interest in half-space geometry originates from the fantastic interplay between nonlinearity and topological phenomenon \cite{li2022topological_bulk_soliton,smirnova2019topological_gap_soliton,antinucci2018universal,mukherjee2020observation,smirnova2020nonlinear}, such as the emergence of nonlinearity-induced topological edge states located at the boundary of a half-plane \cite{leykam2016edge,sone2024nonlinearity}. This is left for future study.) The half-space version (with Dirichlet boundary condition on the edge) of \eqref{eq_pde_station_eq} is

\begin{equation} \label{eq_pde_station_half_space}
\left\{
\begin{aligned}
&-\frac{1}{n_{e}^2}\Delta u-\lambda u=0\quad \text{in } \Omega^+:=(\mathbb{R}^{+}\bm{e}_1\oplus \mathbb{R}\bm{e}_2)\backslash\overline{\bigcup_{\substack{1\leq i\leq d \\ \bm{n}\in \mathbb{N}\times \mathbb{Z}}}\mathcal{D}_{\bm{n}}^{(i)}}, \\
&-\Delta u-\lambda (1+\sigma |u|^2)u=0 \quad \text{in } \mathcal{D}^{+}:=\bigcup_{\substack{1\leq i\leq d \\ \bm{n}\in \mathbb{N}\times \mathbb{Z}}}\mathcal{D}_{\bm{n}}^{(i)}, \\
&u\big|_{-}=u\big|_{+} \quad \text{on } \partial \mathcal{D}^{+}, \\
&\delta \frac{\partial u}{\partial \nu}\big|_{-}=\frac{\partial u}{\partial \nu}\big|_{+} \quad \text{on } \partial \mathcal{D}^{+}, \\
&u=0 \quad \text{on}\quad \mathbb{R}\bm{e}_2.
\end{aligned}
\right.
\end{equation}
The associated DtN map is introduced in the following definition. 

\begin{definition}[Exterior Dirichlet-to-Neumann map for the half-space setting]
For $\phi\in H^{1/2}(\partial \mathcal{D}^{+})$, the exterior Dirichlet-to-Neumann map $\mathcal{T}^{\lambda,half}$ is defined as
\begin{equation*}
\begin{aligned}
\mathcal{T}^{\lambda,half}:\quad 
H^{1/2}(\partial \mathcal{D}^{+}) &\to H^{-1/2}(\partial \mathcal{D}^{+}), \\
\phi & \mapsto \frac{\partial u}{\partial \nu}\big|_{\partial \mathcal{D}^{+}},
\end{aligned}
\end{equation*}
where $u$ is the unique solution to the following problem:
\begin{equation} \label{eq_d2n_half_space_def}
\left\{
\begin{aligned}
&-\frac{1}{n_{e}^2}\Delta u-\lambda u=0 \quad \text{in } \Omega^{+}, \\
&u\big|_{+}=\phi \quad \text{on}\quad \partial \mathcal{D}^{+} , \\
&u=0 \quad \text{on}\quad \mathbb{R}\bm{e}_2.
\end{aligned}
\right.
\end{equation}
\end{definition}
The weak formulation of \eqref{eq_pde_station_half_space} is
\begin{equation} \label{eq_pde_station_weak_half}
    \mathfrak{a}^{cont,half}_{\lambda,\delta}(u,v):=\int_{\mathcal{D}^{+}}\nabla u\cdot\overline{\nabla v}-\lambda \int_{\mathcal{D}^{+}} u\cdot \overline{v}-\lambda \sigma \int_{\mathcal{D}^{+}} |u|^2 u\cdot \overline{v} -\delta \int_{\partial \mathcal{D}^{+}}\mathcal{T}^{\lambda,half}[u\big|_{\partial \mathcal{D}^{+}}]\cdot \overline{v}=0,
\end{equation}
for any $v\in H^1(\mathcal{D}^{+})$.

The associated capacitance operator is defined as follows.
\begin{definition}[Capacitance operator for the half-space setting] \label{def_cap_operator_half_space}
For $a=(a_{\bm{n}})\in \ell^2(\mathbb{N}\times \mathbb{Z};\mathbb{C}^{d})$, the capacitance operator $\mathcal{C}^{half}$ is defined as
\begin{equation*}
\begin{aligned}
\mathcal{C}^{half}:\quad 
\ell^2(\mathbb{N}\times \mathbb{Z};\mathbb{C}^{d}) &\to \ell^2(\mathbb{N}\times \mathbb{Z};\mathbb{C}^{d}), \\
(a_{\bm{n}}) & \mapsto (\sum_{\bm{m}\in \mathbb{N}\times \mathbb{Z}}\mathcal{C}^{half}_{\bm{n},\bm{m}}a_{\bm{m}}),
\end{aligned}
\end{equation*}
where the elements $\mathcal{C}_{\bm{n},\bm{m}}\in \mathbb{M}^{d\times d}$ are defined by
\begin{equation*}
(\mathcal{C}^{half})_{\bm{n},\bm{m}}^{i,j}:=-\int_{\partial \mathcal{D}}\mathcal{T}^{0,half}[\mathbbm{1}_{\partial \mathcal{D}_{\bm{m}}^{(j)}}]\cdot \mathbbm{1}_{\partial \mathcal{D}_{\bm{n}}^{(i)}},\quad (1\leq i,j\leq d).
\end{equation*}
\end{definition}
The half-space capacitance operator is explicitly related to its whole-space counterpart when the resonator system takes a special geometry. This is illustrated in the following proposition.  
\begin{proposition} \label{prop_cap_half_vs_whole}
Suppose that the inclusions $\mathcal{D}$ are embedded in the square lattice, \textit{i.e.}, $\bm{e}_1=(1,0)$, $\bm{e}_1=(0,1)$, and are reflection symmetric in the sense that $\mathcal{F}\mathcal{D}=\mathcal{D}$, where $\mathcal{F}:(x_1,x_2)\mapsto (-x_1,x_2)$ is the reflection operator about the $y-$axis. Then, for any $\bm{n},\bm{m}\in \mathbb{N}\times \mathbb{Z}$,
\begin{equation*}
\mathcal{C}^{half}_{\bm{n},\bm{m}}=\mathcal{C}_{\bm{n},\bm{m}}-\mathcal{C}_{\bm{n},\mathcal{F}\bm{m}}.
\end{equation*}
\end{proposition}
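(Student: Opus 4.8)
The plan is to exploit the method of images. The key observation is that, under the hypotheses, the half-space exterior domain $\Omega^+$ together with the Dirichlet condition on $\mathbb{R}\bm{e}_2$ can be "unfolded" by odd reflection about the $y$-axis. Concretely, given $\phi\in H^{1/2}(\partial\mathcal{D}^+)$, I would extend it to $\widetilde\phi\in H^{1/2}(\partial\mathcal{D})$ on the whole-space structure by odd reflection: $\widetilde\phi(\mathcal{F}x):=-\phi(x)$ for $x\in\partial\mathcal{D}^+$, using the fact that $\mathcal{F}\mathcal{D}=\mathcal{D}$ so that $\partial\mathcal{D}$ is indeed partitioned into $\partial\mathcal{D}^+$ and $\mathcal{F}(\partial\mathcal{D}^+)$ (the symmetry of the lattice guarantees no resonator straddles the axis). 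Let $u$ solve the whole-space exterior problem \eqref{eq_d2n_def} with boundary data $\widetilde\phi$. Since the equation $-n_e^{-2}\Delta u-\lambda u=0$ and the domain $\Omega$ are invariant under $\mathcal{F}$, the function $x\mapsto -u(\mathcal{F}x)$ solves the same problem with the same (odd) boundary data, so by uniqueness (Proposition \ref{prop_d2n_map}, for $|\lambda|<\lambda_0$) we get $u(\mathcal{F}x)=-u(x)$; in particular $u$ vanishes on $\mathbb{R}\bm{e}_2$. Hence the restriction $u|_{\Omega^+}$ solves \eqref{eq_d2n_half_space_def} with data $\phi$, and by uniqueness of that problem it \emph{is} the function defining $\mathcal{T}^{\lambda,half}$. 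Therefore $\mathcal{T}^{\lambda,half}[\phi]=\big(\mathcal{T}^{\lambda}[\widetilde\phi]\big)\big|_{\partial\mathcal{D}^+}$.

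Next I would specialize to $\lambda=0$ and to indicator data. Fix $\bm{m}\in\mathbb{N}\times\mathbb{Z}$ and $1\le j\le d$ and take $\phi=\mathbbm{1}_{\partial\mathcal{D}^{(j)}_{\bm{m}}}$. Its odd reflection is $\widetilde\phi=\mathbbm{1}_{\partial\mathcal{D}^{(j)}_{\bm{m}}}-\mathbbm{1}_{\mathcal{F}(\partial\mathcal{D}^{(j)}_{\bm{m}})}$. The crucial bookkeeping step is to identify $\mathcal{F}(\partial\mathcal{D}^{(j)}_{\bm{m}})$ with a reflected resonator boundary $\partial\mathcal{D}^{(j')}_{\mathcal{F}\bm{m}}$ for the appropriate index; writing $\mathcal{F}\bm{m}$ for the lattice point obtained by negating the first coordinate, the reflection symmetry $\mathcal{F}\mathcal{D}=\mathcal{D}$ forces $\mathcal{F}(\mathcal{D}^{(j)}_{\bm{m}})=\mathcal{D}^{(j)}_{\mathcal{F}\bm{m}}$ (here one uses that $\mathcal{F}$ permutes the components of $\mathcal{D}_{prim}$, and in the simplest/intended normalization fixes each component's label — I would state this index identification carefully, possibly absorbing a relabeling of the $j$-index into the definition of $\mathcal{F}\bm{m}$). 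Then by linearity of $\mathcal{T}^{0}$,
\begin{equation*}
\mathcal{T}^{0,half}[\mathbbm{1}_{\partial\mathcal{D}^{(j)}_{\bm{m}}}]=\Big(\mathcal{T}^{0}[\mathbbm{1}_{\partial\mathcal{D}^{(j)}_{\bm{m}}}]-\mathcal{T}^{0}[\mathbbm{1}_{\partial\mathcal{D}^{(j)}_{\mathcal{F}\bm{m}}}]\Big)\Big|_{\partial\mathcal{D}^+}.
\end{equation*}
Finally, pairing against $\mathbbm{1}_{\partial\mathcal{D}^{(i)}_{\bm{n}}}$ with $\bm{n}\in\mathbb{N}\times\mathbb{Z}$ (so that $\partial\mathcal{D}^{(i)}_{\bm{n}}\subset\partial\mathcal{D}^+$ and the restriction is harmless) and inserting the minus sign from \eqref{def:capop} gives exactly
\begin{equation*}
(\mathcal{C}^{half})^{i,j}_{\bm{n},\bm{m}}=-\int_{\partial\mathcal{D}}\mathcal{T}^{0}[\mathbbm{1}_{\partial\mathcal{D}^{(j)}_{\bm{m}}}]\cdot\mathbbm{1}_{\partial\mathcal{D}^{(i)}_{\bm{n}}}+\int_{\partial\mathcal{D}}\mathcal{T}^{0}[\mathbbm{1}_{\partial\mathcal{D}^{(j)}_{\mathcal{F}\bm{m}}}]\cdot\mathbbm{1}_{\partial\mathcal{D}^{(i)}_{\bm{n}}}=\mathcal{C}^{i,j}_{\bm{n},\bm{m}}-\mathcal{C}^{i,j}_{\bm{n},\mathcal{F}\bm{m}},
\end{equation*}
which is the claimed identity.

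The main obstacle I anticipate is not analytic but notational/structural: making the reflection act cleanly on the \emph{indexed} family of resonators. One must check that (i) the square-lattice assumption plus $\mathcal{F}\mathcal{D}=\mathcal{D}$ really does imply that $\partial\mathcal{D}$ splits as the disjoint union $\partial\mathcal{D}^+\sqcup\mathcal{F}(\partial\mathcal{D}^+)$ with no resonator meeting the axis $\mathbb{R}\bm{e}_2$ (otherwise the odd reflection of the data is not well defined), and (ii) the map $\bm{n}\mapsto\mathcal{F}\bm{n}$ on $\mathbb{Z}^2$ together with a possible permutation of the $d$ components correctly encodes $\mathcal{F}$ on the geometric resonators, so that $\mathcal{C}_{\bm{n},\mathcal{F}\bm{m}}$ in the statement is the right object. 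A secondary, purely technical point is the trace/regularity bookkeeping: that odd reflection of an $H^{1/2}(\partial\mathcal{D}^+)$ function lands in $H^{1/2}(\partial\mathcal{D})$ (true since $\partial\mathcal{D}^+$ and its reflection are separated, each being a union of $C^1$ closed curves, so no compatibility condition at an interface is needed), and that restriction to $\partial\mathcal{D}^+$ is bounded $H^{-1/2}(\partial\mathcal{D})\to H^{-1/2}(\partial\mathcal{D}^+)$ in the dual sense used in \eqref{def:capop}. Once these are pinned down, the argument is just the image principle plus linearity.
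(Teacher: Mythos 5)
Your proof is correct and is essentially the paper's intended argument: the paper appeals to the reflection symmetry of the exterior Green function, and your uniqueness argument (odd extension of the data forces the whole-space solution to be odd under $\mathcal{F}$, hence to vanish on $\mathbb{R}\bm{e}_2$ and to restrict to the half-space solution) is exactly a worked-out form of that image principle. You also correctly identify the only real subtlety, namely the index bookkeeping for $\mathcal{F}\bm{m}$ (and the possible permutation of the $d$ component labels), which the paper's statement likewise leaves implicit.
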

The proof follows directly from expression \eqref{eq_d2n_2} of the whole-space 
DtN map $\mathcal{T}^{\lambda}$ (note that the half-space DtN map admits a similar expression) and the reflection symmetry of the exterior Green function. The details are left to the reader.

We say that $a=(a_{\bm{n}})$ is \textit{a discrete half-space soliton with eigenvalue $\lambda$} if $(a,\lambda)\in \ell^2(\mathbb{N}\times\mathbb{Z};\mathbb{C}^{d})\times \mathbb{R}$ solves the following equation:
\begin{equation} \label{eq_discrete_station_half_space}
\mathcal{C}^{half}a-\lambda(1+\sigma|a|^2)a=0.
\end{equation}The weak formulation of \eqref{eq_discrete_station_half_space} is
\begin{equation}
\mathfrak{a}^{disc,half}_{\lambda}(a,b):=(\mathcal{C}^{half}a ,b)_{\ell^2(\mathbb{N}\times\mathbb{Z};\mathbb{C}^{d})}-\lambda(a ,b)_{\ell^2(\mathbb{N}\times\mathbb{Z};\mathbb{C}^{d})}-\lambda\sigma(|a|^2 a ,b)_{\ell^2(\mathbb{N}\times\mathbb{Z};\mathbb{C}^{d})}=0,
\end{equation}
for any $b\in \ell^2(\mathbb{N}\times\mathbb{Z};\mathbb{C}^{d})$. 

We directly present the half-space version of Theorem \ref{thm_tight_binding_approx}, whose proof is similar to its whole-space counterpart. The constants in the following theorem, \textit{i.e.}, $\delta_0,\lambda_0^{cont},\lambda_0^{disc}>0$, \textit{etc.}, should not be taken as the same as those in Theorem \ref{thm_tight_binding_approx}. We also neglect the detailed expression of the approximate continuous soliton for brevity.

\begin{theorem} \label{thm_tight_binding_approx_half_space}
There exist $\delta_0,\lambda_0^{cont},\lambda_0^{disc}>0$ and thresholds of amplitude $M_{0}^{cont},M_{0}^{disc}>0$ such that the following holds. Suppose that $\delta<\delta_0$. If $u\in L^2(\mathbb{R}^{+}\bm{e}_1\oplus \mathbb{R}\bm{e}_2)$ is a continuous half-space soliton with eigenvalue $\lambda^{cont}<\lambda_0^{cont}$ and satisfies $\|u\|_{L^2(\mathbb{R}^{+}\bm{e}_1\oplus \mathbb{R}\bm{e}_2)}\leq M_0^{cont}$, then $$a_{u}:=(\int_{\mathcal{D}_{\bm{n}}^{(1)}}u,\cdots,\int_{\mathcal{D}_{\bm{n}}^{(d)}}u)\in \ell^2(\mathbb{N}\times \mathbb{Z};\mathbb{C}^{d})$$ is approximately a discrete half-space soliton with eigenvalue $\lambda^{disc}=\lambda^{cont}/\delta$ in the sense that
\begin{equation*}
\sup_{\|b\|_{\ell^2(\mathbb{N}\times \mathbb{Z};\mathbb{C}^{d})}=1}|\mathfrak{a}^{disc,half}_{\lambda^{disc}}(a_{u},b)|=\mathcal{O}(\delta).
\end{equation*}
Conversely,
if $a\in \ell^2(\mathbb{N}\times \mathbb{Z};\mathbb{C}^{d})$ is a discrete half-space soliton with eigenvalue $\lambda^{disc}<\lambda_0^{disc}$ and satisfies $\|a\|_{\ell^2(\mathbb{N}\times \mathbb{Z};\mathbb{C}^{d})}\leq M_0^{disc}$, then there exists $$u_a=\sum_{\substack{1\leq i\leq d \\ \bm{n}\in \mathbb{N}\times\mathbb{Z}}}a_{\bm{n}}^{(i)}\mathbbm{1}_{\mathcal{D}_{\bm{n}}^{(i)}}+\mathcal{O}(\delta),$$ which is approximately a continuous half-space soliton with eigenvalue $\lambda^{cont}=\delta\cdot \lambda^{disc}$ in the sense that
\begin{equation*}
\sup_{\|v\|_{H^1(\mathcal{D}^{+})}=1}|\mathfrak{a}^{cont,half}_{\lambda^{cont},\delta}(u_a,v)|=\mathcal{O}(\delta).
\end{equation*}
\end{theorem}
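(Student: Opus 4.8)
The plan is to mirror the proof of Theorem~\ref{thm_tight_binding_approx}, which the excerpt tells us will be developed in full in Section~\ref{sec:4}, and to track how each ingredient changes when $\mathbb{Z}^2$ is replaced by $\mathbb{N}\times\mathbb{Z}$ and $\mathcal{T}^\lambda$ by $\mathcal{T}^{\lambda,half}$. The core strategy for the whole-space case (and hence here) is: rescale the eigenvalue by $\lambda^{disc}=\lambda^{cont}/\delta$ so that the high-contrast boundary term $\delta\,\mathcal{T}^{\lambda}$ survives in the limit; expand $\mathcal{T}^{\lambda,half}=\mathcal{T}^{0,half}+\mathcal{O}(\lambda)$ using the half-space analogue of Proposition~\ref{prop_d2n_map}; observe that a continuous soliton $u$, when restricted to each resonator $\mathcal{D}_{\bm n}^{(i)}$, is nearly constant (its gradient energy is $\mathcal{O}(\delta)$ because $\int_{\mathcal{D}^+}|\nabla u|^2 = \lambda^{cont}\int_{\mathcal{D}^+}(1+\sigma|u|^2)|u|^2 + \delta\,\langle \mathcal{T}^{\lambda,half}u,u\rangle = \mathcal{O}(\delta)$ under the amplitude bound), so $u\approx\sum a_{\bm n}^{(i)}\mathbbm{1}_{\mathcal{D}_{\bm n}^{(i)}}$ with $a_{\bm n}^{(i)}=\int_{\mathcal{D}_{\bm n}^{(i)}}u$; then insert indicator test functions $b=\mathbbm{1}_{\partial\mathcal{D}_{\bm n}^{(i)}}$ into the weak form and recognize $-\int_{\partial\mathcal{D}^+}\mathcal{T}^{0,half}[\mathbbm{1}_{\partial\mathcal{D}_{\bm m}^{(j)}}]\cdot\mathbbm{1}_{\partial\mathcal{D}_{\bm n}^{(i)}}$ as exactly the matrix entry $(\mathcal{C}^{half})^{i,j}_{\bm n,\bm m}$ from Definition~\ref{def_cap_operator_half_space}. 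That produces $\mathcal{C}^{half}a_u-\lambda^{disc}(1+\sigma|a_u|^2)a_u=\mathcal{O}(\delta)$ tested against $\ell^2$ vectors.

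First I would establish the half-space analogues of the two structural propositions that the proof leans on. The well-posedness and analyticity of $\mathcal{T}^{\lambda,half}$ in $|\lambda|<\lambda_0$ follows verbatim from the argument for Proposition~\ref{prop_d2n_map}, since the Dirichlet condition on $\mathbb{R}\bm e_2$ only improves coercivity of the exterior problem (by a Poincar\'e-type inequality near the edge) rather than harming it; in particular the representation~\eqref{eq_d2n_2} of the DtN map in terms of the exterior Green function carries over with the half-space Dirichlet Green function replacing the periodic one. The exponential decay and periodicity-in-$\bm e_2$ of $\mathcal{C}^{half}$, i.e.\ the half-space version of Proposition~\ref{prop_cap_operator}, likewise follows from the Combes--Thomas / Green-function decay estimate applied to the half-space geometry (and, when the symmetry hypothesis of Proposition~\ref{prop_cap_half_vs_whole} holds, can alternatively be read off from the whole-space decay via $\mathcal{C}^{half}_{\bm n,\bm m}=\mathcal{C}_{\bm n,\bm m}-\mathcal{C}_{\bm n,\mathcal{F}\bm m}$). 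With these in hand, the quantitative estimates bounding the interior remainder $\mathfrak{e}_a$ and the difference between $u$ and its piecewise-constant projection are identical to the whole-space ones, because all of them are local energy estimates on individual resonators and on thin collars around $\partial\mathcal{D}^+$, and the edge at $x_1=0$ does not intersect the resonators.

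For the converse direction I would, given a discrete half-space soliton $a$ with small $\ell^2$ norm, define the interior trial function by spreading $a_{\bm n}^{(i)}$ as a constant over $\mathcal{D}_{\bm n}^{(i)}$, add a harmonic-type corrector $\mathfrak{e}_a$ solving the appropriate Neumann-data matching problem cell by cell so that the jump conditions on $\partial\mathcal{D}^+$ are satisfied to leading order, and extend to $\Omega^+$ by solving~\eqref{eq_d2n_half_space_def} with boundary data the interior trace. Plugging $u_a$ into $\mathfrak{a}^{cont,half}_{\lambda^{cont},\delta}$ and using $\lambda^{cont}=\delta\lambda^{disc}$, the leading-order terms reassemble into $\delta\big(\mathcal{C}^{half}a-\lambda^{disc}(1+\sigma|a|^2)a\big)$ tested against $v|_{\partial\mathcal{D}^+}$, which vanishes, leaving $\mathcal{O}(\delta^2)$ from the gradient term and $\mathcal{O}(\delta)$ from the corrector and the nonlinear term; dividing the weak form by the natural $\delta$ scaling gives the claimed $\mathcal{O}(\delta)$ bound uniformly over $\|v\|_{H^1(\mathcal{D}^+)}=1$. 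The main obstacle I anticipate is purely bookkeeping rather than conceptual: one must check that the summability/decay constants $\alpha,\beta$ for $\mathcal{C}^{half}$, and the amplitude thresholds $M_0^{cont},M_0^{disc}$ controlling the cubic term $|a|^2a$ (which is only locally Lipschitz on $\ell^2$, so the fixed-point or implicit-function step needs the a priori bound), are uniform in the resonator near the edge $x_1=0$ where translation invariance in $\bm e_1$ is lost. This is handled by noting that the exterior Green function for the half-space is still uniformly bounded and decaying, and that only finitely many ``boundary-layer'' columns of $\mathcal{C}^{half}$ differ from the bulk, so the operator-norm and decay estimates are stable; once that uniformity is secured the rest of the proof is a transcription of Section~\ref{sec:4}.
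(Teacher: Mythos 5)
Your proposal is essentially the approach the paper intends: since the authors explicitly say the proof of Theorem~\ref{thm_tight_binding_approx_half_space} "is similar to its whole-space counterpart," the task is precisely to transcribe the Section~\ref{sec:4} machinery (the auxiliary forms $\mathfrak{a}^{aux,I},\ldots,\mathfrak{a}^{aux,V}$, the expansion $u_f=u^{(0,0)}_a+\lambda u^{(1,0)}_a+\delta u^{(0,1)}_a+r_a$, and the iteration/Lax--Milgram arguments) with $\mathbb{Z}^2$ replaced by $\mathbb{N}\times\mathbb{Z}$ and $\mathcal{T}^{\lambda}$ by $\mathcal{T}^{\lambda,half}$, and you correctly identify the three ingredients that actually need re-checking in the half-space geometry: analyticity and boundedness of $\mathcal{T}^{\lambda,half}$ for $|\lambda|$ small, exponential decay of $\mathcal{C}^{half}$, and uniformity of the constants near the boundary column $n_1=0$ where $\bm e_1$-translation invariance is lost. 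Your heuristic that the gradient energy of a continuous soliton is $\mathcal{O}(\delta)$ and hence $u$ is nearly piecewise constant is the same structural fact the paper's expansion encodes.

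One point you gloss over: your claim that well-posedness of $\mathcal{T}^{\lambda,half}$ "follows verbatim" from Proposition~\ref{prop_d2n_map} is not quite literal. That proof reduces $\mathcal{L}^{ext}$ to a family of quasi-periodic operators $\mathcal{L}^{ext}_{\bm\kappa}$ via the Floquet transform in \emph{both} lattice directions, and the uniform Poincar\'e bound is proved on the torus cell. In the half-space one can only Floquet-reduce in $\bm e_2$, leaving a non-compact half-line problem in $\bm e_1$; the conclusion $\inf\text{Spec}(\mathcal{L}^{ext,+})>0$ still holds (the Poincar\'e estimate of Step~1 of that proof is purely local on each cell and the extra Dirichlet condition on $\mathbb{R}\bm e_2$ can only raise the infimum, as you say), but the reduction and the statement that the spectrum is a union over $\bm\kappa$ must be rephrased. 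Similarly, for the decay of $\mathcal{C}^{half}$, the Combes--Thomas argument still applies, but it is cleaner to use the identity $\mathcal{C}^{half}_{\bm n,\bm m}=\mathcal{C}_{\bm n,\bm m}-\mathcal{C}_{\bm n,\mathcal{F}\bm m}$ from Proposition~\ref{prop_cap_half_vs_whole} when its symmetry hypothesis holds; without that symmetry one does need to rerun the resolvent-decay argument for the half-space Dirichlet Laplacian, as you note. These are bookkeeping points, not conceptual gaps; your outline is sound.
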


The second part of this paper considers the existence of discrete gap solitons with a focus on nonlinearity, \textit{i.e.}, the localized solutions to \eqref{eq_discrete_station_eq} or \eqref{eq_discrete_station_half_space} with $\lambda$ lying in a spectral gap of the capacitance operator $\mathcal{C}$. The mechanism leading to the emergence of gap solitons in \textit{periodic structures with focusing nonlinearity} is well understood: the lower-gap and upper-gap spectrum of the governing operator is linked by the nonlinear term and forms a valley-top energy landscape, which leads to the existence of critical points and hence gap solitons. In fact, following the lines in \cite{pankov2010soliton_discrete,pelinovsky2011localization} on the existence of gap solitons of discrete nonlinear Schrödinger equations (DNLS), we can prove the following result. 
\begin{theorem} \label{thm_gap_solitons}
Suppose that the capacitance operator $\mathcal{C}$ ($\mathcal{C}^{half}$, resp.) has a spectral gap $\mathcal{I}$ in the sense that
\begin{equation*}
\inf \mathcal{I}>0,\quad \text{Spec}(\mathcal{C})\cap \mathcal{I}=\emptyset,\quad \text{Spec}(\mathcal{C})\cap (-\infty,\inf \mathcal{I})\neq \emptyset,\quad \text{Spec}(\mathcal{C})\cap (\sup \mathcal{I},\infty)\neq \emptyset ,
\end{equation*}
where $\text{Spec}$ denotes the spectrum. Then, for any $\lambda\in\mathcal{I}$ and $\sigma>0$, equation \eqref{eq_discrete_station_eq} (\eqref{eq_discrete_station_half_space}, resp.) has a nontrivial real-valued solution $u\in \ell^2(\mathbb{Z}^2;\mathbb{C}^{d})$ ($\ell^2(\mathbb{Z}\times \mathbb{N};\mathbb{C}^{d})$, resp.), which decays exponentially at infinity.
\end{theorem}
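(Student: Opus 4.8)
The plan is to recast the existence of discrete gap solitons for \eqref{eq_discrete_station_eq} as a critical point problem for a $C^1$ functional on $\ell^2(\mathbb{Z}^2;\mathbb{R}^d)$ and to apply the linking / mountain-pass machinery developed for gap solitons of the discrete nonlinear Schr\"odinger equation in \cite{pankov2010soliton_discrete,pelinovsky2011localization}. Fix $\lambda\in\mathcal{I}$ and $\sigma>0$. By Proposition \ref{prop_cap_operator} the operator $\mathcal{C}$ is bounded, self-adjoint and real-valued on $\ell^2(\mathbb{Z}^2;\mathbb{R}^d)$, so $\mathcal{C}-\lambda$ is an invertible self-adjoint operator whose spectrum has a spectral gap straddling $0$; let $\ell^2=\mathcal{H}^+\oplus\mathcal{H}^-$ be the corresponding orthogonal spectral decomposition, with $\mathcal{C}-\lambda$ positive definite (bounded below by some $c_0>0$) on $\mathcal{H}^+$ and negative definite on $\mathcal{H}^-$, both subspaces being nontrivial by the hypothesis that the spectrum meets $(-\infty,\inf\mathcal{I})$ and $(\sup\mathcal{I},\infty)$. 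Equation \eqref{eq_discrete_station_eq} with real-valued $a$ is then the Euler--Lagrange equation of
\begin{equation*}
J(a)=\tfrac12\big((\mathcal{C}-\lambda)a,a\big)_{\ell^2}-\tfrac{\lambda\sigma}{4}\sum_{\bm{n}}\sum_{i=1}^{d}\big(a_{\bm{n}}^{(i)}\big)^4,
\end{equation*}
which is well defined and $C^1$ on $\ell^2(\mathbb{Z}^2;\mathbb{R}^d)$ because the embedding $\ell^2\hookrightarrow\ell^4$ is continuous.

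The functional $J$ has the standard indefinite ``valley-top'' geometry: on $\mathcal{H}^+$ near the origin the quadratic part dominates the quartic one (using $\|a\|_{\ell^4}\le\|a\|_{\ell^2}$), giving a local linking structure, while along any ray in $\mathcal{H}^-$ and, more importantly, along $\mathcal{H}^-\oplus\mathbb{R}e$ for a fixed $e\in\mathcal{H}^+$, the negative quartic term (here $\lambda\sigma>0$ since $\lambda\ge\inf\mathcal{I}>0$) drives $J\to-\infty$; this yields a bounded linking set and a positive minimax level $c>0$. I would then establish a Cerami (or Palais--Smale) condition at level $c$: given a Cerami sequence $(a^{(k)})$, testing $J'(a^{(k)})$ against $a^{(k),+}-a^{(k),-}$ together with the spectral gap bound $c_0$ shows $(a^{(k)})$ is bounded in $\ell^2$; boundedness in $\ell^2$ gives a weakly convergent subsequence, and the key compactness input is that the nonlinear map $a\mapsto(|a_{\bm n}^{(i)}|^2 a_{\bm n}^{(i)})$ is completely continuous from $\ell^2$ into $\ell^2$ (indeed into $\ell^{4/3}$, then interpolated), a property special to the discrete setting because $\ell^2$ embeds compactly into $\ell^p$ for no $p$, but bounded sets are $\ell^\infty$-bounded and tight enough that cubic nonlinearities are weak-to-strong continuous. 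A translation/concentration–compactness argument in the spirit of \cite{pankov2010soliton_discrete} rules out vanishing: if the sequence spreads out, $J(a^{(k)})\to 0$, contradicting $c>0$; hence after translating by lattice vectors a subsequence converges strongly to a nontrivial critical point $a$. For the half-space operator $\mathcal{C}^{half}$ on $\ell^2(\mathbb{N}\times\mathbb{Z};\mathbb{C}^d)$ the same argument applies verbatim, except that translations are only available in the $\bm e_2$-direction; this still suffices to exclude vanishing because the geometry of $J$ is unchanged and a nonvanishing sequence must concentrate along some column, which can be shifted back.

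Exponential decay of the solution follows from a Combes--Thomas / Agmon-type estimate: since $a\in\ell^2$ implies $a_{\bm n}\to0$, the self-consistent potential $\lambda\sigma|a_{\bm n}|^2$ tends to $0$, so for $|\bm n|$ large the equation $\mathcal{C}a=\lambda(1+\sigma|a|^2)a$ reads $(\mathcal{C}-\lambda)a=\lambda\sigma|a|^2 a$ with a right-hand side of the same decay as $a$ and a tail that is small; combining the exponential off-diagonal decay $\|\mathcal{C}_{\bm n,\bm m}\|\le\alpha e^{-\beta|\bm n-\bm m|}$ from Proposition \ref{prop_cap_operator} with the invertibility of $\mathcal{C}-\lambda$ (whose inverse inherits exponential off-diagonal decay by a Combes--Thomas estimate on the resolvent), a bootstrap / Gr\"onwall argument on $\sum_{\bm n} e^{\gamma|\bm n|}|a_{\bm n}|^2$ for small $\gamma$ gives $|a_{\bm n}|\le Ce^{-\gamma|\bm n|}$.

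I expect the main obstacle to be the verification of the Cerami condition, specifically the nonvanishing step: one must show the minimax sequence does not spread to infinity in a way that dissipates its energy, and this requires a discrete Lions-type lemma adapted to the vector-valued $\ell^2(\mathbb{Z}^2;\mathbb{C}^d)$ (and, for the half-space, to the non-translation-invariant direction). The positivity of the minimax level $c>0$ and the strict sign $\lambda\sigma>0$ are what make this work, and tracking the role of the spectral gap constant $c_0$ throughout the boundedness estimate is the delicate bookkeeping; everything else is a fairly direct transcription of the DNLS theory in \cite{pankov2010soliton_discrete,pelinovsky2011localization} to the capacitance operator.
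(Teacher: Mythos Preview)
Your overall strategy is sound and is indeed the direct transcription of the DNLS gap-soliton theory in \cite{pankov2010soliton_discrete,pelinovsky2011localization} that the paper itself invokes for this theorem. The paper does not spell out a separate proof of Theorem~\ref{thm_gap_solitons}; its detailed argument in Section~\ref{sec_gap_soliton} treats the more general defect problem (Theorem~\ref{thm_defect_gap_solitons}) and specializes to $V=0$. That argument, however, takes a technically different route from yours: rather than working directly on $\ell^2(\mathbb{Z}^2;\mathbb{R}^d)$ and invoking a Cerami condition plus concentration--compactness, it passes through a sequence of \emph{periodic approximations} on the finite-dimensional spaces $\ell^2_k$, finds critical points $a^{[k]}$ of the truncated functionals $J^{[k]}$ by an elementary maximum argument on an explicit linking set (Lemma~\ref{lem_admissible_set}), and then extracts a weak limit. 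The periodic-approximation route avoids the infinite-dimensional linking theorems and the Cerami machinery altogether; in the defect-free case your direct approach is arguably cleaner, but the paper's route is chosen because it survives the loss of translation invariance when $V\neq 0$.

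One point in your write-up needs correction: the claim that the cubic map $a\mapsto |a|^2 a$ is completely continuous (weak-to-strong) from $\ell^2$ to $\ell^2$ is false---a translating bump $a^{(k)}=\delta_{k\bm e_1}$ converges weakly to zero but $|a^{(k)}|^2 a^{(k)}=\delta_{k\bm e_1}$ does not converge strongly. You immediately recover by invoking the translation/Lions argument, which is the correct mechanism; just drop the complete-continuity sentence. What actually happens is: after translating to defeat vanishing, the weak (hence pointwise) limit $a$ is shown to be a critical point by testing $J'(a^{(k)})\to 0$ against finitely supported vectors, where pointwise convergence plus $\ell^\infty$-boundedness suffices for the nonlinear term. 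Your exponential-decay argument via Combes--Thomas on $(\mathcal{C}-\lambda)^{-1}$ matches the paper's.
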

In this paper, we extend the study of gap solitons to \textit{periodic structures with localized defects}. To be precise, we consider the following problem:
\begin{equation} \label{eq_discrete_station_eq_defect}
\mathcal{C}a+Va-\lambda(1+\sigma|a|^2)a=0,
\end{equation}
where the defect operator $V$ is assumed to be self-adjoint on $\ell^2(\mathbb{Z}^2;\mathbb{C}^{d})$ and bounded from $\ell^{\infty}(\mathbb{Z}^2;\mathbb{C}^{d})$ to $\ell^{1}(\mathbb{Z}^2;\mathbb{C}^{d}))$. In fact, any real-valued function $V\in \ell^{1}(\mathbb{Z}^2;\mathbb{C}^{d}))$ satisfies these conditions. The main result of the second part of this paper is the following. 
\begin{theorem} \label{thm_defect_gap_solitons}
Suppose that the capacitance operator $\mathcal{C}$  has a spectral gap $\mathcal{I}$ in the same sense as in Theorem \ref{thm_gap_solitons}. Then for any $\lambda\in\mathcal{I}$ and $\sigma>0$, if we have
\begin{equation*}
2\|V\|_{\mathcal{B}(\ell^{\infty}(\mathbb{Z}^2;\mathbb{C}^{d}),\ell^{1}(\mathbb{Z}^2;\mathbb{C}^{d})))}<\text{dist}\{\lambda,\text{Spec}(\mathcal{C})\},
\end{equation*}
then equation \eqref{eq_discrete_station_eq_defect} has a nontrivial solution $u\in \ell^2(\mathbb{Z}^2;\mathbb{C}^{d})$, which decays exponentially at infinity.
\end{theorem}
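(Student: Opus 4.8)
The plan is to read \eqref{eq_discrete_station_eq_defect} as a gap-soliton problem for the bounded self-adjoint operator $\mathcal{A}:=\mathcal{C}+V$ on $\ell^2(\mathbb{Z}^2;\mathbb{C}^{d})$ and to run the variational scheme behind Theorem \ref{thm_gap_solitons}. The first step is spectral. Since $V$ is bounded from $\ell^\infty$ to $\ell^1$ it is compact on $\ell^2(\mathbb{Z}^2;\mathbb{C}^{d})$, so $\sigma_{\mathrm{ess}}(\mathcal{A})=\sigma_{\mathrm{ess}}(\mathcal{C})=\text{Spec}(\mathcal{C})$, the last equality because the periodic operator $\mathcal{C}$ has purely essential (Bloch-band) spectrum. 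A resolvent estimate gives $\text{dist}(\mu,\text{Spec}(\mathcal{C}))\le\|V\|_{\mathcal{B}(\ell^2)}\le\|V\|_{\mathcal{B}(\ell^\infty,\ell^1)}$ for every $\mu\in\text{Spec}(\mathcal{A})$; hence the hypothesis forces $\text{dist}(\lambda,\text{Spec}(\mathcal{A}))\ge\text{dist}(\lambda,\text{Spec}(\mathcal{C}))-\|V\|_{\mathcal{B}(\ell^\infty,\ell^1)}>0$, while the bands of $\mathcal{C}$ below $\inf\mathcal{I}$ and above $\sup\mathcal{I}$ survive in $\sigma_{\mathrm{ess}}(\mathcal{A})$. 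Thus $\lambda$ sits in a genuine gap of $\mathcal{A}$ with spectrum on both sides, and $\pm\langle(\mathcal{A}-\lambda)a,a\rangle\ge c_0\|a\|^2$ on the positive/negative spectral subspaces of $\mathcal{A}-\lambda$, with $c_0:=\text{dist}(\lambda,\text{Spec}(\mathcal{C}))-\|V\|_{\mathcal{B}(\ell^\infty,\ell^1)}>0$. This is the only place the smallness hypothesis is used qualitatively.

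Second, introduce the energy functional $J(a)=\tfrac12\langle(\mathcal{A}-\lambda)a,a\rangle-\tfrac{\lambda\sigma}{4}\sum_{\bm n,i}|a_{\bm n}^{(i)}|^{4}$ on $\ell^2(\mathbb{Z}^2;\mathbb{C}^{d})$, whose critical points are exactly the solutions of \eqref{eq_discrete_station_eq_defect}; $J$ is $C^2$ since $\ell^2\hookrightarrow\ell^4$ is bounded. Splitting $\ell^2=\mathcal{H}^{+}\oplus\mathcal{H}^{-}$ along the spectral decomposition of $\mathcal{A}-\lambda$, the functional $J$ has the strongly indefinite linking geometry exploited for the periodic case in \cite{pankov2010soliton_discrete,pelinovsky2011localization}: $J\le0$ on $\mathcal{H}^{-}$, $J\ge\alpha>0$ on a small sphere of $\mathcal{H}^{+}$, and $J\to-\infty$ along rays of $\mathcal{H}^{+}$. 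An abstract linking (or generalized Nehari) theorem then produces a bounded Cerami sequence $(a_j)_{j\in\mathbb{N}}$ at a level $c\in(0,\infty)$. Combining $J'(a_j)[a_j]\to0$ with $J(a_j)\to c$ gives $\tfrac{\lambda\sigma}{4}\|a_j\|_{\ell^4}^{4}\to c>0$, and together with $\|a_j\|_{\ell^4}^{4}\le\|a_j\|_{\ell^\infty}^{2}\|a_j\|_{\ell^2}^{2}$ this rules out vanishing: $\liminf_{j}\|a_j\|_{\ell^\infty}>0$.

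Third, and this is the crux, one must recover compactness without translation invariance. After extracting a weak limit $a_j\rightharpoonup a$ (hence pointwise), $a$ solves \eqref{eq_discrete_station_eq_defect}: the linear term passes to the limit because $V$ is compact, so $\langle Va_j,h\rangle\to\langle Va,h\rangle$, and the cubic term by the discrete Brezis--Lieb lemma combined with pointwise convergence and the $\ell^4$-bound. If $a\neq0$ we are done. The obstacle is the escaping case $a=0$: non-vanishing then produces $\bm y_j\in\mathbb{Z}^2$ with $|\bm y_j|\to\infty$, and since $\mathcal{C}$ commutes with lattice translations while the recentred defects $V(\cdot+\bm y_j)\to0$ strongly on $\ell^2$ ($V$ being localized), the translates $a_j(\cdot+\bm y_j)$ converge weakly to a nonzero $b$ solving the periodic equation \eqref{eq_discrete_station_eq}. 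Theorem \ref{thm_gap_solitons} identifies the least energy $m>0$ of such $b$, and an energy-splitting identity forces $c\ge m$; on the other hand, placing the periodic ground state of Theorem \ref{thm_gap_solitons} at a favourable site and projecting onto the Nehari set of $J$ gives $c\le m$, with strict inequality when $V$ has a favourable site, so the escaping alternative saturates every estimate and reduces the Cerami sequence to a translated copy of the periodic ground state. Excluding this — so that $a\neq0$ — is the heart of the matter, and the smallness hypothesis is what makes it work: either it already yields $c<m$, or one replaces the global minimax by one confined to a fixed neighbourhood of the translation orbit of the periodic soliton, whose transverse second variation stays coercive for $\|V\|_{\mathcal{B}(\ell^\infty,\ell^1)}$ small, which keeps the relevant level below $m$ and restores compactness. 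I expect this level comparison and the bookkeeping in the splitting lemma to be the only genuinely new point; the remainder transcribes the periodic argument.

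Finally, the nontrivial solution $a$ decays exponentially: writing $a=\lambda\sigma(\mathcal{A}-\lambda)^{-1}(|a|^2a)$ with $|a|^2a\in\ell^{2/3}$, a Combes--Thomas-type estimate applies — $\mathcal{C}$ has exponentially decaying off-diagonal entries by Proposition \ref{prop_cap_operator}, $V$ is localized, and $\lambda\notin\text{Spec}(\mathcal{A})$ — so $(\mathcal{A}-\lambda)^{-1}$ has an exponentially decaying kernel, and a standard bootstrap of the resulting convolution inequality upgrades the summability of $a$ to exponential decay. The half-space statement follows verbatim with $\mathcal{C}^{half}$ in place of $\mathcal{C}$.
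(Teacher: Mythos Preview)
Your proposal takes a genuinely different route from the paper, and the point where it remains incomplete is exactly the one you flag yourself: excluding the ``escaping'' alternative in the concentration-compactness step. You offer two ways out --- either the strict level inequality $c<m$, or a local minimax near the translation orbit of the periodic soliton --- but neither is established. The first requires a favourable sign of $V$ at some site, which is not assumed; the second would need a perturbative argument whose smallness parameter is not obviously controlled by the hypothesis $2\|V\|_{\mathcal{B}(\ell^\infty,\ell^1)}<\mathrm{dist}(\lambda,\mathrm{Spec}(\mathcal{C}))$ (this is a fixed finite bound, not ``arbitrarily small''), and would also rely on nondegeneracy of the periodic ground state, which is not available. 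So as written the proof has a genuine gap precisely at its crux.

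The paper avoids this issue altogether by a different mechanism. Rather than working directly in $\ell^2$ and decomposing along the spectral subspaces of $\mathcal{A}=\mathcal{C}+V$, it passes to finite-dimensional $k$-periodic spaces $\ell^2_k$ and decomposes along the spectral subspaces of the \emph{periodic} operator $\mathcal{C}^{[k]}$ (so the construction commutes with the lattice). The critical point $a^{[k]}$ is obtained not by an abstract linking theorem but as an honest maximum of the energy over a compact set $M^{[k]}\subset\mathrm{Ran}(P_-^{[k]})\oplus\mathbb{R}^+ z_0^{[k]}$, where $z_0^{[k]}$ is a suitably normalized periodization of a fixed $z_0\in\mathrm{Ran}(P_+)$. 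The set $M^{[k]}$ is built so that every element has $(R^{[k]}a,z_0)_{\ell^2}\ge m_1>0$ uniformly in $k$; the smallness condition on $V$ is used only to show that $J^{[k]}$ still has the valley-top geometry on $M^{[k]}$ (via the estimate $|(VR^{[k]}a,R^{[k]}a)|\le\|V\|_{\mathcal{B}(\ell^\infty,\ell^1)}\|a\|_{\ell^2_k}^2$), hence the maximum is interior. Nontriviality of the weak limit then follows immediately from $(a,z_0)_{\ell^2}=\lim_i(R^{[k_i]}a^{[k_i]},z_0)_{\ell^2}\ge m_1>0$, with no energy comparison or dichotomy analysis needed. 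In short: the paper trades your infinite-dimensional linking for a finite-dimensional maximum on a carefully designed set that pins the critical point near a fixed direction, which is what makes the non-periodic case go through without the delicate level estimate you are missing.
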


The proof is given in full detail in Section \ref{sec_gap_soliton}. The key insight is that when the size of the defect $V$ is suitably controlled, it only slightly deforms the energy landscape associated with the nonlinear problem \eqref{eq_discrete_station_eq_defect} without breaking the valley-top structure. This leads to the existence of critical points of the energy functional, or equivalently, solutions to \eqref{eq_discrete_station_eq_defect}. We also note that the proof of Theorem \ref{thm_defect_gap_solitons} is directly generalized to the half-space case, if the half-space capacitance operator is spectrally gapped, as in Theorem \ref{thm_gap_solitons}. We leave this extension to the reader.

\subsection{Relation to previous works}

As one of the major contributions of this work, the discrete approximation of nonlinear subwavelength resonance in periodic structures presented in Theorem \ref{thm_tight_binding_approx} and \ref{thm_tight_binding_approx_half_space} extends previous results in \cite{feppon-dtn,feppon2022modal} and \cite{ammari2025nonlinear_resonance}, which justifies a similar discrete approximation of the subwavelength resonance problem but for linear periodic and nonlinear finite structures, respectively. Note that the use of capacitance operator (or matrix) lies in the center of the discrete approximation presented in both these mentioned works and in the present paper. We note that, in the physics literature, the discrete approximation is one of the most widely applied techniques to study either classical or quantum wave systems. Besides the aforementioned studies, we note that the discrete approximation of (linear or nonlinear, stationary or time-dependent, periodic or non-periodic) Schrödinger equations has been rigorously justified in \cite{ablowitz2012tight_bind,pelinovsky2010bounds_tight-binding,gilg2024approximation,sacchetti2020derivation_tight-binding,shapiro2022tight_binding,pelinovsky2008justification,bambusi2007exponential_time,aftalion2009mathematical}; see also the references therein. We note that, with the results in Theorem \ref{thm_tight_binding_approx}, it is expected that the discrete approximation of the time-dependent equation \eqref{eq_pde_wave_eq} will be justified. To be more precise, one can consider the Cauchy problem associated with the following equation:
\begin{equation*}
\frac{\partial^2 a}{\partial t^2}+\frac{1}{1+\sigma|a|^2}a=0,
\end{equation*}
(the fraction $\frac{1}{1+\sigma|a|^2}$ is understood pointwisely, \textit{i.e.}, $\Big(\frac{1}{1+\sigma|a|^2}\Big)_{\bm{n}}^{(i)}=\frac{1}{1+\sigma|a_{\bm{n}}^{(i)}|^2}$) and show that its solution approximates the solution of \eqref{eq_pde_wave_eq} with an error estimate holding in long (but finite) time; see, \textit{e.g.}, \cite{ablowitz2012tight_bind}. We leave this extension to the reader.

Another major contribution of this paper, Theorem \ref{thm_defect_gap_solitons}, is on the existence of subwavelength band-gap solitons in periodic discrete structures with localized defects. We note that the gap solitons has been extensively studied for the nonlinear Schrödinger equation (NLS) \cite{Pankov2005soliton_continuous,weinstein2010edge_bifurcation,dohnal2016bifurcation,pelinovsky2011localization,jeanjean2003remark,lions1984concentration,pava2009existence,stuart1997bifurcation,alama1992existence,heinz1992existence,kupper1992necessary} and its discrete counterparts (DNLS) \cite{pankov2010soliton_discrete,shi2010existence_gap_soliton,chen2014homoclinic,zhou2010existence_homoclinic,lin2022homoclinic,chen2024discrete,zhou2010existence_gap_solitons,kuang2014existence,hofstrand2023discrete}. Among the mathematical studies in discrete solitons, despite a wide range of structures being considered, including different types of nonlinearity (superlinear- and saturable-type, etc.), most works focus on solitons embedded in a periodic lattice with energy lying in (finite or half-infinite) spectral gaps. In \cite{chen2019perturbed,zhang2009breather}, the authors prove the existence of solitons in non-periodic structures perturbed by unbounded potentials. In \cite{sacchetti2023perturbation}, the author proves the bifurcation of solitons from an eigenvalue of the defected structure (periodic NLS perturbed by a bounded potential) from the perspective of perturbation theory. Our result focuses on a different perspective, that is, discrete solitons with energy lying in a \textit{finite band gap} of the background lattice that also possesses \textit{a localized defect}. We believe that our work provides a valuable contribution to the mathematical study of discrete solitons. We also remark that the proof of Theorem \ref{thm_defect_gap_solitons} can be generalized to other systems like DNLS with different types of nonlinearity; this interesting extension is left to the reader.

\section{Exterior Dirichlet-to-Neumann map and capacitance operator} \label{sec:3}

\subsection{Proof of Proposition \ref{prop_d2n_map}}

The idea of proof proceeds as follows. We first introduce the Laplacian operator $\mathcal{L}^{ext}$ in the exterior structure with Dirichlet boundary condition, \textit{i.e.},
\begin{equation*}
\mathcal{L}^{ext}: H^{1}_{0}(\Omega)\subset L^2(\Omega)\to H^{-1}(\Omega),\quad
u\mapsto -\frac{1}{n_{e}^2}\Delta u .
\end{equation*}
Then, based on the Floquet transform and the Poincaré inequality, we will show that $\mathcal{L}^{ext}$ has a positive spectrum due to the imposed Dirichlet boundary condition. Denoting $$\lambda_0:=\frac{1}{2}\inf \text{Spec}(\mathcal{L}^{ext})>0,$$ we will express $\mathcal{T}^{\lambda}$ based on the resolvent $(\mathcal{L}^{ext}-\lambda)^{-1}$ ($|\lambda|<\lambda_0$), which immediately leads to the boundedness and analyticity of $\mathcal{T}^{\lambda}$.

{\color{blue}Step 1.} By the Floquet transform, the spectrum of $\mathcal{L}^{ext}$ is decomposed as $$\text{Spec}(\mathcal{L}^{ext})=\cup_{\bm{\kappa}\in Y^{*}}\text{Spec}(\mathcal{L}^{ext}_{\bm{\kappa}}),$$ where $\mathcal{L}^{ext}_{\bm{\kappa}}$ is the Laplacian operator on the $\bm{\kappa}-$quasi-periodic functional space
\begin{equation*}
\mathcal{U}_{\bm{\kappa}}H^{1}_{0}(\Omega):=\big\{u\in H_{loc}^1(\Omega):\, u(\bm{x}+\bm{e}_k)=e^{i\bm{\kappa}}u(\bm{x}),\,(\nabla u)(\bm{x}+\bm{e}_k)=e^{i\bm{\kappa}}(\nabla u)(\bm{x}),\, u\big|_{\partial \mathcal{D}}=0\big\} .
\end{equation*}
It is well known that $\mathcal{L}^{ext}_{\bm{\kappa}}$ has an unbounded, discrete, and positive spectrum. We claim that there exists a uniform positive lower bound: 
\begin{equation} \label{eq_ext_inf_spectrum}
\inf_{\bm{\kappa}\in Y^{*}}\inf\text{Spec}(\mathcal{L}^{ext}_{\bm{\kappa}})>0.
\end{equation}
Then it follows that $$\inf \text{Spec}(\mathcal{L}^{ext})=\inf_{\bm{\kappa}\in Y^{*}}\inf\text{Spec}(\mathcal{L}^{ext}_{\bm{\kappa}})>0.$$ Denoting $\lambda_0=\frac{1}{2}\inf \text{Spec}(\mathcal{L}^{ext})$, one sees that the resolvent $(\mathcal{L}^{ext}-\lambda)^{-1}$ is well defined for $|\lambda|<\lambda_0$.

Now we prove \eqref{eq_ext_inf_spectrum}. Suppose that $u\in \mathcal{U}_{\bm{\kappa}}H^{1}_{0}(\Omega)$ and $\mathcal{L}^{ext}_{\bm{\kappa}}u=\lambda u$. We first prove a Poincaré-type estimate of functions in $\mathcal{U}_{\bm{\kappa}}H^{1}_{0}(\Omega)$. Extending $u$ by zero inside $\mathcal{D}_{prim}$, fixing an arbitrary point $O\in \mathcal{D}_{prim}$ (without loss of generality, we assume that $O\in \mathcal{D}_{prim}^{(1)}$), the Dirichlet boundary condition on $\mathcal{D}_{prim}$ implies that
\begin{equation*}
u(\bm{y})=\int_{0}^{t}\frac{\partial u}{\partial \hat{l}}(\bm{x}+s\hat{l})ds ,  \quad \forall \bm{y}\in Y\cap \Omega .
\end{equation*}
Here, $\bm{x}$ is the closest point to $\bm{y}$ on the intersection $l\cap \partial \mathcal{D}_{prim}^{(1)}$ (the segment $l=l_{O\bm{y}}$ connects $O$ and $\bm{y}$; If $\mathcal{D}_{prim}^{(1)}$ is star shaped with $O$ being its center, then $\{\bm{x}\}=l\cap \partial \mathcal{D}_{prim}^{(1)}$), $\hat{l}$ denotes the direction of $l$, and $t=|\bm{y}-\bm{x}|$. The Cauchy-Schwarz inequality yields
\begin{equation*}
|u(\bm{y})|^2\leq \text{diam}(Y)\int_{0}^{t}\Big|\frac{\partial u}{\partial \hat{l}}(\bm{x}+s\hat{l})\Big|^2 ds 
\leq \text{diam}(Y)\int_{0}^{t}\Big|\nabla u(\bm{x}+s\hat{l})\Big|^2 ds .
\end{equation*}
The integral on the right-hand side can be rewritten in polar form. Denoting the distance from any point to the origin $O$ by $r$ and the corresponding argument by $\theta$, we see that
\begin{equation*}
|u(\bm{y})|^2\leq \text{diam}(Y)\int_{r_1(\theta)}^{r_2(\theta)}|\nabla u(r,\theta)|^2dr 
\leq \frac{\text{diam}(Y)}{\text{dist}(O,\partial \mathcal{D}_{prim}^{(1)})} \int_{r_1(\theta)}^{r_2(\theta)}|\nabla u(r,\theta)|^2rdr ,
\end{equation*}
where $r_1$ (resp., $r_2$) is the distance from $O$ to the closest point on $l\cap \partial \mathcal{D}_{prim}^{(1)}$ (to $l\cap \partial Y$, respectively). Note that the integral $\int_{r_1}^{r_2}|\nabla u(r,\theta)|^2rdr$ depends only on the argument of $l=l_{O\bm{y}}$ and not on the length $|l_{O\bm{y}}|$. Hence, by integrating over $\bm{y}$ and calculating this integral in the polar coordinates, we obtain the following Poincaré-type inequality:
\begin{equation*}
\int_{Y\cap \Omega}|u(\bm{y})|^2d\bm{y} 
\leq  \frac{\text{diam}^3(Y)}{\text{dist}(O,\partial \mathcal{D}_{prim}^{(1)})} \int d\theta\int_{r_1(\theta)}^{r_2(\theta)}|\nabla u(r,\theta)|^2rdr 
=C_0 \int_{Y\cap \Omega}|\nabla u(\bm{y})|^2d\bm{y} ,
\end{equation*}
where $$C_0:=\frac{\text{diam}^3(Y)}{\text{dist}(O,\partial \mathcal{D}_{prim}^{(1)})}.$$ 

From the identity $\mathcal{L}^{ext}_{\bm{\kappa}}u=\lambda u$, integrating by parts over $Y\cap \Omega$, and using the quasi-periodicity of $u$ and the above inequality, we obtain
\begin{equation*}
\lambda \int_{Y\cap \Omega}|u(\bm{y})|^2d\bm{y}=\int_{Y\cap \Omega}|\nabla u(\bm{y})|^2d\bm{y}\geq C_{0}^{-1} \int_{Y\cap \Omega}| u(\bm{y})|^2d\bm{y} .
\end{equation*}
Note that $C_0>0$ is independent of $\bm{\kappa}$, which concludes the proof of \eqref{eq_ext_inf_spectrum}.

{\color{blue}Step 2.} With the result in Step 1 in hand, the uniqueness of a solution to \eqref{eq_d2n_def} is obvious. In fact, suppose that both $u$ and $\tilde{u}$ solve \eqref{eq_d2n_def} and are not identical. This leads to the contradiction that $w:=u-\tilde{u}\in L^2(\Omega)$ is an eigenfunction of $\mathcal{L}^{ext}$ with an eigenvalue $\lambda$ in $\mathcal{R}\backslash \text{Spec}(\mathcal{L}^{ext})$.

Now we prove the existence of a solution to \eqref{eq_d2n_def}. Denote the trace operator from $H^1(\Omega)$ to $ H^{1/2}(\partial \Omega)$ by $T$. We can construct a right inverse $E$ of $T$ (lifting operator) as follows:
\begin{enumerate}
    \item[(i)] take an open neighborhood $\tilde{\mathcal{D}}^{(i)}_{prim}$ of $\mathcal{D}^{(i)}_{prim}$ for each $i$ such that $\mathcal{D}^{(i)}_{prim}\Subset\tilde{\mathcal{D}}^{(i)}_{prim}\Subset Y$ and $\mathcal{D}^{(i)}_{prim}\cap \mathcal{D}^{(j)}_{prim}=\emptyset$ if $i\neq j$;
    \item[(ii)] construct a lifting map $E^{(i)}$ from $ H^{1/2}(\partial \mathcal{\mathcal{D}}_{prim}^{(i)})$ into $H_{0}^{1}(\tilde{\mathcal{D}}^{(i)}_{prim}\backslash \mathcal{D}^{(i)}_{prim})$ following a patching argument, \textit{e.g.}, as in \cite[Theorem 18.18]{leoni2024first_course_sobolev};
    \item[(iii)] repeat the same procedure in each cell $Y_{\bm{n}}=Y+n_1\bm{e}_1+n_2\bm{e}_2$, and denote the resulting lifting map by $E^{(i)}_{\bm{n}}$;
    \item[(iv)] add the resulting maps and obtain $E:=\sum_{\substack{\bm{n}\in\mathbb{Z}^2 \\ 1\leq i\leq d}}E^{(i)}_{\bm{n}}\circ \mathbbm{1}_{\partial \mathcal{D}_{\bm{n}}^{(i)}}$.
\end{enumerate}
Next, we construct the solution to \eqref{eq_d2n_def} using a standard lifting argument. We let
\begin{equation*}
f_{\lambda,\phi}:=(\frac{1}{n_{e}^2}\Delta +\lambda)(E\phi)\in H^{-1}(\Omega).
\end{equation*}
Then we set $g_{\lambda,\phi}:=(\mathcal{L}^{ext}-\lambda)^{-1}f_{\lambda,\phi}$, which uniquely solves the following problem: 
\begin{equation*}
\left\{
\begin{aligned}
&-\frac{1}{n_{e}^2}\Delta g_{\lambda,\phi}-\lambda g_{\lambda,\phi}=f_{\lambda,\phi} \quad \text{in } \Omega, \\
&g_{\lambda,\phi}\big|_{+}=0 \quad \text{on } \partial \Omega,
\end{aligned}
\right.
\end{equation*}
since the resolvent is well defined for $\{\lambda:|\lambda|\leq \lambda_0\}$.
Finally, we let
\begin{equation} \label{eq_d2n_1}
u_{\lambda,\phi}:=E\phi +g_{\lambda,\phi}=E\phi +(\mathcal{L}^{ext}-\lambda)^{-1}f_{\lambda,\phi},
\end{equation}
and see that $u_{\lambda,\phi}$ solves \eqref{eq_d2n_def}.

{\color{blue}Step 3.} A direct consequence of \eqref{eq_d2n_1} is the following expression for the Dirichlet-to-Neumann map:
\begin{equation} \label{eq_d2n_2}
\mathcal{T}^{\lambda}:\quad
\phi\mapsto \frac{\partial}{\partial\nu}\Big|_{\partial \mathcal{D}}\Big(E\phi +(\mathcal{L}^{ext}-\lambda)^{-1}f_{\lambda,\phi} \Big) .
\end{equation}
This, together with the analyticity of resolvent $(\mathcal{L}^{ext}-\lambda)^{-1}$ and $f_{\lambda,\phi}$, gives the boundedness and analyticity of $\mathcal{T}^{\lambda}$.

\begin{remark} \label{rmk_explicit_d2n}
The Dirichlet-to-Neumann map $\mathcal{T}^{\lambda}$ can be computed explicitly when the inclusions take some special shapes. We briefly show the idea for the following simple case: there is a single resonator ($d=1$) that is a disk located at the center of the unit cell, the background lattice is square-type, \textit{i.e.}, $\bm{e}_1=(1,0)$, $\bm{e}_2=(0,1)$, and the frequency is fixed at $\lambda=0$. First, define the quasi-periodic Green function associated with the homogeneous space (\textit{cf.} \cite[Section 2.12]{ammari2018mathematical_method})
\begin{equation*}
G^{0,\bm{\kappa}}(\bm{x},\bm{y})
=\left\{
\begin{aligned}
&n^{2}_{e}\sum_{\bm{n}\in \mathbb{Z}^2}\frac{e^{i(2\pi\bm{n}+\bm{\kappa})\cdot (\bm{x}-\bm{y})}}{|2\pi\bm{n}+\bm{\kappa}|^2},\quad \bm{\kappa}\neq 0,\\
&n^{2}_{e}\sum_{\bm{n}\in \mathbb{Z}^2\backslash\{0\}}\frac{e^{i2\pi\bm{n}\cdot (\bm{x}-\bm{y})}}{4\pi^2|\bm{n}|^2},\quad \bm{\kappa}= 0 .
\end{aligned}
\right.
\end{equation*}
Then we set
\begin{equation*}
G^{0,\bm{\kappa}}_{ext}(\bm{x},\bm{y})
=G^{0,\bm{\kappa}}(\bm{x},\bm{y})-G^{0,\bm{\kappa}}(\bm{x},\tilde{\bm{y}}),\quad \bm{x},\bm{y}\in Y,
\end{equation*}
where $\tilde{\bm{y}}$ is the reflection image of $\bm{y}$ with respect to $\partial \mathcal{D}_{(1,1)}$ (see the definition of reflection image in \cite[Section 2.2]{evans2022pde}). By the reflection principle, it is easy to check that $G^{0,\bm{\kappa}}_{ext}(\bm{x},\bm{y})$ satisfies the Dirichlet boundary condition on the surface of the resonators $\partial\mathcal{D}$ and the quasi-periodic boundary conditions on $\partial Y$. Finally, we take the inverse Floquet transform and obtain the Green function associated with the exterior structure:
\begin{equation*}
G^{0}_{ext}(\bm{x},\bm{y})
=\int_{Y^{*}}G^{0,\bm{\kappa}}_{ext}(\bm{x},\bm{y})d\bm{\kappa},
\end{equation*}
\textit{i.e.}, $G^{0}_{ext}$ is a parametrix of the operator $\mathcal{L}^{ext}$. Integrating by parts, we can express the solution $u$ to problem \eqref{eq_d2n_def} when $\lambda=0$ in the following double-layer potential form:
\begin{equation*}
u=u_{\phi}=\int_{\partial \mathcal{D}}\frac{\partial G^{0}_{ext}}{\partial \nu_{\bm{y}}}(\bm{x},\bm{y})\phi(\bm{y}) ds(\bm{y}),\quad \bm{x}\in \Omega .
\end{equation*}
Hence,  $\mathcal{T}^{0}$ can be expressed as 
\begin{equation*}
\mathcal{T}^{0}[\phi]=\frac{\partial u_{\phi}}{\partial \nu_{\bm{x}}}\Big|_{\partial \mathcal{D}}
=\frac{\partial}{\partial \nu_{\bm{x}}}\Big|_{\partial \mathcal{D}}\int_{\partial \mathcal{D}}\frac{\partial G^{0}_{ext}}{\partial \nu_{\bm{y}}}(\bm{x},\bm{y})\phi(\bm{y}) ds(\bm{y}) .
\end{equation*}
Note that, using the expansion of the Green function in \cite[Lemma 2.91]{ammari2018mathematical_method}, the above calculation can be generalized to the case of any nonzero $\lambda$ that is sufficiently close to zero.
\end{remark}

\subsection{Proof of Proposition \ref{prop_cap_operator}}

The periodicity of $\mathcal{C}$ follows directly from \eqref{eq_d2n_2} and the fact that both the resolvent $(\mathcal{L}^{ext}-\lambda)^{-1}$ and the lifting operator $E$ are periodic. The real-valuedness of $\mathcal{C}$ follows from the fact that $\mathcal{T}^{0}[\mathbbm{1}_{\partial \mathcal{D}_{\bm{m}}}^{(i)}]$ is a real function (since the solution $u$ to \eqref{eq_d2n_def} is apparently real valued when $\phi=\mathbbm{1}_{\partial \mathcal{D}_{\bm{m}}}^{(i)}$ and $\lambda=0$). The exponential decay follows from the decay of the resolvent $(\mathcal{L}^{ext}-\lambda)^{-1}$ since $\lambda$ is separated from the spectrum of $\mathcal{L}^{ext}$. Indeed, using a standard Combes-Thomas argument as in \cite[Proposition 3.1]{qiu2025bulk}, one can show that
\begin{equation*}
\Big|\nabla(\mathcal{L}^{ext}-\lambda)^{-1}(\bm{x},\bm{y}) \Big|
\leq C_1 e^{-\beta_1 |\bm{x}-\bm{y}|},\quad \bm{n}\neq \bm{m},\quad \forall \bm{x}\in \partial \mathcal{D}_{\bm{n}}^{(i)},\bm{y}\in \partial \mathcal{D}_{\bm{m}}^{(j)} ,
\end{equation*}
where $C,\beta_1>0$ depend only on the geometry of the resonators $\mathcal{D}$ and the isolation distance from $\lambda$ to the spectrum $\text{Spec}(\mathcal{L}^{ext})$. Hence,
\begin{equation*}
\begin{aligned}
\Big|\frac{\partial}{\partial\nu}\Big|_{\partial \mathcal{D}_{\bm{n}}^{(i)}}\Big((\mathcal{L}^{ext}-\lambda)^{-1}f_{\lambda,\mathbbm{1}_{\partial \mathcal{D}_{\bm{m}}^{(i)}}} \Big)\Big|
&\leq C_1 \sup_{\bm{x}\in \partial \mathcal{D}_{\bm{n}}^{(i)},\bm{y}\in \partial \mathcal{D}_{\bm{m}}^{(j)}} e^{-\beta_1 |\bm{x}-\bm{y}|} \int_{\partial \mathcal{D}}\Big|f_{\lambda,\mathbbm{1}_{\partial \mathcal{D}_{\bm{m}}^{(i)}}}\Big| \\
&\leq C_2 e^{-\beta_2 |\bm{n}-\bm{m}|} .
\end{aligned}
\end{equation*}
On the other hand, the localness of the lifting operator $E$ constructed in Section \ref{sec:3} implies that, for $\bm{n}\neq\bm{m}$, $\text{supp}(E\mathbbm{1}_{\partial \mathcal{D}_{\bm{m}}^{(j)}})\cap \overline{\mathcal{D}_{\bm{n}}^{(i)}}=\emptyset$, and hence
\begin{equation*}
\frac{\partial}{\partial\nu}\Big|_{\partial \mathcal{D}_{\bm{n}}^{(i)}}\Big(E\mathbbm{1}_{\partial \mathcal{D}_{\bm{m}}^{(j)}}\Big)=0,\quad (\bm{n}\neq\bm{m}).
\end{equation*}
Then we deduce from \eqref{def:capop} and \eqref{eq_d2n_2} the exponential decay of the capacitance operator $\mathcal{C}$.

\section{Tight-binding approximation: Proof of Theorem \ref{thm_tight_binding_approx}}
\label{sec:4}

The idea of proof is similar to \cite[Theorem 4.5]{ammari2025nonlinear_resonance}, but generalizes it to the infinite periodic system. The first step is to develop an equivalent formulation of problem \eqref{eq_pde_station_weak}. This is achieved by introducing an auxiliary functional, where adding a volume integral makes it invertible in the following sense.
\begin{proposition}\label{prop_1st_aux_func}
Define the following functional on $H^{1}(\mathcal{D})\times H^{1}(\mathcal{D})$:
\begin{equation} \label{eq_1st_aux_form}
\mathfrak{a}^{aux,I}_{\lambda,\delta}(u,v):= (\nabla u,\nabla v)_{\mathcal{D}}+\sum_{\substack{1\leq i\leq d \\ \bm{n}\in\mathbb{Z}^2}}\int_{\mathcal{D}_{\bm{n}}^{(i)}}u\int_{\mathcal{D}_{\bm{n}}^{(i)}}\overline{v}-\lambda ( u, v)_{\mathcal{D}}-\lambda \sigma ( |u|^2 u, v)_{\mathcal{D}} -\delta \langle \mathcal{T}^{\lambda}[u\big|_{\partial \mathcal{D}}], v\rangle .
\end{equation}
There exist $\delta_1,\lambda_1,M_{1}>0$, which depend only on the geometry of resonators $\mathcal{D}$ and the nonlinearity $\sigma$, such that for $\delta<\delta_1$, $\lambda<\lambda_1$ and $\|f\|_{-1,\mathcal{D}}\leq M_{1}$, there exists $u_{f}\in H^1(\mathcal{D})$ such that $\mathfrak{a}^{aux,I}_{\lambda,\delta}(u_{f},v)=(f,v)_{\mathcal{D}}$ for all $v\in H^1(\mathcal{D})$. In particular, $u_{f}$ satisfies the estimate
\begin{equation} \label{eq_1st_aux_sol_estimate}
\|u_f\|_{1,\mathcal{D}}\leq C\|f\|_{-1,\mathcal{D}}, \quad \text{with C depending only on $\sigma,\lambda_1,\delta_1,M_1$.}
\end{equation}
Moreover, under the above conditions, the solution to $\mathfrak{a}^{aux,I}_{\lambda,\delta}(u_{f},v)=(f,v)_{\mathcal{D}}$ that satisfies the bound \eqref{eq_1st_aux_sol_estimate} is unique.
\end{proposition}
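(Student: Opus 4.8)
The plan is to realize $\mathfrak{a}^{aux,I}_{\lambda,\delta}$ as a perturbation of a coercive symmetric form and to solve $\mathfrak{a}^{aux,I}_{\lambda,\delta}(u_f,\cdot)=(f,\cdot)_{\mathcal{D}}$ by a Banach fixed-point argument, with the smallness of $\lambda$, $\delta$ and $\|f\|_{-1,\mathcal{D}}$ supplying the contraction. First I would isolate the quadratic form
\[
\mathfrak{b}(u,v):=(\nabla u,\nabla v)_{\mathcal{D}}+\sum_{\substack{1\leq i\leq d\\ \bm{n}\in\mathbb{Z}^2}}\int_{\mathcal{D}_{\bm{n}}^{(i)}}u\,\int_{\mathcal{D}_{\bm{n}}^{(i)}}\overline{v},
\]
and prove the coercivity estimate $\mathfrak{b}(u,u)\geq c_0\|u\|_{1,\mathcal{D}}^2$ with $c_0>0$ depending only on the geometry of $\mathcal{D}$. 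Since $\mathcal{D}$ is a periodic array of translates of the finitely many fixed $C^1$ domains $\mathcal{D}_{prim}^{(i)}$, the Poincar\'e--Wirtinger inequality on each component (mean value subtracted) holds with a constant uniform in $\bm{n}$ and $i$; together with $\mathrm{vol}(\mathcal{D}_{prim}^{(i)})=1$ this gives $\|u\|_{1,\mathcal{D}_{\bm{n}}^{(i)}}^2\leq C_*\big(\|\nabla u\|_{0,\mathcal{D}_{\bm{n}}^{(i)}}^2+|\int_{\mathcal{D}_{\bm{n}}^{(i)}}u|^2\big)$, and summing over $\bm{n},i$ (using the $\ell^2$-orthogonal decomposition $H^1(\mathcal{D})=\bigoplus_{\bm{n},i}H^1(\mathcal{D}_{\bm{n}}^{(i)})$) yields the claim with $c_0=1/C_*$. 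This is exactly what the extra volume term in $\mathfrak{a}^{aux,I}$ buys us.

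Next I would show that the full linear part $\mathfrak{a}^{lin}_{\lambda,\delta}(u,v):=\mathfrak{b}(u,v)-\lambda(u,v)_{\mathcal{D}}-\delta\langle\mathcal{T}^{\lambda}[u|_{\partial\mathcal{D}}],v\rangle$ stays invertible for $\lambda,\delta$ small. The term $|\lambda(u,v)_{\mathcal{D}}|\leq\lambda\|u\|_{1,\mathcal{D}}\|v\|_{1,\mathcal{D}}$ is negligible for small $\lambda$. For the boundary term, the trace map $H^1(\mathcal{D})\to H^{1/2}(\partial\mathcal{D})$ is bounded with a constant uniform in the number of components (again by the orthogonal decomposition and periodicity), and by Proposition \ref{prop_d2n_map} the operator norm of $\mathcal{T}^{\lambda}\colon H^{1/2}(\partial\mathcal{D})\to H^{-1/2}(\partial\mathcal{D})$ is bounded uniformly for $|\lambda|\leq\lambda_0/2$; hence $|\delta\langle\mathcal{T}^{\lambda}[u|_{\partial\mathcal{D}}],v\rangle|\leq\delta\,C_{\mathcal{T}}\|u\|_{1,\mathcal{D}}\|v\|_{1,\mathcal{D}}$. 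Choosing $\lambda_1,\delta_1$ so small that $\lambda_1+\delta_1 C_{\mathcal{T}}\leq c_0/2$, the form $\mathfrak{a}^{lin}_{\lambda,\delta}$ is bounded on $H^1(\mathcal{D})$ and satisfies $\mathrm{Re}\,\mathfrak{a}^{lin}_{\lambda,\delta}(u,u)\geq(c_0/2)\|u\|_{1,\mathcal{D}}^2$; by Lax--Milgram the associated operator $\mathcal{A}_{\lambda,\delta}\colon H^1(\mathcal{D})\to(H^1(\mathcal{D}))^*$ is boundedly invertible with $\|\mathcal{A}_{\lambda,\delta}^{-1}\|\leq 2/c_0$.

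With this in hand the equation becomes the fixed-point problem $u=\Phi(u):=\mathcal{A}_{\lambda,\delta}^{-1}\big(f+\lambda\sigma\,|u|^2u\big)$, with $|u|^2u$ viewed in $(H^1(\mathcal{D}))^*$. The cubic term satisfies $\||u|^2u\|_{-1,\mathcal{D}}\leq C_N\|u\|_{1,\mathcal{D}}^3$ and $\||u|^2u-|w|^2w\|_{-1,\mathcal{D}}\leq C_N(\|u\|_{1,\mathcal{D}}^2+\|w\|_{1,\mathcal{D}}^2)\|u-w\|_{1,\mathcal{D}}$, via H\"older's inequality $|\int_{\mathcal{D}}|u|^2u\,\overline{v}|\leq\|u\|_{L^4(\mathcal{D})}^3\|v\|_{L^4(\mathcal{D})}$ and the two-dimensional embedding $H^1(\mathcal{D})\hookrightarrow L^4(\mathcal{D})$, the latter holding with a uniform constant because $\|u\|_{L^4(\mathcal{D})}^4=\sum_{\bm{n},i}\|u\|_{L^4(\mathcal{D}_{\bm{n}}^{(i)})}^4\leq C\sum_{\bm{n},i}\|u\|_{1,\mathcal{D}_{\bm{n}}^{(i)}}^4\leq C\big(\sum_{\bm{n},i}\|u\|_{1,\mathcal{D}_{\bm{n}}^{(i)}}^2\big)^2=C\|u\|_{1,\mathcal{D}}^4$. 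Now fix a radius $R>0$ depending only on $c_0$, shrink $\lambda_1$ further so that $(2/c_0)\lambda_1\sigma C_N R^2\leq 1/4$, and set $M_1:=c_0 R/4$. Then for $\|f\|_{-1,\mathcal{D}}\leq M_1$ the map $\Phi$ sends the closed ball $\overline{B_R}\subset H^1(\mathcal{D})$ into itself and is a contraction with constant $\leq 1/2$, so it has a unique fixed point $u_f\in\overline{B_R}$; this is the desired solution, and the estimate \eqref{eq_1st_aux_sol_estimate} follows from $\|u_f\|_{1,\mathcal{D}}\leq(2/c_0)\|f\|_{-1,\mathcal{D}}+(2/c_0)\lambda\sigma C_N\|u_f\|_{1,\mathcal{D}}^3\leq(2/c_0)\|f\|_{-1,\mathcal{D}}+\tfrac14\|u_f\|_{1,\mathcal{D}}$, giving $C:=8/(3c_0)$. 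For uniqueness, any solution $\tilde u$ with $\|\tilde u\|_{1,\mathcal{D}}\leq C\|f\|_{-1,\mathcal{D}}\leq CM_1\leq R$ is automatically a fixed point of $\Phi$ in $\overline{B_R}$, hence equals $u_f$. I expect the only genuinely delicate point to be the \emph{uniformity of all the constants} --- the coercivity constant of $\mathfrak{b}$, the trace constant, and the $H^1\hookrightarrow L^4$ embedding constant --- over the infinite periodic array $\mathcal{D}$, which is what separates this from the finite-cluster situation of \cite[Theorem 4.5]{ammari2025nonlinear_resonance}; it is handled throughout by the orthogonal decomposition $H^1(\mathcal{D})=\bigoplus_{\bm{n},i}H^1(\mathcal{D}_{\bm{n}}^{(i)})$ together with the nesting $\ell^2\hookrightarrow\ell^4$, exploiting that there are only finitely many resonator shapes up to translation. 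The remainder is a routine Lax--Milgram and contraction-mapping scheme.
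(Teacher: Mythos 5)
Your proof is correct, and it takes a recognizably different (and somewhat more streamlined) route than the paper. The common groundwork is identical: both establish coercivity of the form $\mathfrak{b}(u,v)=(\nabla u,\nabla v)_{\mathcal{D}}+\sum_{\bm n,i}\int_{\mathcal D_{\bm n}^{(i)}}u\int_{\mathcal D_{\bm n}^{(i)}}\overline v$ via Poincar\'e--Wirtinger on each resonator (the paper's inequalities \eqref{eq_4th_form_invert_proof_1}--\eqref{eq_4th_form_invert_proof_3}), both rely on uniform trace and $H^1\hookrightarrow L^4$ constants coming from the orthogonal decomposition over translates and the $\ell^4\hookrightarrow\ell^2$ nesting, and both use the uniform boundedness of $\mathcal T^\lambda$ from Proposition \ref{prop_d2n_map}. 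Where you diverge is the fixed-point set-up. The paper iterates the \emph{partially linearized} form $\mathfrak a^{aux,IV}_{u^{[k]},\lambda,\delta}(u^{[k+1]},v)=(f,v)_{\mathcal D}$, in which the cubic is frozen as $\lambda\sigma(|u^{[k]}|^2u^{[k+1]},v)_{\mathcal D}$ --- the modulus is lagged but the unknown stays linear, so the iteration operator changes at each step and one needs Lemma \ref{lem_4th_form_invert} to invert $\mathfrak a^{aux,IV}_{\phi,\lambda,\delta}$ uniformly in $\|\phi\|$; uniform bounds on $u^{[k]}$ are then obtained via a Gronwall argument. You instead fix a single linear operator $\mathcal A_{\lambda,\delta}$ (coming from $\mathfrak b-\lambda(\cdot,\cdot)-\delta\langle\mathcal T^\lambda\cdot,\cdot\rangle$), move the entire cubic $\lambda\sigma|u|^2u$ to the right-hand side as data, and run a vanilla Banach contraction on a ball $\overline{B_R}$. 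Your version buys simplicity: one Lax--Milgram application, an explicit ball argument in place of Gronwall, and uniqueness falling out of the contraction for free; the paper's version keeps more of the operator structure in the bilinear form, which is the natural thing to do when one wants to reuse the same invertibility lemma for the more complicated auxiliary form $\mathfrak a^{aux,V}$ in Proposition \ref{prop_ansatz_aux_sol}. Both approaches are sound; yours is shorter and self-contained for this particular proposition.
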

\begin{proof}
See Section \ref{sec_4_1}.
\end{proof}
\begin{remark}
We will not stress the dependence of constants on the geometry of the resonators $\mathcal{D}$ and nonlinearity $\sigma$ in the sequel, as they are fixed throughout this paper.
\end{remark}
With this auxiliary form in hand, it immediately follows that the original eigenvalue problem \eqref{eq_pde_station_weak} is transformed to the following version.
\begin{proposition} \label{prop_exact_equivalence}
Let $\delta_1,\lambda_1,M_{1}>0$ be the constants fixed in Proposition \ref{prop_1st_aux_func}. Then, for $\delta<\delta_1$, the following statements are equivalent:
\begin{itemize}
    \item[i)] the eigenvalue problem \eqref{eq_pde_station_weak} has a solution $(u,\lambda)$ with $\lambda<\lambda_1$ and $\|u\|_{L^2(\mathcal{D})}\leq M_{1}$;
    \item[ii)] the solution $u$ to $\mathfrak{a}^{aux,I}_{\lambda,\delta}(u,v)=(f,v)_{\mathcal{D}}$ with $f=\sum_{\substack{1\leq i\leq d \\ \bm{n}\in\mathbb{Z}^2}}a_{\bm{n}}^{(i)}\mathbbm{1}_{\mathcal{D}_{\bm{n}}^{(i)}}\neq 0$ ($\|a\|_{\ell^2(\mathbb{Z}^2;\mathbb{C}^d)}\leq M_{1}$) satisfies $\int_{\mathcal{D}}u\mathbbm{1}_{\mathcal{D}_{\bm{n}}^{(i)}}=a_{\bm{n}}^{(i)}$.
\end{itemize}
\end{proposition}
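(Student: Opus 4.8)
The plan is to observe that the auxiliary form \eqref{eq_1st_aux_form} is $\mathfrak{a}^{cont}_{\lambda,\delta}$ plus a single extra term which is exactly the $L^2(\mathcal{D})$-orthogonal projection onto functions that are constant on each resonator. Writing
\[
\Pi u:=\sum_{\substack{1\le i\le d\\ \bm{n}\in\mathbb{Z}^2}}\Big(\int_{\mathcal{D}_{\bm{n}}^{(i)}}u\Big)\,\mathbbm{1}_{\mathcal{D}_{\bm{n}}^{(i)}}
\]
(which, since $\mathrm{vol}(\mathcal{D}_{\bm{n}}^{(i)})=1$, is the orthogonal projection of $u$ onto the piecewise-constant functions on $\mathcal{D}$), one has the exact identity $\mathfrak{a}^{aux,I}_{\lambda,\delta}(u,v)=\mathfrak{a}^{cont}_{\lambda,\delta}(u,v)+(\Pi u,v)_{\mathcal{D}}$ for all $u,v\in H^1(\mathcal{D})$. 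Two elementary facts accompany it. First, if $f=\sum_{i,\bm{n}}a_{\bm{n}}^{(i)}\mathbbm{1}_{\mathcal{D}_{\bm{n}}^{(i)}}$ then $\Pi f=f$, and since the components $\mathcal{D}_{\bm{n}}^{(i)}$ are pairwise disjoint and separated, $f\in H^1(\mathcal{D})$ with $\nabla f\equiv 0$, so $\|f\|_{1,\mathcal{D}}=\|f\|_{0,\mathcal{D}}$ and hence, testing the dual norm against $f/\|f\|_{0,\mathcal{D}}$, $\|f\|_{-1,\mathcal{D}}=\|f\|_{0,\mathcal{D}}=\|a\|_{\ell^2(\mathbb{Z}^2;\mathbb{C}^d)}$. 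Second, $\|\Pi u\|_{0,\mathcal{D}}\le\|u\|_{0,\mathcal{D}}$ by Cauchy--Schwarz. In particular, whenever $\|u\|_{0,\mathcal{D}}\le M_1$ the associated $f=\Pi u$ satisfies $\|f\|_{-1,\mathcal{D}}\le M_1$, so Proposition \ref{prop_1st_aux_func} applies to it.

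For the implication i)$\Rightarrow$ii) I would take a nontrivial solution $(u,\lambda)$ of \eqref{eq_pde_station_weak} with $\lambda<\lambda_1$ and $\|u\|_{0,\mathcal{D}}\le M_1$, set $a_{\bm{n}}^{(i)}:=\int_{\mathcal{D}_{\bm{n}}^{(i)}}u$ and $f:=\Pi u$ (so $\|a\|_{\ell^2}\le M_1$), and use the identity to get $\mathfrak{a}^{aux,I}_{\lambda,\delta}(u,v)=0+(f,v)_{\mathcal{D}}$ for all $v\in H^1(\mathcal{D})$. It then remains to identify $u$ with the canonical small solution $u_f$ of Proposition \ref{prop_1st_aux_func}, after which $\int_{\mathcal{D}}u\,\mathbbm{1}_{\mathcal{D}_{\bm{n}}^{(i)}}=a_{\bm{n}}^{(i)}$ by construction; also $f\ne 0$, since $f=0$ would force $u=u_0=0$ by the uniqueness part of Proposition \ref{prop_1st_aux_func}. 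To identify $u$ with $u_f$ I would test $\mathfrak{a}^{cont}_{\lambda,\delta}(u,u)=0$: boundedness of $\mathcal{T}^{\lambda}$ (Proposition \ref{prop_d2n_map}) bounds $|\delta\langle\mathcal{T}^{\lambda}u,u\rangle|$ by $C\delta\|u\|_{1,\mathcal{D}}^2$, the two-dimensional Gagliardo--Nirenberg inequality applied componentwise bounds $\lambda\sigma\int_{\mathcal{D}}|u|^4$ by $C\lambda\sigma\|u\|_{0,\mathcal{D}}^2\|u\|_{1,\mathcal{D}}^2$, and $\|u\|_{0,\mathcal{D}}\le M_1$; absorbing these for $\delta,\lambda,M_1$ small gives $\|\nabla u\|_{0,\mathcal{D}}^2\le\mu\,\|u\|_{0,\mathcal{D}}^2$ with $\mu\to 0$ as the thresholds shrink. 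Feeding this into $\|u\|_{0,\mathcal{D}}^2=\|u-\Pi u\|_{0,\mathcal{D}}^2+\|f\|_{0,\mathcal{D}}^2\le C_P\|\nabla u\|_{0,\mathcal{D}}^2+\|f\|_{0,\mathcal{D}}^2$ (per-resonator Poincar\'e, with $C_P$ uniform over the finitely many reference shapes) and absorbing $C_P\mu\|u\|_{0,\mathcal{D}}^2$ gives the a priori bound $\|u\|_{1,\mathcal{D}}\le C\|f\|_{0,\mathcal{D}}=C\|f\|_{-1,\mathcal{D}}$; with the thresholds taken as in Proposition \ref{prop_1st_aux_func} (shrunk if necessary) this places $u$ in the ball on which that solution is unique, so $u=u_f$.

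For ii)$\Rightarrow$i), I would start from $a\ne 0$ with $\|a\|_{\ell^2}\le M_1$, set $f:=\sum_{i,\bm{n}}a_{\bm{n}}^{(i)}\mathbbm{1}_{\mathcal{D}_{\bm{n}}^{(i)}}$ (so $\|f\|_{-1,\mathcal{D}}=\|a\|_{\ell^2}\le M_1$ and Proposition \ref{prop_1st_aux_func} gives $u=u_f$), and use the hypothesis $\int_{\mathcal{D}}u\,\mathbbm{1}_{\mathcal{D}_{\bm{n}}^{(i)}}=a_{\bm{n}}^{(i)}$, which says precisely $\Pi u=f$. The identity then yields $\mathfrak{a}^{cont}_{\lambda,\delta}(u,v)=\mathfrak{a}^{aux,I}_{\lambda,\delta}(u,v)-(\Pi u,v)_{\mathcal{D}}=(f,v)_{\mathcal{D}}-(f,v)_{\mathcal{D}}=0$ for all $v\in H^1(\mathcal{D})$, i.e. $(u,\lambda)$ solves \eqref{eq_pde_station_weak}; $u$ is nontrivial because $\Pi u=f\ne 0$; and $\|u\|_{0,\mathcal{D}}\le M_1$ follows from the bound $\|u\|_{1,\mathcal{D}}\le C\|f\|_{-1,\mathcal{D}}=C\|a\|_{\ell^2}$ of \eqref{eq_1st_aux_sol_estimate} together with $\|u-\Pi u\|_{0,\mathcal{D}}=\mathcal{O}(\lambda+\delta)\|f\|_{0,\mathcal{D}}$, so that $\|u\|_{0,\mathcal{D}}$ equals $\|a\|_{\ell^2}$ up to a vanishing correction, after a harmless shrinking of the thresholds.

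The genuine obstacle, in my view, is not the algebraic identity --- which is immediate --- but establishing that a continuous soliton obeying only the $L^2$ smallness is the \emph{same} function as the canonical solution $u_f$ of Proposition \ref{prop_1st_aux_func}; this hinges on the a priori bound $\|u\|_{1,\mathcal{D}}\le C\|f\|_{-1,\mathcal{D}}$ above, whose subtlety is that $(\nabla\cdot,\nabla\cdot)_{\mathcal{D}}$ by itself is not coercive on $H^1(\mathcal{D})$ --- coercivity is recovered only after adding $\|\Pi\cdot\|_{0,\mathcal{D}}^2$ via the per-resonator Poincar\'e inequality --- so the estimate must be run in the right order (first $\|\nabla u\|^2\lesssim\mu\|u\|^2$ with $\mu$ small, then absorb the Poincar\'e term), and one must track the thresholds $\delta_1,\lambda_1,M_1$ carefully so that the ``small-solution'' regimes in i) and ii) match.
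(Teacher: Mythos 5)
The paper offers no explicit proof of this proposition --- it asserts that the equivalence ``immediately follows'' from Proposition \ref{prop_1st_aux_func} --- so your proposal should be read as supplying what the authors left implicit. Your central identity, $\mathfrak{a}^{aux,I}_{\lambda,\delta}(u,v)=\mathfrak{a}^{cont}_{\lambda,\delta}(u,v)+(\Pi u,v)_{\mathcal{D}}$ with $\Pi$ the $L^2$-orthogonal projection onto piecewise constants (thanks to $\mathrm{vol}(\mathcal{D}^{(i)}_{prim})=1$), is exactly the algebraic content the paper relies on, and once that is in place the directions ii)$\,\Rightarrow\,$i) and i)$\,\Rightarrow\,$ii) at the level of ``$u$ is \emph{a} solution'' are immediate, as you say. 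Your further observation that $\|f\|_{-1,\mathcal{D}}=\|f\|_{0,\mathcal{D}}=\|a\|_{\ell^2}$ for piecewise-constant $f$ is also correct and is what lets the $M_1$ thresholds in Proposition \ref{prop_1st_aux_func} transfer.

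Where you add genuine value is in flagging that, for the way Proposition \ref{prop_exact_equivalence} is subsequently used in Step~1 of the proof of Theorem \ref{thm_tight_binding_approx} --- where one expands \emph{the canonical} $u_f$ of Proposition \ref{prop_ansatz_aux_sol} --- a continuous soliton $u$ known only to satisfy $\|u\|_{0,\mathcal{D}}\le M_1$ must first be identified with that canonical small solution, and the uniqueness clause of Proposition \ref{prop_1st_aux_func} applies only to solutions obeying the bound \eqref{eq_1st_aux_sol_estimate}. Your a priori argument closing this gap is correct: testing $\mathfrak{a}^{cont}_{\lambda,\delta}(u,u)=0$, absorbing the boundary term by $C\delta\|u\|_{1,\mathcal{D}}^2$ (trace theorem and boundedness of $\mathcal{T}^\lambda$) and the Kerr term by $C\lambda\sigma\|u\|_{0,\mathcal{D}}^2\|u\|_{1,\mathcal{D}}^2$ (2D Gagliardo--Nirenberg applied per resonator with a uniform constant over translates), gives $\|\nabla u\|_{0,\mathcal{D}}^2\le\mu\|u\|_{0,\mathcal{D}}^2$ with $\mu$ small; combining with the Pythagorean identity $\|u\|_{0,\mathcal{D}}^2=\|u-\Pi u\|_{0,\mathcal{D}}^2+\|\Pi u\|_{0,\mathcal{D}}^2$ and the per-resonator Poincar\'e inequality then yields $\|u\|_{1,\mathcal{D}}\lesssim\|f\|_{-1,\mathcal{D}}$, after which uniqueness applies. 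The order of absorption you insist on (gradient first, then Poincar\'e) is indeed necessary because $(\nabla\cdot,\nabla\cdot)_{\mathcal{D}}$ alone is not coercive; this matches exactly the role the paper assigns to $\mathfrak{a}^{aux,II}$ in the coercivity estimate \eqref{eq_4th_form_invert_proof_3}. The only small caveat is the one you already note: both implications cost a harmless shrinking of $\lambda_1,\delta_1,M_1$ to keep the constants consistent, which the statement tolerates since those thresholds are at the authors' disposal. Overall your proposal is correct and faithfully expands the paper's one-line justification, making explicit the a priori bound that the phrase ``immediately follows'' elides.
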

This new formulation of \eqref{eq_pde_station_weak} has the advantage of transforming the (continuous) problem originally posed on $L^2(\mathbb{R}^2)$ into a (discrete) problem posed on $\ell^2(\mathbb{Z}^2;\mathbb{C}^{d})$. We have a clear asymptotic expansion of the solution to the new formulation.
\begin{proposition} \label{prop_ansatz_aux_sol}
Let $\delta_1,\lambda_1,M_{1}>0$ be the constants fixed in Proposition \ref{prop_1st_aux_func}. Then, for $\delta<\delta_1$, $\lambda<\lambda_1$ and $f=\sum_{\substack{1\leq i\leq d \\ \bm{n}\in\mathbb{Z}^2}}a_{\bm{n}}^{(i)}\mathbbm{1}_{\mathcal{D}_{\bm{n}}^{(i)}}$ with $\|a\|_{\ell^2(\mathbb{Z}^2;\mathbb{C}^{d})}\leq M_{1}$, the solution to $$\mathfrak{a}^{aux,I}_{\lambda,\delta}(u_{f},v)=(f,v)_{\mathcal{D}}$$ admits the following expansion:
\begin{equation} \label{eq_ansatz_aux_sol_1}
u_{f}=u^{(0,0)}_{a}+\lambda u^{(1,0)}_{a} +\delta u^{(0,1)}_{a} +r_{a},
\end{equation}
where
\begin{equation} \label{eq_ansatz_aux_sol_2}
u^{(0,0)}_{a}:=\sum_{\substack{1\leq i\leq d \\ \bm{n}\in\mathbb{Z}^2}}a_{\bm{n}}^{(i)}\mathbbm{1}_{\mathcal{D}_{\bm{n}}^{(i)}} ,\quad
u^{(1,0)}_{a}:=\sum_{\substack{1\leq i\leq d \\ \bm{n}\in\mathbb{Z}^2}}(a_{\bm{n}}^{(i)}+\sigma |a_{\bm{n}}^{(i)}|^2a_{\bm{n}}^{(i)})\mathbbm{1}_{\mathcal{D}_{\bm{n}}^{(i)}},
\end{equation}
$u^{(0,1)}_{a}$ is the unique solution to the following linear problem:
\begin{equation} \label{eq_2nd_aux_form}
\mathfrak{a}^{aux,II}(u^{(0,1)}_{a},v):=(\nabla u^{(0,1)}_{a},\nabla v)_{\mathcal{D}}+\sum_{\substack{1\leq i\leq d \\ \bm{n}\in\mathbb{Z}^2}}\int_{\mathcal{D}_{\bm{n}}^{(i)}}u^{(0,1)}_{a}\int_{\mathcal{D}_{\bm{n}}^{(i)}}\overline{v} =        \langle \mathcal{T}^{0}[u^{(0,0)}_{a}\big|_{\partial \mathcal{D}}], v\rangle , 
\end{equation}
and satisfies
\begin{equation} \label{eq_u01_boundary_int}
\int_{\mathcal{D}_{\bm{n}}^{(i)}}u^{(0,1)}_{a}=\sum_{\substack{1\leq j\leq d \\ \bm{n}\in\mathbb{Z}^2}}a_{\bm{m}}^{(j)}\int_{\partial \mathcal{D}}\mathcal{T}^{0}[\mathbbm{1}_{\partial \mathcal{D}_{\bm{m}}^{(j)}}]\cdot \mathbbm{1}_{\partial \mathcal{D}_{\bm{n}}^{(i)}} .
\end{equation}
The remainder $r_{a}$ solves the following nonlinear problem:
\begin{equation} \label{eq_3rd_aux_form}
\begin{aligned}
\mathfrak{a}^{aux,III}_{a,\lambda,\delta}(r_{a},v)
:=& (\nabla r_{a},\nabla v)_{\mathcal{D}}+\sum_{\substack{1\leq i\leq d \\ \bm{n}\in\mathbb{Z}^2}}\int_{\mathcal{D}_{\bm{n}}^{(i)}}r_{a}\int_{\mathcal{D}_{\bm{n}}^{(i)}}\overline{v}-\lambda(r_{a},v)_{\mathcal{D}}-\lambda\sigma (|u^{(0,0)}_{a}|^2 r_{a},v)_{\mathcal{D}} \\
&-2\lambda\sigma \left(\Re(r_{a}\overline{u^{(0,0)}_{a}})r_{a},v\right)_{\mathcal{D}}-2\lambda\sigma \left(\Re((\lambda u^{(1,0)}_{a}+\delta u^{(0,1)}_{a})\overline{u^{(0,0)}_{a}})r_{a},v\right)_{\mathcal{D}} \\
&-2\lambda\sigma \left((u^{(0,0)}_{a}+\lambda u^{(1,0)}_{a}+\delta u^{(0,1)}_{a})\Re(\overline{u^{(0,0)}_{a}}r_{a}),v\right)_{\mathcal{D}} -\lambda\sigma (|\lambda u^{(1,0)}_{a}+\delta u^{(0,1)}_{a}|^2 r_{a},v)_{\mathcal{D}} \\
&-2\lambda\sigma\left( (u^{(0,0)}_{a}+\lambda u^{(1,0)}_{a}+\delta u^{(0,1)}_{a})\Re( (\overline{\lambda u^{(1,0)}_{a}+\delta u^{(0,1)}_{a}})r_{a} ),v \right)_{\mathcal{D}} \\
&-\lambda\sigma \left( |r_{a}|^2(u^{(0,0)}_{a}+\lambda u^{(1,0)}_{a}+\delta u^{(0,1)}_{a}),v \right)_{\mathcal{D}} -\lambda\sigma (|r_{a}|^2r_{a},v)_{\mathcal{D}} \\
&-2\lambda\sigma \left( r_{a}\Re( (\overline{\lambda u^{(1,0)}_{a}+\delta u^{(0,1)}_{a}})r_{a},v ) \right)_{\mathcal{D}} -\delta\langle \mathcal{T}^{\lambda}[r_{a}\big|_{\partial \mathcal{D}}],v \rangle \\
=& (f_{a,\lambda,\delta},v)_{\mathcal{D}} 
\end{aligned}
\end{equation}
with
\begin{equation} \label{eq_ansatz_aux_sol_3}
\begin{aligned}
f_{a,\lambda,\delta}
:=&\lambda^2 u^{(1,0)}_{a}+\lambda\delta u^{(0,1)}_{a} + \lambda\sigma |u^{(0,0)}_{a}|^2(\lambda u^{(1,0)}_{a}+\delta u^{(0,1)}_{a}) \\
&+ 2\lambda\sigma \Re\left(\overline{u^{(0,0)}_{a}}(\lambda u^{(1,0)}_{a}+\delta u^{(0,1)}_{a}) \right)\cdot (u^{(0,0)}_{a}+\lambda u^{(1,0)}_{a}+\delta u^{(0,1)}_{a}) \\
&+\lambda\sigma |\lambda u^{(1,0)}_{a}+\delta u^{(0,1)}_{a}|^2 (u^{(0,0)}_{a}+\lambda u^{(1,0)}_{a}+\delta u^{(0,1)}_{a}) \\
&+\delta (\mathcal{T}^{\lambda}-\mathcal{T}^{0})[u^{(0,0)}_{a}\big|_{\partial \mathcal{D}}]\mathbbm{1}_{\partial \mathcal{D}}
+\delta\mathcal{T}^{\lambda}[\lambda u^{(1,0)}_{a}\big|_{\partial \mathcal{D}}+\delta u^{(0,1)}_{a}\big|_{\partial \mathcal{D}}]\mathbbm{1}_{\partial \mathcal{D}}.
\end{aligned}
\end{equation}
The nonlinear form $\mathfrak{a}^{aux,III}_{a,\lambda,\delta}$ is invertible in the following sense: there exist $\delta_2,\lambda_2,M_{2},$ and $M_{3}>0$ such that for $\delta<\delta_2$, $\lambda<\lambda_2$ and $\|a\|_{\ell^{2}(\mathbb{Z}^2;\mathbb{C}^{d})}<M_2$, the nonlinear problem $$\mathfrak{a}^{aux,III}_{a,\lambda,\delta}(u,v)=(f,v)_{\mathcal{D}}$$ has a solution $u=u_{f}\in H^1(\mathcal{D})$ for any $f\in (H^1(\mathcal{D}))^{*}$ with $\|f\|_{-1,\mathcal{D}}< M_3$. In particular, $u_{f}$ satisfies the estimate
\begin{equation} \label{eq_ansatz_aux_sol_4}
\|u_{f}\|_{1,\mathcal{D}}\leq C\|f\|_{-1,\mathcal{D}} \quad \text{with C depending only on $\lambda_2,\delta_2,M_2,M_3$.}
\end{equation}
Moreover, under the above conditions, the solution to $\mathfrak{a}^{aux,III}_{a,\lambda,\delta}(u,v)=(f,v)_{\mathcal{D}}$ that satisfies the bound \eqref{eq_ansatz_aux_sol_4} is unique.
\end{proposition}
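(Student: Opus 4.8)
The statement splits into three largely independent tasks: (b) the well-posedness of $u^{(0,1)}_a$ through \eqref{eq_2nd_aux_form} together with the identity \eqref{eq_u01_boundary_int}; (a) the algebraic expansion \eqref{eq_ansatz_aux_sol_1}, i.e.\ that $r_a:=u_f-u^{(0,0)}_a-\lambda u^{(1,0)}_a-\delta u^{(0,1)}_a$ solves \eqref{eq_3rd_aux_form} with exactly the stated right-hand side; and (c) the invertibility of $\mathfrak a^{aux,III}_{a,\lambda,\delta}$ with the bound \eqref{eq_ansatz_aux_sol_4}. I would carry them out in the order (b), (a), (c). For (b), the sesquilinear form $\mathfrak a^{aux,II}$ is plainly bounded and Hermitian on $H^1(\mathcal D)$, and it is coercive: on each component $\mathcal D^{(i)}_{\bm n}$ the Poincar\'e inequality gives $\|w\|^2_{0,\mathcal D^{(i)}_{\bm n}}\lesssim\|\nabla w\|^2_{0,\mathcal D^{(i)}_{\bm n}}+\big|\int_{\mathcal D^{(i)}_{\bm n}}w\big|^2$ with a constant independent of $\bm n$ (all components being translates of the finitely many $\mathcal D^{(i)}_{prim}$), and summing over $i,\bm n$ yields $\|w\|^2_{1,\mathcal D}\lesssim\mathfrak a^{aux,II}(w,w)$. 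Since $\|u^{(0,0)}_a|_{\partial\mathcal D}\|_{1/2,\partial\mathcal D}\lesssim\|a\|_{\ell^2}$ and $\mathcal T^{0}$ is bounded (Proposition \ref{prop_d2n_map}), $v\mapsto\langle\mathcal T^{0}[u^{(0,0)}_a|_{\partial\mathcal D}],v\rangle$ is a bounded functional, so Lax--Milgram gives a unique $u^{(0,1)}_a\in H^1(\mathcal D)$ with $\|u^{(0,1)}_a\|_{1,\mathcal D}\lesssim\|a\|_{\ell^2}$. Testing \eqref{eq_2nd_aux_form} against $v=\mathbbm 1_{\mathcal D^{(i)}_{\bm n}}$ (which lies in $H^1(\mathcal D)$ with vanishing gradient) and using $\mathrm{vol}(\mathcal D^{(i)}_{\bm n})=1$ yields \eqref{eq_u01_boundary_int} at once.

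For (a) I would substitute the ansatz into $\mathfrak a^{aux,I}_{\lambda,\delta}(u_f,v)=(f,v)_{\mathcal D}$ and collect terms. Three groups cancel by construction: since $u^{(0,0)}_a$ and $u^{(1,0)}_a$ are piecewise constant their gradients vanish, and $\mathrm{vol}(\mathcal D^{(i)}_{prim})=1$ gives $\sum_{i,\bm n}\int_{\mathcal D^{(i)}_{\bm n}}u^{(0,0)}_a\int_{\mathcal D^{(i)}_{\bm n}}\overline v=(f,v)_{\mathcal D}$, killing the leading volume term; the choice $u^{(1,0)}_a=u^{(0,0)}_a+\sigma|u^{(0,0)}_a|^2u^{(0,0)}_a$ makes the order-$\lambda$ volume contribution cancel $\lambda(u^{(0,0)}_a,v)_{\mathcal D}+\lambda\sigma(|u^{(0,0)}_a|^2u^{(0,0)}_a,v)_{\mathcal D}$; and \eqref{eq_2nd_aux_form} makes the order-$\delta$ contribution cancel $\delta\langle\mathcal T^{0}[u^{(0,0)}_a|_{\partial\mathcal D}],v\rangle$. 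Writing $p:=u^{(0,0)}_a+\lambda u^{(1,0)}_a+\delta u^{(0,1)}_a$ and expanding $|u_f|^2u_f=|p+r_a|^2(p+r_a)$ into its $r_a$-free, linear-, quadratic- and cubic-in-$r_a$ parts, and further splitting $p=u^{(0,0)}_a+(\lambda u^{(1,0)}_a+\delta u^{(0,1)}_a)$ inside the $r_a$-free and linear parts, reproduces precisely the terms of $\mathfrak a^{aux,III}_{a,\lambda,\delta}$; everything not involving $r_a$ (together with $(\mathcal T^{\lambda}-\mathcal T^{0})[u^{(0,0)}_a|_{\partial\mathcal D}]$, $\mathcal T^{\lambda}$ applied to $\lambda u^{(1,0)}_a+\delta u^{(0,1)}_a$, and $\lambda^2u^{(1,0)}_a+\lambda\delta u^{(0,1)}_a$) is moved to the right-hand side and equals $f_{a,\lambda,\delta}$. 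This step is purely bookkeeping.

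For (c), decompose $\mathfrak a^{aux,III}_{a,\lambda,\delta}(r,v)=\mathfrak a^{aux,II}(r,v)+\mathcal N_{a,\lambda,\delta}(r,v)$, where every term of $\mathcal N$ carries a factor $\lambda$, $\lambda(\lambda+\delta)$, or $\delta$. Using the embeddings $H^1(\mathcal D)\hookrightarrow L^p(\mathcal D)$ for $2\le p<\infty$ (uniform because $\mathcal D$ is a disjoint union of translates of finitely many fixed $C^1$ domains and, for $p\ge2$, the $\ell^p$-sum of the local norms is dominated by the $\ell^2$-sum), together with $\|u^{(0,0)}_a\|_{L^\infty}\le\|a\|_{\ell^2}\le M_2$, $\|u^{(1,0)}_a\|_{1,\mathcal D},\|u^{(0,1)}_a\|_{1,\mathcal D}\lesssim\|a\|_{\ell^2}$, and the uniform bound on $\|\mathcal T^{\lambda}\|$ for $|\lambda|\le\lambda_0/2$, one shows that the part of $\mathcal N$ linear in $r$ is a bounded bilinear form of norm $\le\varepsilon(\lambda,\delta,M_2)\to0$ as $\lambda,\delta\to0$, while the quadratic/cubic part satisfies $|\mathcal N_{\mathrm h}(r,v)|\lesssim\lambda\sigma(\|r\|^2_{1,\mathcal D}+\|r\|^3_{1,\mathcal D})\|v\|_{1,\mathcal D}$ with the corresponding Lipschitz estimates in $r$. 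Let $G$ be the (bounded, by coercivity) solution operator of $\mathfrak a^{aux,II}$; then $\mathfrak a^{aux,III}_{a,\lambda,\delta}(u,v)=(f,v)_{\mathcal D}$ is equivalent to the fixed-point equation $u=\Phi(u):=G[\,f-\mathcal N_{a,\lambda,\delta}(u,\cdot)\,]$. On a ball $\{\|u\|_{1,\mathcal D}\le\rho\}$, choosing $\rho$, $\lambda_2$, $\delta_2$ small and $M_3:=c\rho$, the map $\Phi$ sends the ball into itself and is a contraction, which yields existence, the estimate \eqref{eq_ansatz_aux_sol_4}, and uniqueness in the ball; shrinking the thresholds further so that $CM_3\le\rho$ ensures any solution obeying \eqref{eq_ansatz_aux_sol_4} lies in the ball, giving the asserted uniqueness. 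Finally, since $\|f_{a,\lambda,\delta}\|_{-1,\mathcal D}=\mathcal O(\lambda+\delta)<M_3$ once $\lambda,\delta$ are small, the $r_a$ from (a) is exactly this unique solution, completing the expansion \eqref{eq_ansatz_aux_sol_1}.

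The genuinely new difficulty relative to the finite-cluster argument of \cite{ammari2025nonlinear_resonance} is that all constants — the Poincar\'e constant in (b), the Sobolev embedding constants and the contraction thresholds in (c) — must be made uniform over the infinitely many resonators; this is exactly where the periodicity of $\mathcal D$ is used, and it is the point demanding care. The term-by-term identification in (a) is long but entirely mechanical.
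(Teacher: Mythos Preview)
Your proposal is correct and covers all three parts of the proposition. Your overall strategy matches the paper's, but two of your choices are organized differently and are worth noting. For \eqref{eq_u01_boundary_int}, you test the weak formulation \eqref{eq_2nd_aux_form} directly against $v=\mathbbm 1_{\mathcal D^{(i)}_{\bm n}}$; the paper instead passes to the strong PDE formulation of \eqref{eq_2nd_aux_form} and integrates by parts. Your route is shorter and avoids any regularity discussion. For part (c), you run a single Banach contraction on $u=\Phi(u):=G[f-\mathcal N_{a,\lambda,\delta}(u,\cdot)]$, with $G$ the Lax--Milgram inverse of $\mathfrak a^{aux,II}$; the paper instead mirrors its proof of Proposition~\ref{prop_1st_aux_func} by introducing an additional auxiliary sesquilinear form $\mathfrak a^{aux,V}_{a,\phi,\lambda,\delta}$ (the form $\mathfrak a^{aux,III}$ with the higher-order $r$-dependence frozen at $\phi$), proves its coercivity for small parameters, and iterates $u^{[k]}\mapsto u^{[k+1]}$ via $\mathfrak a^{aux,V}_{a,u^{[k]},\lambda,\delta}(u^{[k+1]},v)=(f,v)_{\mathcal D}$, with a Gronwall-type uniform bound replacing your ball-invariance step. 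The two arguments are equivalent in content---both hinge on the same coercivity of $\mathfrak a^{aux,II}$, the same uniform Sobolev embeddings over the periodic resonator array, and the same smallness of $\lambda,\delta,\|a\|_{\ell^2}$---but your packaging is more compact, while the paper's has the virtue of reusing verbatim the machinery already set up for Proposition~\ref{prop_1st_aux_func}. One small remark: your final sentence identifying the $r_a$ from (a) with the unique small solution from (c) is not needed for the proposition as stated (it only asserts that $r_a$ solves \eqref{eq_3rd_aux_form}, not that it obeys \eqref{eq_ansatz_aux_sol_4}); that identification is used later in the paper and requires the sharper bound $\|f_{a,\lambda,\delta}\|_{-1,\mathcal D}=\mathcal O(\lambda^2+\delta^2)$ rather than your $\mathcal O(\lambda+\delta)$.
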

\begin{proof}
    See Section \ref{sec_4_2}.
\end{proof}
\begin{remark}
The boundary terms in \eqref{eq_ansatz_aux_sol_3} apply on $v\in H^1(\mathcal{D})$ in the sense of trace: $\big(u\mathbbm{1}_{\partial \mathcal{D}},v\big)_{\mathcal{D}}:=\langle u,v\big|_{\partial \mathcal{D}} \rangle$.
\end{remark}

Using Propositions \ref{prop_1st_aux_func}-\ref{prop_ansatz_aux_sol}, we are now able to prove Theorem \ref{thm_tight_binding_approx}. Choose the constants in Theorem \ref{thm_tight_binding_approx} as follows: $$\delta_0:=\min\{\delta_1,\delta_2\}, \quad \lambda_0^{cont}:=\min\{\lambda_1,\lambda_2\}, \quad \text{and } \quad M_0^{cont}:=\min\{M_1,M_2,M_3\}.$$ The constants $\lambda_0^{disc}$ and $M_0^{disc}$ will be chosen in the sequel; see \eqref{eq_tight_bind_proof_11}.

{\color{blue}Step 1.} First, suppose that $u\in L^2(\mathbb{R}^2)$ is a continuous soliton with eigenvalue $\lambda^{cont}<\lambda_0^{cont}$ and satisfies $\|u\|_{L^2(\mathbb{R}^2)}\leq M_0$. We now verify that $a=(a_{\bm{n}}^{(i)})$ with $a_{\bm{n}}^{(i)}:=\int_{\mathcal{D}_{\bm{n}}^{(i)}}u$ is indeed an approximate discrete soliton. In fact, by the equivalent formulation in Proposition \ref{prop_exact_equivalence}, we see that the solution $u_f$ to $\mathfrak{a}^{aux,I}_{\lambda,\delta}(u,v)=(f,v)_{\mathcal{D}}$ with $f=\sum_{\substack{1\leq i\leq d \\ \bm{n}\in\mathbb{Z}^2}}a_{\bm{n}}^{(i)}\mathbbm{1}_{\mathcal{D}_{\bm{n}}^{(i)}}\neq 0$ satisfies:
\begin{equation*}
\int_{\mathcal{D}_{\bm{n}}^{(i)}}u_{f}=a_{\bm{n}}^{(i)}.
\end{equation*}
Expanding $u_{f}$ with \eqref{eq_ansatz_aux_sol_1} in Proposition \ref{prop_ansatz_aux_sol}, we derive
\begin{equation*}
\begin{aligned}
0&=-a_{\bm{n}}^{(i)}+\int_{\mathcal{D}_{\bm{n}}^{(i)}}u_{f}
=-a_{\bm{n}}^{(i)}+\int_{\mathcal{D}_{\bm{n}}^{(i)}}(u^{(0,0)}_{a}+\lambda u^{(1,0)}_{a} +\delta u^{(0,1)}_{a} +r_{a}) \\
&=-a_{\bm{n}}^{(i)}+a_{\bm{n}}^{(i)}+\lambda(a_{\bm{n}}^{(i)}+\sigma|a_{\bm{n}}^{(i)}|^2a_{\bm{n}}^{(i)})+\delta\sum_{\substack{1\leq j\leq d \\ \bm{m}\in\mathbb{Z}^2}}a_{\bm{m}}^{(j)} \int_{\partial \mathcal{D}_{\bm{n}}^{(i)}}\mathcal{T}^{0}[\mathbbm{1}_{\partial \mathcal{D}_{\bm{m}}^{(j)}}] + \int_{\partial \mathcal{D}_{\bm{n}}^{(i)}}r_{a} \\
&=\lambda(a+\sigma|a|^2a)_{\bm{n}}^{(i)}-\delta(\mathcal{C}a)_{\bm{n}}^{(i)}+(r^{disc}_{a})_{\bm{n}}^{(i)}
\end{aligned}
\end{equation*}
with $$(r^{disc}_{a})_{\bm{n}}^{(i)}:=\int_{\partial \mathcal{D}_{\bm{n}}^{(i)}}r_{a}$$ This implies
\begin{equation} \label{eq_tight_bind_proof_1}
\mathfrak{a}^{disc}_{\lambda/\delta }(a,b)=\frac{1}{\delta}(r^{disc}_{a},b)_{\ell^2(\mathbb{Z}^2;\mathbb{C}^{d})} \quad (\forall b\in \ell^2(\mathbb{Z}^2;\mathbb{C}^{d})).
\end{equation}
Hence, $a\in \ell^2(\mathbb{Z}^2;\mathbb{C}^d)$ is indeed an approximate discrete soliton if the remainder $r_{a}^{disc}$ is small. We claim that the remainder is bounded by
\begin{equation} \label{eq_tight_bind_proof_2}
\|r^{disc}_{a}\|_{\ell^2(\mathbb{Z}^2;\mathbb{C}^{d})}=\mathcal{O}(\lambda^2+\delta^2).
\end{equation}
Then \eqref{eq_tight_bind_proof_1} and \eqref{eq_tight_bind_proof_2} conclude the proof of the first part of Theorem \ref{thm_tight_binding_approx}.

\begin{proof}[Proof of \eqref{eq_tight_bind_proof_2}]
This follows from expansion \eqref{eq_ansatz_aux_sol_3} and estimate \eqref{eq_ansatz_aux_sol_4}. In fact, we successively apply the Schwarz inequality in \eqref{eq_ansatz_aux_sol_3} and obtain that
\begin{equation} \label{eq_tight_bind_proof_3}
\begin{aligned}
\|f_{a,\lambda,\delta}\|_{-1,\mathcal{D}}
&\leq \lambda^2 \|u^{(1,0)}_{a}\|_{0,\mathcal{D}}+\lambda\delta \|u^{(0,1)}_{a}\|_{0,\mathcal{D}} + \lambda|\sigma| \||u^{(0,0)}_{a}|^2(\lambda u^{(1,0)}_{a}+\delta u^{(0,1)}_{a})\|_{0,\mathcal{D}} \\
&\quad+ 2\lambda|\sigma| \Big\|\Re\left(\overline{u^{(0,0)}_{a}}(\lambda u^{(1,0)}_{a}+\delta u^{(0,1)}_{a}) \right)\cdot (u^{(0,0)}_{a}+\lambda u^{(1,0)}_{a}+\delta u^{(0,1)}_{a})\Big\|_{0,\mathcal{D}} \\
&\quad +\lambda|\sigma| \big\| |\lambda u^{(1,0)}_{a}+\delta u^{(0,1)}_{a}|^2 (u^{(0,0)}_{a}+\lambda u^{(1,0)}_{a}+\delta u^{(0,1)}_{a}) \big\|_{0,\mathcal{D}} \\
&\quad +\delta\|\mathcal{T}^{\lambda}-\mathcal{T}^{0}\|_{\mathcal{B}(H^{\frac{1}{2}}(\partial \mathcal{D}),H^{-\frac{1}{2}}(\partial \mathcal{D}))} \|u^{(0,0)}_{a}\|_{1,\mathcal{D}}
+\lambda\delta\|\mathcal{T}^{\lambda}\|_{\mathcal{B}(H^{\frac{1}{2}}(\partial \mathcal{D}),H^{-\frac{1}{2}}(\partial \mathcal{D}))}\|u^{(1,0)}_{a}\|_{1,\mathcal{D}} \\
&\quad + \delta^2\|\mathcal{T}^{\lambda}\|_{\mathcal{B}(H^{\frac{1}{2}}(\partial \mathcal{D}),H^{-\frac{1}{2}}(\partial \mathcal{D}))}\|u^{(0,1)}_{a}\|_{1,\mathcal{D}} \\
&\leq C(\lambda^2+\delta^2)\Big\{
\|u^{(0,0)}_{a}\|_{1,\mathcal{D}}+\|u^{(1,0)}_{a}\|_{1,\mathcal{D}}+ \|u^{(0,1)}_{a}\|_{1,\mathcal{D}}\\
&\quad\quad\quad+\|u^{(0,0)}_{a}\|_{L^4(\mathcal{D})}^{2}(\|u^{(1,0)}_{a}\|_{0,\mathcal{D}}+\|u^{(0,1)}_{a}\|_{0,\mathcal{D}}) \\
&\quad\quad\quad +\|u^{(0,0)}_{a}\|_{L^4(\mathcal{D})}(\|u^{(1,0)}_{a}\|_{L^4(\mathcal{D})}+\|u^{(0,1)}_{a}\|_{L^4(\mathcal{D})})(\|u^{(0,0)}_{a}\|_{0,\mathcal{D}}+\|u^{(1,0)}_{a}\|_{0,\mathcal{D}}+\|u^{(0,1)}_{a}\|_{0,\mathcal{D}}) \\
&\quad\quad\quad +(\|u^{(1,0)}_{a}\|_{L^4(\mathcal{D})}+\|u^{(0,1)}_{a}\|_{L^4(\mathcal{D})})^2(\|u^{(0,0)}_{a}\|_{0,\mathcal{D}}+\|u^{(1,0)}_{a}\|_{0,\mathcal{D}}+\|u^{(0,1)}_{a}\|_{0,\mathcal{D}}) \Big\},
\end{aligned}
\end{equation}
where $C>0$ is independent of $\lambda$ and $\delta$. Note that expansion \eqref{eq_ansatz_aux_sol_2} indicates that $u^{(0,0)}_{a}$ and $u^{(1,0)}_{a}$ are constant-valued in each resonator; hence their $H^1(\mathcal{D})$ norms are equal to their $L^2(\mathcal{D})$ norms, which are estimated as
\begin{equation} \label{eq_tight_bind_proof_4}
\|u^{(0,0)}_{a}\|_{1,\mathcal{D}}=\|u^{(0,0)}_{a}\|_{0,\mathcal{D}}=\|a\|_{\ell^2(\mathbb{Z}^2;\mathbb{C}^{d})}=\Big(\sum_{\substack{1\leq i\leq d \\ \bm{n}\in\mathbb{Z}^2}}\big|\int_{\mathcal{D}_{\bm{n}}^{(i)}}u\big|^2 \Big)^{1/2}\leq \|u\|_{0,\mathcal{D}},
\end{equation}
and
\begin{equation} \label{eq_tight_bind_proof_6}
\begin{aligned}
\|u^{(1,0)}_{a}\|_{1,\mathcal{D}}&=\|u^{(1,0)}_{a}\|_{0,\mathcal{D}}
\leq\|a\|_{\ell^2(\mathbb{Z}^2)}+|\sigma|\||a|^2 a\|_{\ell^2(\mathbb{Z}^2;\mathbb{C}^{d})}
=\|a\|_{\ell^2(\mathbb{Z}^2;\mathbb{C}^{d})}+|\sigma|\| a\|_{\ell^{6}(\mathbb{Z}^2;\mathbb{C}^{d})}^{3} \\
&\leq C\|a\|_{\ell^2(\mathbb{Z}^2;\mathbb{C}^{d})}(1+\|a\|_{\ell^2(\mathbb{Z}^2;\mathbb{C}^{d})})^2
\leq C\|u\|_{0,\mathcal{D}}(1+\|u\|_{0,\mathcal{D}})^2,
\end{aligned}
\end{equation}
where the standard estimate $$\|a\|_{\ell^k(\mathbb{Z}^2;\mathbb{C}^{d})}\leq \|a\|_{\ell^2(\mathbb{Z}^2;\mathbb{C}^{d})} \quad (\forall k\geq 2)$$ is applied to derive the second inequality in \eqref{eq_tight_bind_proof_6}. Similarly, their $L^4(\mathcal{D})-$norms are estimated as
\begin{equation} \label{eq_tight_bind_proof_5}
\|u^{(0,0)}_{a}\|_{L^4(\mathcal{D})}=\|a\|_{\ell^4(\mathbb{Z}^2;\mathbb{C}^{d})}\leq \|a\|_{\ell^2(\mathbb{Z}^2;\mathbb{C}^{d})} \leq \|u\|_{0,\mathcal{D}} ,
\end{equation}
and
\begin{equation} \label{eq_tight_bind_proof_7}
\begin{aligned}
\|u^{(1,0)}_{a}\|_{L^4(\mathcal{D})}
&\leq\|a\|_{\ell^4(\mathbb{Z}^2;\mathbb{C}^{d})}+|\sigma|\||a|^2 a\|_{\ell^4(\mathbb{Z}^2;\mathbb{C}^{d})} \\
&\leq C\|a\|_{\ell^2(\mathbb{Z}^2;\mathbb{C}^{d})}(1+\|a\|_{\ell^2(\mathbb{Z}^2;\mathbb{C}^{d})})^2
\leq C\|u\|_{0,\mathcal{D}}(1+\|u\|_{0,\mathcal{D}})^2 .
\end{aligned}
\end{equation}
On the other hand, the norm of $u^{(0,1)}_{a}$ is estimated as follows:
\begin{equation} \label{eq_tight_bind_proof_8}
\begin{aligned}
\|u^{(0,1)}_{a}\|_{0,\mathcal{D}}
&\leq  \|u^{(0,1)}_{a}\|_{1,\mathcal{D}} \overset{(i)}{\leq} C \|\mathcal{T}^{\lambda}[u^{(0,0)}_{a}\big|_{\partial \mathcal{D}}]\|_{H^{1/2}(\partial \mathcal{D})} \\
&\overset{(ii)}{\leq} C \|u^{(0,0)}_{a}\|_{1, \mathcal{D}}
= C\|u^{(0,0)}_{a}\|_{0, \mathcal{D}}
\leq C\|a\|_{\ell^2(\mathbb{Z}^2;\mathbb{C}^{d})} \leq  C\|u\|_{0,\mathcal{D}}.
\end{aligned}
\end{equation}
Here, to prove inequality (i), we recall the fact that $u^{(0,1)}_{a}$ solves \eqref{eq_2nd_aux_form} and that the sesquilinear form $$\mathfrak{a}^{aux,II}(u,v)=(\nabla u,\nabla v)_{\mathcal{D}}+\sum_{\substack{1\leq i\leq d \\ \bm{n}\in\mathbb{Z}^2}}\int_{\mathcal{D}_{\bm{n}}^{(i)}}u\int_{\mathcal{D}_{\bm{n}}^{(i)}}\overline{v}$$ is continuously coercive (see the proof of Lemma \ref{lem_4th_form_invert} for details). Inequality (ii) follows from the boundedness of the DtN map $\mathcal{T}^{\lambda}$ (Proposition \ref{prop_d2n_map}). We also obtain the $L^4$ estimate of $u^{(0,1)}_{a}$ by \eqref{eq_tight_bind_proof_8} and the Sobolev embedding (for unbounded domains) $H^{1}(\mathcal{D})\subset L^{4}(\mathcal{D})$:
\begin{equation} \label{eq_tight_bind_proof_9}
\begin{aligned}
\|u^{(0,1)}_{a}\|_{L^4(\mathcal{D})}\leq 
C \|u^{(0,1)}_{a}\|_{1,\mathcal{D}} \leq C\|a\|_{\ell^2(\mathbb{Z}^2;\mathbb{C}^{d})} \leq  C\|u\|_{0,\mathcal{D}} .
\end{aligned}
\end{equation}
With \eqref{eq_tight_bind_proof_4}-\eqref{eq_tight_bind_proof_9}, \eqref{eq_tight_bind_proof_3} is estimated as
\begin{equation} \label{eq_tight_bind_proof_10}
\begin{aligned}
\|f_{a,\lambda,\delta}\|_{-1,\mathcal{D}}
\leq C(\lambda^2+\delta^2)\Big\{\|u\|_{0,\mathcal{D}}(1+\|u\|_{0,\mathcal{D}})^2+\|u\|_{0,\mathcal{D}}^3(1+\|u\|_{0,\mathcal{D}})^6 \Big\} .
\end{aligned}
\end{equation}
This estimate, together with the assumption $\|u\|_{1,\mathcal{D}}\leq M_{0}^{cont}$, implies that one can estimate $\|r_{a}\|_{0,\mathcal{D}}$ using \eqref{eq_ansatz_aux_sol_4} if $\lambda$ and $\delta$ are sufficiently small (such that the right side of \eqref{eq_tight_bind_proof_10} is smaller than $M_3$ and hence the conditions in Proposition \ref{prop_ansatz_aux_sol} are fulfilled):
\begin{equation*}
\|r_{a}\|_{0,\mathcal{D}} \leq C\|f_{a,\lambda,\delta}\|_{-1,\mathcal{D}}=\mathcal{O}(\lambda^2+\delta^2).
\end{equation*}
Consequently,
\begin{equation*}
\|r^{disc}_{a}\|_{\ell^2(\mathbb{Z}^2;\mathbb{C}^{d})}\leq \|r_{a}\|_{0,\mathcal{D}} =\mathcal{O}(\lambda^2+\delta^2),
\end{equation*}
which concludes the proof of \eqref{eq_tight_bind_proof_2}.
\end{proof}

{\color{blue}Step 2.} Now we prove the second part of Theorem \ref{thm_tight_binding_approx}. Suppose that $a\in \ell^2(\mathbb{Z}^2;\mathbb{C}^{d})$ is a discrete subwavelength soliton with eigenvalue $\lambda^{disc}<\lambda_0^{disc}$ and satisfies $\|a\|_{\ell^2(\mathbb{Z}^2;\mathbb{C}^{d})}\leq M_0^{disc}$ ($\lambda_0^{disc}$ and $M_0^{disc}$ are to be specified later). We define $\lambda^{cont}:=\delta\lambda^{disc}$ and
\begin{equation} \label{eq_tight_bind_proof_11}
u:=u^{(0,0)}_{a}+\lambda^{cont} u^{(1,0)}_{a} +\delta u^{(0,1)}_{a} +r_{a},
\end{equation}
where $u^{(0,0)}_{a},u^{(1,0)}_{a}$ are introduced in \eqref{eq_ansatz_aux_sol_2}, $u^{(0,1)}_{a}$ and $r_{a}$ are respectively the unique solutions to \eqref{eq_2nd_aux_form} and $$\mathfrak{a}^{aux,III}_{\lambda^{cont},\delta}(r_{a},v)=(f_{a,\lambda^{cont},\delta},v)_{\mathcal{D}}.$$ We next verify that \eqref{eq_tight_bind_proof_11} gives an approximate continuous soliton with eigenvalue $\lambda^{cont}$. 

{\color{blue}Step 2.1.} First, we need to show that $r_a$ is well defined. In fact, it suffices to prove that when the thresholds $\lambda_0^{disc},\delta_0,$ and $M_0^{disc}$ are sufficiently small, the conditions in Proposition \ref{prop_ansatz_aux_sol} are satisfied, which hence guarantees that the following problem has a unique solution
$$
\mathfrak{a}^{aux,III}_{\lambda^{cont},\delta}(r_{a},v)=(f_{a,\lambda^{cont},\delta},v)_{\mathcal{D}}.
$$
The desired constraints on the thresholds are obtained as follows. 
By controlling the right-hand side of \eqref{eq_tight_bind_proof_3} using \eqref{eq_tight_bind_proof_4}-\eqref{eq_tight_bind_proof_9}, we see that
\begin{equation} \label{eq_tight_bind_proof_12}
\begin{aligned}
\|f_{a,\lambda^{cont},\delta}\|_{1,\mathcal{D}}
&\leq C((\lambda^{cont})^2+\delta^2)\Big\{
\|u^{(0,0)}_{a}\|_{1,\mathcal{D}}+\|u^{(1,0)}_{a}\|_{1,\mathcal{D}}+ \|u^{(0,1)}_{a}\|_{1,\mathcal{D}}\\
& +\|u^{(0,0)}_{a}\|_{L^4(\mathcal{D})}^{2}(\|u^{(1,0)}_{a}\|_{0,\mathcal{D}}+\|u^{(0,1)}_{a}\|_{0,\mathcal{D}}) \\
& +\|u^{(0,0)}_{a}\|_{L^4(\mathcal{D})}(\|u^{(1,0)}_{a}\|_{L^4(\mathcal{D})}+\|u^{(0,1)}_{a}\|_{L^4(\mathcal{D})})(\|u^{(0,0)}_{a}\|_{0,\mathcal{D}}+\|u^{(1,0)}_{a}\|_{0,\mathcal{D}}+\|u^{(0,1)}_{a}\|_{0,\mathcal{D}}) \\
& +(\|u^{(1,0)}_{a}\|_{L^4(\mathcal{D})}+\|u^{(0,1)}_{a}\|_{L^4(\mathcal{D})})^2(\|u^{(0,0)}_{a}\|_{0,\mathcal{D}}+\|u^{(1,0)}_{a}\|_{0,\mathcal{D}}+\|u^{(0,1)}_{a}\|_{0,\mathcal{D}}) \Big\} \\
&\leq C\delta^2((\lambda^{disc})^2+1)
\Big\{\|a\|_{\ell^2(\mathbb{Z}^2;\mathbb{C}^{d})}(1+\|a\|_{\ell^2(\mathbb{Z}^2;\mathbb{C}^{d})})^2
+\|a\|_{\ell^2(\mathbb{Z}^2;\mathbb{C}^{d})}^3(1+\|a\|_{\ell^2(\mathbb{Z}^2;\mathbb{C}^{d})})^6 \Big\}  \\
&\leq C\delta^2((\lambda^{disc})^2+1)(1+\|a\|_{\ell^2(\mathbb{Z}^2;\mathbb{C}^{d})})^9 .
\end{aligned}
\end{equation} 
Hence, the appropriate constraints on $\lambda_0^{disc},\delta_0$ and $M_0^{disc}$ are 
\begin{equation} \label{eq_tight_bind_proof_13}
\left\{
\begin{aligned}
&C\delta^2((\lambda^{disc}_{0})^2+1)(1+M_{0}^{disc})^9\leq M_3 , \\
&\delta\cdot \lambda^{disc}_{0} \leq \lambda_{2}, \\
&M^{disc}_{0}\leq M_2.
\end{aligned}
\right.
\end{equation}
Note that we can simply set $\lambda_0^{disc},\delta_0$ and $M_0^{disc}$ to be sufficiently small. Then Proposition \ref{prop_ansatz_aux_sol} guarantees that the function $r_a$ in \eqref{eq_tight_bind_proof_11} is well defined. 

{\color{blue}Step 2.2.} Now we prove that $u$ defined in \eqref{eq_tight_bind_proof_11} is indeed an approximate soliton. By \eqref{eq_ansatz_aux_sol_2} and \eqref{eq_2nd_aux_form}, we can directly check that the function $u$ defined in \eqref{eq_tight_bind_proof_11} actually solves the second problem $$\mathfrak{a}^{aux,I}_{\lambda^{c},\delta}(u,v)=(f,v)_{\mathcal{D}}$$ in Proposition \ref{prop_exact_equivalence} with $$f=\sum_{\substack{1\leq i\leq d \\ \bm{n}\in\mathbb{Z}^2}}a_{\bm{n}}^{(i)}\mathbbm{1}_{\mathcal{D}_{\bm{n}}^{(i)}}.$$ However, $u$ may not solve the original eigenvalue problem \eqref{eq_pde_station_weak} because the condition $$\int_{\mathcal{D}}u\mathbbm{1}_{\mathcal{D}_{\bm{n}}^{(i)}}=a_{\bm{n}}^{(i)}$$ is not necessarily fulfilled due to the remainder $r_a$. Nonetheless, this error is small: in fact, by \eqref{eq_tight_bind_proof_11}, \eqref{eq_ansatz_aux_sol_2} and \eqref{eq_u01_boundary_int}, the error is totally contributed by the remainder $r_a$
\begin{equation*}
\int_{\mathcal{D}_{\bm{n}}^{(i)}}u-a_{\bm{n}}^{(i)}=\int_{\mathcal{D}_{\bm{n}}^{(i)}}r_{a} .
\end{equation*}
Hence, we see that
\begin{equation*}
\begin{aligned}
\mathfrak{a}^{aux,I}_{\lambda^{cont},\delta}(u,v)=(\sum_{\substack{1\leq i\leq d \\ \bm{n}\in\mathbb{Z}^2}}a_{\bm{n}}^{(i)}\mathbbm{1}_{\mathcal{D}_{\bm{n}}^{(i)}},v)_{\mathcal{D}}
=\sum_{\substack{1\leq i\leq d \\ \bm{n}\in\mathbb{Z}^2}}\int_{\mathcal{D}_{\bm{n}}^{(i)}}u\int_{\mathcal{D}_{\bm{n}}^{(i)}}\overline{v}-(r_a,v)_{\mathcal{D}},
\end{aligned}
\end{equation*}
which implies that $\mathfrak{a}^{cont}_{\lambda^{cont},\delta}(u_a,v)
=-(r_a,v)_{\mathcal{D}}$ and hence,
\begin{equation*}
\sup_{\|v\|_{H^1(\mathcal{D})}=1}|\mathfrak{a}^{cont}_{\lambda^{cont},\delta}(u_a,v)|\leq \|r_a\|_{0,\mathcal{D}}\overset{\eqref{eq_tight_bind_proof_12},\eqref{eq_ansatz_aux_sol_4}}{=\joinrel=}\mathcal{O}(\delta^2).
\end{equation*}
Note that the remainder $\mathfrak{e}$ in Theorem \ref{thm_tight_binding_approx} is given as
\begin{equation} \label{eq_tight_bind_proof_14}
\mathfrak{e}[a]:=\lambda^{cont} u^{(1,0)}_{a} +\delta u^{(0,1)}_{a} +r_{a} =\delta(\lambda^{disc} u^{(1,0)}_{a}+u^{(0,1)}_{a})+r_{a}.
\end{equation}
This concludes the proof of the second part of Theorem \ref{thm_tight_binding_approx}.

\subsection{Proof of Proposition \ref{prop_1st_aux_func}} \label{sec_4_1}
We prove Proposition \ref{prop_1st_aux_func} based on a standard iteration argument, similar to the one in \cite[Section 2]{zou2018finite_element}. To facilitate the iteration, we need another auxiliary form defined as follows:
\begin{equation} \label{eq_4th_aux_form}
\mathfrak{a}^{aux,IV}_{\phi,\lambda,\delta}(u,v):= \mathfrak{a}^{aux,II}(u,v)-\lambda ( u, v)_{\mathcal{D}}-\delta \langle \mathcal{T}^{\lambda}[u\big|_{\partial \mathcal{D}}], v\rangle-\lambda \sigma ( |\phi|^2 u, v)_{\mathcal{D}}
\end{equation}
with $\phi\in H^1(\mathcal{D})$ and $$\mathfrak{a}^{aux,II}(u,v)= (\nabla u,\nabla v)_{\mathcal{D}}+\sum_{\substack{1\leq i\leq d \\ \bm{n}\in\mathbb{Z}^2}}\int_{\mathcal{D}_{\bm{n}}^{(i)}}u\int_{\mathcal{D}_{\bm{n}}^{(i)}}\overline{v}$$ (as introduced in \eqref{eq_2nd_aux_form}). For any fixed $\phi$, $\mathfrak{a}^{aux,IV}_{\phi,\lambda,\delta}$ is a sesquilinear form in $u$ and $v$. In particular, it is relatively bounded with respect to the well-posed form $\mathfrak{a}^{aux,II}(u,v)$ (since $\lambda,\delta$ are small), which leads to its invertibility. The following result holds. 
\begin{lemma} \label{lem_4th_form_invert}
There exist $\delta_{1,1},\lambda_{1,1}, M_{1,1}>0$ 
such that for $\delta<\delta_{1,1}$, $\lambda<\lambda_{1,1}$ and $\|\phi\|_{1,\mathcal{D}}<M_{1,1}$, there is a unique solution $u=u_{f}\in H^1(\mathcal{D})$ to the following problem for any $f\in (H^1(\mathcal{D}))^{*}$:
\begin{equation*}
    \mathfrak{a}^{aux,IV}_{\phi,\lambda,\delta}(u,v)=(f,v)_{\mathcal{D}}.
\end{equation*}
Moreover, $u_{f}$ satisfies the estimate
\begin{equation} \label{eq_4th_aux_sol_estimate}
\|u_f\|_{1,\mathcal{D}}^2\leq C\|f\|_{-1,\mathcal{D}}^2\Big(1+\lambda\|\phi\|_{1,\mathcal{D}}^2 \Big),
\end{equation}
where $C>0$ is independent of $\delta,\lambda,\|\phi\|, \delta_{1,1},\lambda_{1,1},$ and $M_{1,1}$.
\end{lemma}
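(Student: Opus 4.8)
The plan is to treat $\mathfrak{a}^{aux,IV}_{\phi,\lambda,\delta}$ as a small relative perturbation of the form $\mathfrak{a}^{aux,II}$ and to apply the Lax--Milgram lemma for bounded coercive sesquilinear forms.

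\emph{Step 1: boundedness and coercivity of $\mathfrak{a}^{aux,II}$.} First I would show that there are constants $0<c_{0}\le C_{0}$, depending only on the geometry of $\mathcal{D}$, with $c_{0}\|u\|_{1,\mathcal{D}}^{2}\le \mathfrak{a}^{aux,II}(u,u)$ and $|\mathfrak{a}^{aux,II}(u,v)|\le C_{0}\|u\|_{1,\mathcal{D}}\|v\|_{1,\mathcal{D}}$ for all $u,v\in H^{1}(\mathcal{D})$. Boundedness is immediate from Cauchy--Schwarz together with $\sum_{i,\bm{n}}\big|\int_{\mathcal{D}_{\bm{n}}^{(i)}}u\big|^{2}\le\|u\|_{0,\mathcal{D}}^{2}$, which uses $\mathrm{vol}(\mathcal{D}_{prim}^{(i)})=1$. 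For coercivity the only point is to dominate $\|u\|_{0,\mathcal{D}}^{2}$: since every $\mathcal{D}_{\bm{n}}^{(i)}$ is a rigid translate of the fixed simply connected $C^{1}$ domain $\mathcal{D}_{prim}^{(i)}$, the mean-zero Poincaré inequality holds on each of them with one and the same constant, so
$$\|u\|_{0,\mathcal{D}_{\bm{n}}^{(i)}}^{2}\le 2\Big\|u-\int_{\mathcal{D}_{\bm{n}}^{(i)}}u\Big\|_{0,\mathcal{D}_{\bm{n}}^{(i)}}^{2}+2\Big|\int_{\mathcal{D}_{\bm{n}}^{(i)}}u\Big|^{2}\le C\|\nabla u\|_{0,\mathcal{D}_{\bm{n}}^{(i)}}^{2}+2\Big|\int_{\mathcal{D}_{\bm{n}}^{(i)}}u\Big|^{2};$$
summing over $i$ and $\bm{n}$ and adding $\|\nabla u\|_{0,\mathcal{D}}^{2}$ gives $\|u\|_{1,\mathcal{D}}^{2}\le C\,\mathfrak{a}^{aux,II}(u,u)$.

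\emph{Step 2: the extra terms are a small relative perturbation.} Write $\mathfrak{a}^{aux,IV}_{\phi,\lambda,\delta}=\mathfrak{a}^{aux,II}+\mathfrak{b}_{\phi,\lambda,\delta}$. I would bound the three remaining terms by $|\lambda(u,v)_{\mathcal{D}}|\le\lambda\|u\|_{1,\mathcal{D}}\|v\|_{1,\mathcal{D}}$; by $\delta|\langle\mathcal{T}^{\lambda}[u|_{\partial\mathcal{D}}],v\rangle|\le C\delta\,\|u\|_{1,\mathcal{D}}\|v\|_{1,\mathcal{D}}$, using the (cell-uniform) boundedness of the trace operator $H^{1}(\mathcal{D})\to H^{1/2}(\partial\mathcal{D})$ and the uniform bound on $\|\mathcal{T}^{\lambda}\|$ for $|\lambda|<\lambda_{0}$ furnished by Proposition \ref{prop_d2n_map}; and by
$$\lambda|\sigma|\,|(|\phi|^{2}u,v)_{\mathcal{D}}|\le\lambda|\sigma|\,\|\phi\|_{L^{4}(\mathcal{D})}^{2}\|u\|_{L^{4}(\mathcal{D})}\|v\|_{L^{4}(\mathcal{D})}\le C\lambda|\sigma|\,\|\phi\|_{1,\mathcal{D}}^{2}\|u\|_{1,\mathcal{D}}\|v\|_{1,\mathcal{D}},$$
using Cauchy--Schwarz and the embedding $H^{1}(\mathcal{D})\hookrightarrow L^{4}(\mathcal{D})$, again with a cell-uniform constant on the unbounded domain $\mathcal{D}$. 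Hence $|\mathfrak{b}_{\phi,\lambda,\delta}(u,v)|\le C\big(\lambda+\delta+\lambda|\sigma|\,\|\phi\|_{1,\mathcal{D}}^{2}\big)\|u\|_{1,\mathcal{D}}\|v\|_{1,\mathcal{D}}$, and the same estimate holds for $|\Re\mathfrak{b}_{\phi,\lambda,\delta}(u,u)|$.

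\emph{Step 3: conclusion and quantitative bound.} Choosing $\delta_{1,1},\lambda_{1,1},M_{1,1}$ so small that $C\big(\lambda_{1,1}+\delta_{1,1}+\lambda_{1,1}|\sigma|M_{1,1}^{2}\big)\le c_{0}/2$ gives $\Re\mathfrak{a}^{aux,IV}_{\phi,\lambda,\delta}(u,u)\ge\tfrac{c_{0}}{2}\|u\|_{1,\mathcal{D}}^{2}$ (note $\mathfrak{a}^{aux,II}(u,u)$ is already real), while boundedness of the form is clear; the Lax--Milgram lemma then gives, for every $f\in(H^{1}(\mathcal{D}))^{*}$, a unique $u_{f}\in H^{1}(\mathcal{D})$ with $\mathfrak{a}^{aux,IV}_{\phi,\lambda,\delta}(u_{f},v)=(f,v)_{\mathcal{D}}$. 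For \eqref{eq_4th_aux_sol_estimate} I would test with $v=u_{f}$, observe that $(|\phi|^{2}u_{f},u_{f})_{\mathcal{D}}=\int_{\mathcal{D}}|\phi|^{2}|u_{f}|^{2}\ge0$ is real, move the $\lambda$-, $\delta$- and $\lambda\sigma|\phi|^{2}$-terms to the right-hand side, and apply Young's inequality to $|(f,u_{f})_{\mathcal{D}}|$, absorbing the resulting multiples of $\|u_{f}\|_{1,\mathcal{D}}^{2}$ into the coercivity constant; this yields $\|u_{f}\|_{1,\mathcal{D}}^{2}\le C\|f\|_{-1,\mathcal{D}}^{2}$ with $C$ depending only on $c_{0}$, hence in particular the claimed (weaker) bound $\|u_{f}\|_{1,\mathcal{D}}^{2}\le C\|f\|_{-1,\mathcal{D}}^{2}(1+\lambda\|\phi\|_{1,\mathcal{D}}^{2})$, and uniqueness is part of the Lax--Milgram statement. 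The main obstacle — and the only place where the periodic structure is genuinely used — is the uniformity in Steps 1--2: that the Poincaré constant and the $L^{4}$ Sobolev constant may be taken independent of the cell, which holds precisely because all the resonators are congruent translates of the finitely many primitive inclusions.
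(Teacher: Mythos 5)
Your proposal is correct and follows essentially the same route as the paper: establish continuity and coercivity of $\mathfrak{a}^{aux,II}$ via a cell-uniform mean-zero Poincar\'e inequality, bound the three extra terms as a small relative perturbation (using the uniform trace and $H^1\hookrightarrow L^4$ constants on the periodic domain and the uniform bound on $\mathcal{T}^\lambda$), and apply Lax--Milgram, with the estimate obtained by testing with $v=u_f$. One small remark: your Step 3 observation is sharper than the lemma states — after absorbing all perturbation terms, including the $\phi$-term, you obtain $\|u_f\|_{1,\mathcal{D}}^2\le C\|f\|_{-1,\mathcal{D}}^2$ with $C$ independent of $\phi$ — whereas the paper retains the $\lambda\|\phi\|_{1,\mathcal{D}}^2$ factor explicitly so as to feed it into the Gronwall-type iteration \eqref{eq_iteration_seq_H1_estimate} in the proof of Proposition \ref{prop_1st_aux_func}; either form suffices, and the sharper bound in fact simplifies that subsequent step.
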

We postpone the proof of this lemma to the end of this subsection. Now we apply an iteration argument to construct a solution to $\mathfrak{a}^{aux,I}_{\lambda,\delta}(u,v)=(f,v)_{\mathcal{D}}$ with the help of Lemma \ref{lem_4th_form_invert}; this will prove the existence of a solution claimed in Proposition \ref{prop_1st_aux_func}. 

{\color{blue}Step 1.} We start with an arbitrary $u^{[0]}\in H^1(\mathcal{D})$ with $\|u^{[0]}\|_{1,\mathcal{D}}\leq M_{1,1}$ and fix $f\in (H^1(\mathcal{D}))^{*}$ (the restriction on the norm of $f$ will be specified in due course). By Lemma \ref{lem_4th_form_invert}, for $\delta<\delta_{1,1}$, $\lambda<\lambda_{1,1}$, there exists a unique $u\in H^1(\mathcal{D})$ that solves  $\mathfrak{a}^{aux,IV}_{u^{[0]},\lambda,\delta}(u,v)=(f,v)_{\mathcal{D}}$. We denote this solution by $u^{[1]}$. Indeed, continuing this procedure, we can inductively construct a sequence $u^{[k]}$ if the following constraint on its norm is satisfied for any $k\geq 0$:
\begin{equation} \label{eq_interation_norm_constraint}
\|u^{[k]}\|_{1,\mathcal{D}}<M_{1,1}.
\end{equation}
Note that \eqref{eq_interation_norm_constraint} is satisfied for $k=0$ due to our initial choice of $u^{(0)}$. Supposing that \eqref{eq_interation_norm_constraint} holds for $0,1,\ldots,k-1$, we apply the Gronwall inequality to \eqref{eq_4th_aux_sol_estimate} and see
\begin{equation} \label{eq_iteration_seq_H1_estimate}
\begin{aligned}
\|u^{[k]}\|_{1,\mathcal{D}}^2\leq C\|f\|_{-1,\mathcal{D}}^2\Big(1+\lambda\|u^{[k-1]}\|_{1,\mathcal{D}}^2 \Big)
\Longrightarrow \|u^{[k]}\|_{1,\mathcal{D}}^2\leq C\|f\|_{-1,\mathcal{D}}^2 
e^{C\lambda \|f\|_{-1,\mathcal{D}}^2} 
\end{aligned}
\end{equation}
with $C$ being independent of $\delta,\lambda,$ and $\|f\|_{-1,\mathcal{D}}$. Hence, we only need to choose a sufficiently small $M_{1,2}>0$ such that $$CM_{1,2}^2 e^{C\lambda_{1,1} M_{1,2}^2} < M_{1,1}^2$$ and restrict $\|f\|_{-1,\mathcal{D}}\leq M_{1,2}$; then the iteration continues inductively, producing a sequence $u^{[k]}$ with the relation
\begin{equation} \label{eq_uk_inductive_relation}
\mathfrak{a}^{aux,IV}_{u^{[k]},\lambda,\delta}(u^{[k+1]},v)=(f,v)_{\mathcal{D}},
\end{equation}
where each $u^{[k]}$ satisfies estimate \eqref{eq_interation_norm_constraint}. Using the Sobolev embedding, the $H^1$ estimate \eqref{eq_interation_norm_constraint} also yields the following $L^p$ inequality for any $p\geq 2$: 
\begin{equation} \label{eq_uk_Lp_norm}
\|u^{[k]}\|_{L^{p}(\mathcal{D})}<C_{p}M_{1,1}
\end{equation}
with $C_p>0$ depending only on $p$.

{\color{blue}Step 2.} Now we prove the convergence of the sequence $u^{[k]}$ by showing that it is a Cauchy sequence. Let $w^{[k]}:=u^{[k+1]}-u^{[k]}$. One can check that $w^{[k]}$ solves the following problem:
\begin{equation} \label{eq_1st_aux_func_proof_1}
\mathfrak{a}^{aux,IV}_{u^{[k]},\lambda,\delta}(w^{[k]},v)=(g,v)_{\mathcal{D}},\quad \text{with } g=\lambda\sigma \big(|u^{[k]}|^2-|u^{[k-1]}|^2 \big)u^{[k]} .
\end{equation}
By the Schwarz inequality,
\begin{equation} \label{eq_1st_aux_func_proof_2}
\begin{aligned}
\Big\| \lambda\sigma \big(|u^{[k]}|^2-|u^{[k-1]}|^2 \big)u^{[k]} \Big\|_{0,\mathcal{D}}
&\leq \Big\| \lambda\sigma \big(|u^{[k]}|+|u^{[k-1]}| \big)|w^{[k-1]}|u^{[k]} \Big\|_{0,\mathcal{D}} \\
&\leq C\lambda \big(\|u^{[k]}\|_{L^4(\mathcal{D})}+\|u^{[k-1]}\|_{L^4(\mathcal{D})}\big)\|u^{[k]}\|_{L^4(\mathcal{D})} \|w^{[k-1]}\|_{0,\mathcal{D}} \\
&\leq C\lambda M_{1,1}^{2} \|w^{[k-1]}\|_{0,\mathcal{D}},
\end{aligned}
\end{equation}
where $C>0$ is independent of $\lambda$ and $k$. Here, \eqref{eq_uk_Lp_norm} is applied to derive the last inequality. Using \eqref{eq_uk_Lp_norm} again, it follows that
\begin{equation} \label{eq_1st_aux_func_proof_3}
\begin{aligned}
\Big\| \lambda\sigma \big(|u^{[k]}|^2-|u^{[k-1]}|^2 \big)u^{[k]} \Big\|_{0,\mathcal{D}}
\leq C\lambda M_{1,1}^{2} \|u^{[k]}-u^{[k-1]}\|_{0,\mathcal{D}}
\leq C\lambda M_{1,1}^{3}.
\end{aligned}
\end{equation}
We choose $\lambda_{1,2}>0$ such that $C\lambda_{1,2} M_{1,1}^{2}<1$ in the above inequality. Then for $\lambda<\min\{\lambda_{1,1},\lambda_{1,2}\}$, we see that \eqref{eq_4th_aux_sol_estimate} can be applied to estimate the solution of \eqref{eq_1st_aux_func_proof_1}, implying that $w^{[k]}$ is squeezing. In fact, we have
\begin{equation*}
\begin{aligned}
\|w^{[k-1]}\|_{1,\mathcal{D}}^2
&\leq C\Big\| \lambda\sigma \big(|u^{[k]}|^2-|u^{[k-1]}|^2 \big)u^{[k]} \Big\|_{0,\mathcal{D}}^2\Big(1+\lambda\|u^{[k-1]}\|_{1,\mathcal{D}}^2 \Big) \\
&\overset{\eqref{eq_interation_norm_constraint},\eqref{eq_1st_aux_func_proof_2}}{\leq} C\lambda^2M_{1,1}^4(1+\lambda_{1,1}M_{1,1}) \|w^{[k-1]}\|_{1,\mathcal{D}}^2.
\end{aligned}
\end{equation*}
Hence, if we shall further select $\lambda_{1,3}$ such that
$$
C\lambda_{1,3}^2M_{1,1}^4(1+\lambda_{1,1}M_{1,1})<\frac{1}{4} ,
$$
then we see that for $\lambda<\min\{\lambda_{1,1},\lambda_{1,2},\lambda_{1,3}\}$, the sequence $u^{[k]}$ constructed in Step 1 is indeed a Cauchy sequence in $H^1(\mathcal{D})$. Denoting its limit by $u_{f}$, one can check that $\mathfrak{a}^{aux,I}_{\lambda,\delta}(u_f,v)=(f,v)_{\mathcal{D}}$ holds for all $v\in H^1(\mathcal{D})$ by pushing $k\to\infty$ in equality \eqref{eq_uk_inductive_relation}. We show only the details for the convergence of the nonlinear part of \eqref{eq_uk_inductive_relation}, \textit{i.e.},  $(|u^{[k]}|^2u^{[k+1]},v)_{\mathcal{D}}\to (|u_f|^2 u_f,v)_{\mathcal{D}}$, while the convergence of the linear part is straightforward. By the Schwarz inequality,
\begin{equation*}
\begin{aligned}
\Big|\big(|u^{[k]}|^2 u^{[k+1]}-|u_{*}|^2 u_{f},v \big)_{\mathcal{D}} \Big|
&=\Big|\big((|u^{[k]}|-|u_*|)(|u^{[k]}|+|u_f|)u^{[k+1]},v \big)_{\mathcal{D}}
+\big(|u_f|^2 (u^{[k+1]}-u_{f}),v \big)_{\mathcal{D}}\Big| \\
&\leq \|u^{[k]}-u_{f}\|_{L^4(\mathcal{D})}\big(\|u^{[k]}\|_{L^4(\mathcal{D})}+\|u_{f}\|_{L^4(\mathcal{D})}\big)\|u^{[k+1]}\|_{L^4(\mathcal{D})}\|v\|_{L^4(\mathcal{D})} \\
&\quad +\|u^{[k+1]}-u_{f}\|_{L^4(\mathcal{D})} \|u_{f}\|_{L^4(\mathcal{D})}^2 \|v\|_{L^4(\mathcal{D})}.
\end{aligned}
\end{equation*}
The right side converges to zero as $k\to\infty$ by the $H^1$ convergence of $u^{[k]}$ and the Sobolev embedding $H^{1}(\mathcal{D})\subset L^{4}(\mathcal{D})$. Hence, we conclude that there exists $\delta_{1,1}$ (introduced in Lemma \ref{lem_4th_form_invert}), $\lambda_{1,4}:=\min\{\lambda_{1,1},\lambda_{1,2},\lambda_{1,3}\}$ and $M_{1,3}:=\min\{M_{1,1},M_{1,2}\}>0$ such that for $\delta<\delta_{1,1}$, $\lambda<\lambda_{1,4}$ and $\|f\|_{-1,\mathcal{D}}\leq M_{1,3}$, there is a solution $u_{f}\in H^1(\mathcal{D})$ to $\mathfrak{a}^{aux,I}_{\lambda,\delta}(u_{f},v)=(f,v)_{\mathcal{D}}$. Moreover, since $\lim_{k} \|u^{[k]}-u_f\|_{1,\mathcal{D}}=0$ and each $u^{[k]}$ satisfies estimate \eqref{eq_iteration_seq_H1_estimate}, we immediately obtain the following $H^1$ estimate:
\begin{equation} \label{eq_1st_aux_sol_estimate_temp}
\|u_f\|_{1,\mathcal{D}}\leq C\|f\|_{-1,\mathcal{D}} \quad \text{with $C$ depending only on $\delta_{1,1},\lambda_{1,4},$ and $M_{1,3}$.}
\end{equation}

{\color{blue}Step 3.} We now prove the uniqueness of a solution to $\mathfrak{a}^{aux,I}_{\lambda,\delta}(u,v)=(f,v)_{\mathcal{D}}$ under appropriate control of the norms. Suppose that $u,\tilde{u}$ both solve this equation and satisfy estimate \eqref{eq_1st_aux_sol_estimate_temp}. Letting $w:=u-\tilde{u}$, we can check that $w$ satisfies:
\begin{equation} \label{eq_1st_aux_func_proof_4}
\mathfrak{a}^{aux,IV}_{u,\lambda,\delta}(w,v)=(g,v)_{\mathcal{D}} \quad \text{with } g=\lambda\sigma \big(|u|^2-|\tilde{u}|^2 \big)\tilde{u} .
\end{equation}
Similarly to \eqref{eq_1st_aux_func_proof_2}, we can prove that
\begin{equation} \label{eq_1st_aux_func_proof_5}
\begin{aligned}
\Big\| \lambda\sigma \big(|u|^2-|\tilde{u}|^2 \big)\tilde{u} \Big\|_{0,\mathcal{D}}
&\leq C\lambda \big(\|u\|_{L^4(\mathcal{D})}+\|\tilde{u}\|_{L^4(\mathcal{D})}\big)\|\tilde{u}\|_{L^4(\mathcal{D})} \|w\|_{0,\mathcal{D}} \\
&\leq C\lambda \|f\|_{-1,\mathcal{D}}^2 \|w\|_{0,\mathcal{D}}
\leq
C\lambda M_{1,3}^2 \|w\|_{0,\mathcal{D}},
\end{aligned}
\end{equation}
where the second inequality follows from estimate \eqref{eq_1st_aux_sol_estimate_temp} and the last one follows from the assumption $\|f\|_{-1,\mathcal{D}}\leq M_{1,3}$. Applying \eqref{eq_1st_aux_sol_estimate_temp} again, we see
$$
\|w\|_{0,\mathcal{D}}\leq \|u\|_{0,\mathcal{D}}+\|\tilde{u}\|_{0,\mathcal{D}}\leq C \|f\|_{0,\mathcal{D}}.
$$
Hence the condition $\|f\|_{0,\mathcal{D}}\leq M_{1,3}$ yields
\begin{equation*}
\Big\| \lambda\sigma \big(|u|^2-|\tilde{u}|^2 \big)\tilde{u} \Big\|_{0,\mathcal{D}}\leq C\lambda M_{1,3}^3 ,
\end{equation*}
where $C$ depends only on $\delta_{1,1},\lambda_{1,4},$ and $M_{1,3}$. We choose $\lambda_{1,5}>0$ such that $ C\lambda_{1,5} M_{1,3}^3 <M_{1,1}$ in the above inequality. Then for $\lambda<\min\{\lambda_{1,4},\lambda_{1,5}\}$, estimate \eqref{eq_4th_aux_sol_estimate} applies to the solution of \eqref{eq_1st_aux_func_proof_4} and leads to
\begin{equation*}
\begin{aligned}
\|w\|_{1,\mathcal{D}}^2
&\leq C\Big\| \lambda\sigma \big(|u|^2-|\tilde{u}|^2 \big)\tilde{u} \Big\|_{0,\mathcal{D}}^2\Big(1+\lambda\|u\|_{1,\mathcal{D}}^2 \Big) \\
&\leq C\lambda^2 M_{1,3}^4 \big(1+\lambda M_{1,3}^2 \big) \|w\|^2_{0,\mathcal{D}}.
\end{aligned}
\end{equation*}
We shall further select $\lambda_{1,6}>0$ such that in the above inequality
$$
C\lambda^2_{1,6} M_{1,3}^4 \big(1+\lambda_{1,6} M_{1,3}^2  \big) \leq \frac{1}{4}.
$$
Then it yields $\|w\|_{1,\mathcal{D}} \leq \frac{1}{2}\|w\|_{0,\mathcal{D}}$, which necessarily requires $w=0$ and hence, $u=\tilde{u}$. This concludes the proof of uniqueness.

In conclusion, the proof of Proposition \ref{prop_1st_aux_func} is completed by recording the appropriate choices of parameters as $\delta_{1}:=\delta_{1,1}$, $\lambda_{1}:=\min\{\lambda_{1,4},\lambda_{1,5},\lambda_{1,6}\}$ and $M_{1}:=M_{1,3}$.

\begin{proof}[Proof of Lemma \ref{lem_4th_form_invert}]
{\color{blue}Step 1.} We first prove that the principal part of $\mathfrak{a}^{aux,IV}_{\phi,\lambda,\delta}(u,v)$, \textit{i.e.}, $\mathfrak{a}^{aux,II}(u,v)$, is continuous and coercive. Indeed, by the Schwarz inequality, it follows that
\begin{equation} \label{eq_4th_form_invert_proof_1}
\mathfrak{a}^{aux,II}(u,u)=(\nabla u,\nabla u)_{\mathcal{D}}+\sum_{\substack{1\leq i\leq d \\ \bm{n}\in\mathbb{Z}^2}}\Big|\int_{\mathcal{D}_{\bm{n}}^{(i)}}u\Big|^2 \leq \|\nabla u\|_{0,\mathcal{D}}^2+\sum_{\substack{1\leq i\leq d \\ \bm{n}\in\mathbb{Z}^2}}\int_{\mathcal{D}_{\bm{n}}^{(i)}}|u|^2=\|u\|_{1,\mathcal{D}}^2.
\end{equation}
On the other hand, by the elementary inequality $\frac{1}{3}|a|^2-2|b|^2\leq |a-b|^2$ and the Poincaré inequality, we have
\begin{equation*}
\frac{1}{3}\int_{\mathcal{D}_{\bm{n}}^{(i)}}|u|^2-2\big|\int_{\mathcal{D}_{\bm{n}}^{(i)}}u\big|^2
=\int_{\mathcal{D}_{\bm{n}}^{(i)}}\Big(\frac{1}{3}|u|^2-2\big|\int_{\mathcal{D}_{\bm{n}}^{(i)}}u\big|^2\Big)
\leq \int_{\mathcal{D}_{\bm{n}}^{(i)}}\big|u-\int_{\mathcal{D}_{\bm{n}}^{(i)}}u \big|^2 \leq C\int_{\mathcal{D}_{\bm{n}}^{(i)}}|\nabla u|^2,
\end{equation*}
for some $C>0$. Moving the volume integral $\big|\int_{\mathcal{D}_{\bm{n}}^{(i)}}u\big|^2$ to the right-hand side and summing over the indices $i$ and $\bm{n}$, we have
\begin{equation} \label{eq_4th_form_invert_proof_2}
\|u\|_{0,\mathcal{D}}^2\leq C\Big((\nabla u,\nabla u)_{\mathcal{D}}+\sum_{\substack{1\leq i\leq d \\ \bm{n}\in\mathbb{Z}^2}}\Big|\int_{\mathcal{D}_{\bm{n}}^{(i)}}u\Big|^2 \Big).
\end{equation}
By \eqref{eq_4th_form_invert_proof_1} and \eqref{eq_4th_form_invert_proof_2}, there exists $C_1\geq 1$ such that
\begin{equation} \label{eq_4th_form_invert_proof_3}
C_{1}^{-1}\|u\|_{1,\mathcal{D}}^2\leq (\nabla u,\nabla u)_{\mathcal{D}}+\sum_{\substack{1\leq i\leq d \\ \bm{n}\in\mathbb{Z}^2}}\Big|\int_{\mathcal{D}_{\bm{n}}^{(i)}}u\Big|^2\leq C_{1}\|u\|_{1,\mathcal{D}}^2,
\end{equation}
which justifies our claim.

{\color{blue}Step 2.} Now, we see the reason for which the form $\mathfrak{a}^{aux,IV}_{\phi,\lambda,\delta}$ is well posed. For $\lambda,\delta$ and $\|\phi\|_{1,\mathcal{D}}$ being sufficiently small, the form $\mathfrak{a}^{aux,IV}_{\phi,\lambda,\delta}$ differs from its well-posed principle part $\mathfrak{a}^{aux,II}$ only by a perturbative term. Indeed, the Schwarz inequality and Sobolev embedding give rise to the following estimates:
\begin{equation} \label{eq_4th_form_invert_proof_4}
\big|\lambda ( u, v)_{\mathcal{D}} \big|\leq C_2\lambda \|u\|_{1,\mathcal{D}}\|v\|_{1,\mathcal{D}},
\end{equation}
\begin{equation} \label{eq_4th_form_invert_proof_5}
\big|\delta \langle \mathcal{T}^{\lambda}[u\big|_{\partial \mathcal{D}}], v\rangle\big| \leq C_{3} |\delta| \|\mathcal{T}^{\lambda}\|_{\mathcal{B}(H^{1/2}(\partial \mathcal{D}),H^{-1/2}(\partial \mathcal{D}))}  \|u\|_{1,\mathcal{D}}\|v\|_{1,\mathcal{D}}, 
\end{equation}
\begin{equation} \label{eq_4th_form_invert_proof_6}
\big|\lambda \sigma ( |\phi|^2 u, v)_{\mathcal{D}}\big|  \leq C_{4}\lambda\|\phi\|^{2}_{1,\mathcal{D}} \|u\|_{1,\mathcal{D}}\|v\|_{1,\mathcal{D}},
\end{equation}
where all the constants $C_k>0$ are independent of the parameters $\lambda,\delta,$ and $\|\phi\|$. By \eqref{eq_4th_form_invert_proof_3}-\eqref{eq_4th_form_invert_proof_6}, it is sufficient to choose $\delta_{1,1},\lambda_{1,1}, M_{1,1}>0$ such that
\begin{equation} \label{eq_4th_form_invert_proof_7}
\left\{
\begin{aligned}
&C_2\lambda_{1,1} \leq C_1^{-1}/6, \\
&C_3\delta_{1,1} \sup_{|\lambda|<\lambda_0}\|\mathcal{T}^{\lambda}\|_{\mathcal{B}(H^{1/2}(\partial \mathcal{D}),H^{-1/2}(\partial \mathcal{D}))} \leq C_1^{-1}/6, \\
&C_4\lambda_{1,1} M_{1,1}^2 \leq C_1^{-1}/6 ,
\end{aligned}
\right.
\end{equation}
where $\lambda_0$ is chosen in Proposition \ref{prop_d2n_map} which guarantees $\sup_{|\lambda|<\lambda_0}\|\mathcal{T}^{\lambda}\|_{\mathcal{B}(H^{1/2}(\partial \mathcal{D}),H^{-1/2}(\partial \mathcal{D}))}<\infty$. Then we see from \eqref{eq_4th_aux_form} that $\mathfrak{a}^{aux,IV}_{\phi,\lambda,\delta}$ is continuous and coercive when $\delta<\delta_{1,1}$, $\lambda<\lambda_{1,1}$ and $\|\phi\|_{1,\mathcal{D}}<M_{1,1}$ that
\begin{equation} \label{eq_4th_form_invert_proof_8}
\frac{1}{2}C_{1}^{-1}\|u\|_{1,\mathcal{D}}^2\leq \mathfrak{a}^{aux,IV}_{\phi,\lambda,\delta}(u,u)\leq \frac{3}{2}C_{1}\|u\|_{1,\mathcal{D}}^2.
\end{equation}
Estimate \eqref{eq_4th_form_invert_proof_8}, together with the Lax-Milgram theorem, allows us to conclude that there exists a unique solution $u_{f}\in H^1(\mathcal{D})$ to the problem $\mathfrak{a}^{aux,IV}_{\phi,\lambda,\delta}(u,v)=(f,v)_{\mathcal{D}}$ for any $f\in L^2(\mathcal{D})$.

{\color{blue}Step 3.} Finally, we prove estimate \eqref{eq_4th_aux_sol_estimate}. Taking $v=u$ in $\mathfrak{a}^{aux,IV}_{\phi,\lambda,\delta}(u,v)=(f,v)_{\mathcal{D}}$ and applying \eqref{eq_4th_form_invert_proof_3}-\eqref{eq_4th_form_invert_proof_6}, we obtain that
\begin{equation*}
\begin{aligned}
\frac{2}{3}C_{1}^{-1}\|u\|_{1,\mathcal{D}}^2
&\leq \Big|(\nabla u,\nabla v)_{\mathcal{D}}+\sum_{\substack{1\leq i\leq d \\ \bm{n}\in\mathbb{Z}^2}}\int_{\mathcal{D}_{\bm{n}}^{(i)}}u\int_{\mathcal{D}_{\bm{n}}^{(i)}}\overline{v}-\lambda ( u, v)_{\mathcal{D}}-\delta \langle \mathcal{T}^{\lambda}[u\big|_{\partial \mathcal{D}}], v\rangle\Big| \\
&\leq |(f,u)_{\mathcal{D}}|+|\lambda\sigma ( |\phi|^2 u, u)_{\mathcal{D}}| \\
&\leq \|f\|_{-1,\mathcal{D}}\|u\|_{1,\mathcal{D}}+C_{4}\lambda\|\phi\|^{2}_{1,\mathcal{D}} \|u\|_{1,\mathcal{D}}^2.
\end{aligned}
\end{equation*}
Hence, for any $M>0$, it holds that
\begin{equation*}
\frac{2}{3}C_{1}^{-1}\|u\|_{1,\mathcal{D}}^2\leq M^2\|f\|_{-1,\mathcal{D}}^2+\frac{1}{4M^2}\|u\|_{0,\mathcal{D}}^2+C_{4}\lambda\|\phi\|^{2}_{1,\mathcal{D}} \|u\|_{1,\mathcal{D}}^2.
\end{equation*}
Select $M>0$ such that $\frac{1}{4M^2}<\frac{1}{3}C_{1}^{-1}$. Then we see
\begin{equation} \label{eq_4th_form_invert_proof_9}
\begin{aligned}
\big(\frac{1}{3}C_{1}^{-1}-C_{4}\lambda\|\phi\|^{2}_{1,\mathcal{D}} \big)\|u\|_{1,\mathcal{D}}^2 \leq M^2\|f\|_{-1,\mathcal{D}}^2.
\end{aligned}
\end{equation}
We further impose the following constraints on $\lambda_{1,1}$ and $M_{1,1}$:
\begin{equation} \label{eq_4th_form_invert_proof_10}
3C_{1}C_{4}\lambda_{1,1}M_{1,1}^2<\frac{1}{2} .
\end{equation}
Then for $\lambda<\lambda_{1,1}$ and $\|\phi\|_{1,\mathcal{D}}\leq M_{1,1}$, \eqref{eq_4th_form_invert_proof_9} leads to
\begin{equation*}
\|u\|_{1,\mathcal{D}}^2 \leq \frac{3C_1M^2\|f\|_{-1,\mathcal{D}}^2}{1-3C_1C_4\lambda \|\phi\|_{1,\mathcal{D}}^2}
\leq C\|f\|_{-1,\mathcal{D}}^2\big(1+\lambda\|\phi\|_{1,\mathcal{D}}^2 \big)
\end{equation*}
with $C:=6C_1M^2\max\{1,3C_1C_4\}$. This concludes the proof of \eqref{eq_4th_aux_sol_estimate}.

In conclusion, by choosing $\lambda_{1,1},\delta_{1,1},M_{1,1}>0$ that satisfy conditions \eqref{eq_4th_form_invert_proof_7} and \eqref{eq_4th_form_invert_proof_10} (or simply setting them to be sufficiently small), all the statements in Lemma \ref{lem_4th_form_invert} hold.
\end{proof}

\subsection{Proof of Proposition \ref{prop_ansatz_aux_sol}} \label{sec_4_2} \enspace

{\color{blue}Step 1.} The proof of the first part is straightforward. Suppose that $f=\sum_{\substack{1\leq i\leq d \\ \bm{n}\in\mathbb{Z}^2}}a_{\bm{n}}^{(i)}\mathbbm{1}_{\mathcal{D}_{\bm{n}}^{(i)}}$. By the definition of $u^{(0,0)}_{a}$, $u^{(1,0)}_{a}$ and $u^{(0,1)}_{a}$ in \eqref{eq_ansatz_aux_sol_2} and \eqref{eq_2nd_aux_form}, one sees that they solve the following problems, respectively:
\begin{equation*}
\mathfrak{a}^{aux,II}(u^{(0,0)}_{a},v)=(f,v)_{\mathcal{D}},
\end{equation*}
\begin{equation*}
\mathfrak{a}^{aux,II}(u^{(1,0)}_{a},v)=\big(u^{(0,0)}_{a}+\sigma |u^{(0,0)}_{a}|^2u^{(0,0)}_{a},v\big)_{\mathcal{D}},
\end{equation*}
\begin{equation*}
\mathfrak{a}^{aux,II}(u^{(0,1)}_{a},v): = \langle \mathcal{T}^{\lambda}[u^{(0,0)}_{a}\big|_{\partial \mathcal{D}}], v\rangle , 
\end{equation*}
Then, by subtracting the above identities from $\mathfrak{a}^{aux,I}_{\lambda,\delta}(u_{f},v)=(f,v)_{\mathcal{D}}$, one can directly check that the remainder $r_{a}:=u_{f}-u^{(0,0)}_{a}-\lambda u^{(1,0)}_{a}-\delta u^{(0,1)}_{a}$ satisfies equation \eqref{eq_3rd_aux_form}. We also note that identity \eqref{eq_u01_boundary_int} is verified by applying the integration by parts and using the partial differential equation formulation of \eqref{eq_2nd_aux_form}:
\begin{equation*}
\left\{
\begin{aligned}
&-\Delta u^{(0,1)}_{a}+\sum_{\substack{1\leq i\leq d \\ \bm{n}\in\mathbb{Z}^2}}\big(\int_{\mathcal{D}_{\bm{n}}^{(i)}}u^{(0,1)}_{a}\big)\mathbbm{1}_{\mathcal{D}_{\bm{n}}^{(i)}}=0\quad \text{in }\mathcal{D}, \\
&\frac{\partial u^{(0,1)}_{a}}{\partial \nu}=\mathcal{T}^{0}[u^{(0,1)}_{a}\big|_{\mathcal{D}}] \quad \text{on }\partial \mathcal{D}^{-}.
\end{aligned}
\right.
\end{equation*}

{\color{blue}Step 2.} The second part of Proposition \ref{prop_ansatz_aux_sol}, \textit{i.e.}, the solvability of the following problem: \begin{equation} \label{eq:pdef} \mathfrak{a}^{aux,III}_{a,\lambda,\delta}(u,v)=(f,v)_{\mathcal{D}},
\end{equation} is proved following the same lines as in Proposition \ref{prop_1st_aux_func}. Here we only sketch the idea and skip all the details. As in the proof of Proposition \ref{prop_1st_aux_func}, the key is to construct the solution through an iteration argument. For problem \eqref{eq:pdef}, the iteration is based on the following auxiliary form:
\begin{equation} \label{eq_5th_aux_form}
\begin{aligned}
\mathfrak{a}^{aux,V}_{a,\phi,\lambda,\delta}(u,v)
:=& (\nabla u,\nabla v)_{\mathcal{D}}+\sum_{\substack{1\leq i\leq d \\ \bm{n}\in\mathbb{Z}^2}}\int_{\mathcal{D}_{\bm{n}}^{(i)}}u\int_{\mathcal{D}_{\bm{n}}^{(i)}}\overline{v}-\lambda(u,v)_{\mathcal{D}}-\lambda\sigma (|u^{(0,0)}_{a}|^2 u,v)_{\mathcal{D}} \\
&-2\lambda\sigma \left(\Re(\phi\overline{u^{(0,0)}_{a}})u,v\right)_{\mathcal{D}}-2\lambda\sigma \left(\Re((\lambda u^{(1,0)}_{a}+\delta u^{(0,1)}_{a})\overline{u^{(0,0)}_{a}})u,v\right)_{\mathcal{D}} \\
&-\lambda\sigma \left((u^{(0,0)}_{a}+\lambda u^{(1,0)}_{a}+\delta u^{(0,1)}_{a})\overline{u^{(0,0)}_{a}}u,v\right)_{\mathcal{D}} \\ 
&-\lambda\sigma (|\lambda u^{(1,0)}_{a}+\delta u^{(0,1)}_{a}|^2 u,v)_{\mathcal{D}} \\
&-\lambda\sigma\left( (u^{(0,0)}_{a}+\lambda u^{(1,0)}_{a}+\delta u^{(0,1)}_{a}) (\overline{\lambda u^{(1,0)}_{a}+\delta u^{(0,1)}_{a}})u ,v \right)_{\mathcal{D}} \\
&-\lambda\sigma \left( \overline{\phi}(u^{(0,0)}_{a}+\lambda u^{(1,0)}_{a}+\delta u^{(0,1)}_{a})u,v \right)_{\mathcal{D}} -\lambda\sigma (|\phi|^2 u,v)_{\mathcal{D}} \\
&-2\lambda\sigma \left( \phi\Re( (\overline{\lambda u^{(1,0)}_{a}+\delta u^{(0,1)}_{a}})u,v ) \right)_{\mathcal{D}} \\
&-\delta\langle \mathcal{T}^{\lambda}[u\big|_{\partial \mathcal{D}}],v \rangle \\
&-\lambda\sigma \left((u^{(0,0)}_{a}+\lambda u^{(1,0)}_{a}+\delta u^{(0,1)}_{a}) u^{(0,0)}_{a}\overline{\phi},v\right)_{\mathcal{D}} \\
&-\lambda\sigma\left( (u^{(0,0)}_{a}+\lambda u^{(1,0)}_{a}+\delta u^{(0,1)}_{a})(\lambda u^{(1,0)}_{a}+\delta u^{(0,1)}_{a})\overline{\phi},v \right)_{\mathcal{D}}. \\
\end{aligned}
\end{equation}
Although this form is more complex compared to the auxiliary form $\mathfrak{a}^{aux,IV}_{\phi,\lambda,\delta}(u,v)$ in the proof of Proposition \ref{prop_1st_aux_func}, we remark that they share a similar structure. In fact, both the two forms are small perturbations of the well-posed principle part $$\mathfrak{a}^{aux,II}(u,v)=(\nabla u,\nabla v)_{\mathcal{D}}+\sum_{\substack{1\leq i\leq d \\ \bm{n}\in\mathbb{Z}^2}}\int_{\mathcal{D}_{\bm{n}}^{(i)}}u\int_{\mathcal{D}_{\bm{n}}^{(i)}}\overline{v},$$ when $\lambda,\delta,$ and $a$ are appropriately chosen and $\|\phi\|$ is appropriately controlled. As shown in the proof of Proposition \ref{prop_1st_aux_func}, this facilitates the construction of a sequence $u^{[k]}$ with the following relation:
\begin{equation} \label{eq_ansatz_aux_sol_proof_1}
\mathfrak{a}^{aux,V}_{a,u^{[k-1]},\lambda,\delta}(u^{[k]},v)=(f,v)_{\mathcal{D}}\quad (k=1,2,\ldots).
\end{equation}
Recall that this construction requires a $k-$independent bound of $\|u^{[k]}\|_{1,\mathcal{D}}$, similar to \eqref{eq_interation_norm_constraint}. Fortunately, this can be obtained by the Gronwall inequality, as we did in \eqref{eq_iteration_seq_H1_estimate}; in fact, the availability of this uniform estimate is a natural consequence of the polynomial dependence of our auxiliary form $\mathfrak{a}^{aux,V}_{a,\phi,\lambda,\delta}(u,v)$ on $\phi$ and $u$. We note the following estimates, which are derived similarly to \eqref{eq_tight_bind_proof_4}-\eqref{eq_tight_bind_proof_9}:
\begin{equation*}
\Big|\lambda\sigma \left(\Re(\phi\overline{u^{(0,0)}_{a}})u,u\right)_{\mathcal{D}}\Big|
\leq \lambda C \|\phi\|_{1,\mathcal{D}} \|a\|_{\ell^2(\mathbb{Z}^2;\mathbb{C}^{d})}\|u\|_{1,\mathcal{D}}^2,
\end{equation*}
\begin{equation*}
\Big|\left( \overline{\phi}(u^{(0,0)}_{a}+\lambda u^{(1,0)}_{a}+\delta u^{(0,1)}_{a})u,u \right)_{\mathcal{D}}  \Big|
\leq \lambda C\|\phi\|_{1,\mathcal{D}}(1+\|a\|_{\ell^2(\mathbb{Z}^2;\mathbb{C}^{d})}^{3}) \|u\|_{1,\mathcal{D}}^2,
\end{equation*}
\begin{equation*}
\Big|\lambda\sigma (|\phi|^2 u,u)_{\mathcal{D}}\Big|
\leq \lambda C \|\phi\|_{1,\mathcal{D}}^2 \|u\|_{1,\mathcal{D}}^2,
\end{equation*}
\begin{equation*}
\Big| \lambda\sigma \left( \phi\Re( (\overline{\lambda u^{(1,0)}_{a}+\delta u^{(0,1)}_{a}})u,u ) \right)_{\mathcal{D}}\Big|
\leq \lambda(\lambda+\delta)C\|\phi\|_{1,\mathcal{D}}(1+\|a\|_{\ell^2(\mathbb{Z}^2;\mathbb{C}^{d})}^{3}) \|u\|_{1,\mathcal{D}}^2,
\end{equation*}
\begin{equation*}
\Big| \lambda\sigma \left((u^{(0,0)}_{a}+\lambda u^{(1,0)}_{a}+\delta u^{(0,1)}_{a}) u^{(0,0)}_{a}\overline{\phi},u\right)_{\mathcal{D}} \Big| \leq \lambda C \|a\|_{\ell^2(\mathbb{Z}^2;\mathbb{C}^d)}(1+\|a\|_{\ell^2(\mathbb{Z}^2;\mathbb{C}^{d})}^{3}) \|\phi\|_{1,\mathcal{D}} \|u\|_{1,\mathcal{D}},
\end{equation*}
\begin{equation*}
\Big| \lambda\sigma\left( (u^{(0,0)}_{a}+\lambda u^{(1,0)}_{a}+\delta u^{(0,1)}_{a})(\lambda u^{(1,0)}_{a}+\delta u^{(0,1)}_{a})\overline{\phi},u \right)_{\mathcal{D}} \Big| \leq \lambda C (1+\|a\|_{\ell^2(\mathbb{Z}^2;\mathbb{C}^{d})}^{3})^2 \|\phi\|_{1,\mathcal{D}} \|u\|_{1,\mathcal{D}}.
\end{equation*}
Hence, by suitably choosing $\lambda_2,\delta_2,M_2>0$ and when $\lambda<\lambda_2,\delta<\delta_2,\|a\|_{\ell^2(\mathbb{Z}^2;\mathbb{C}^d)}<M_2$, we can see that the right-hand sides of the above inequalities are all controlled by
\begin{equation*}
C\lambda \|\phi\|_{1,\mathcal{D}}\|u\|_{1,\mathcal{D}}(1+\|u\|_{1,\mathcal{D}}+\|\phi\|_{1,\mathcal{D}}\|u\|_{1,\mathcal{D}}),
\end{equation*}
where $C$ depends only on $\lambda_2,\delta_2,$ and $M_2$. On the other hand, all the remaining terms in \eqref{eq_5th_aux_form}, if taken $v=u$, are relatively bounded by the principal term $\|u\|^2_{1,\mathcal{D}}$. In short, by choosing sufficiently small $\lambda_2,\delta_2,$ and $M_2>0$, the following holds when $\lambda<\lambda_2,\delta<\delta_2,\|a\|_{\ell^2(\mathbb{Z}^2;\mathbb{C}^d)}<M_2$:
\begin{equation} \label{eq_ansatz_aux_sol_proof_2}
C^{-1}\|u^{[k]}\|_{1,\mathcal{D}}^2
\leq \|f\|_{-1,\mathcal{D}}^2
+\lambda C\|u^{[k-1]}\|_{1,\mathcal{D}} \|u^{[k]}\|_{1,\mathcal{D}}\big(1+\|u^{[k]}\|_{1,\mathcal{D}}+\|u^{[k-1]}\|_{1,\mathcal{D}} \|u^{[k]}\|_{1,\mathcal{D}} \big),
\end{equation}
where $C$ depends only on $\lambda_2,\delta_2,$ and $M_2$. This estimate, arguing similarly as in Step 1 of the proof of Proposition \ref{prop_1st_aux_func}, leads to the uniform boundedness of $u^{[k]}$. Then, following the lines of Step 2 of Proposition \ref{prop_1st_aux_func}, one can check that the sequence indeed converges in $H^1(\mathcal{D})$, whose limit solves the problem $\mathfrak{a}^{aux,III}_{a,\lambda,\delta}(u,v)=(f,v)_{\mathcal{D}}$.

The uniqueness of the solution to $\mathfrak{a}^{aux,III}_{a,\lambda,\delta}(u,v)=(f,v)_{\mathcal{D}}$ is also proved in a similar way to Proposition \ref{prop_1st_aux_func}, which requires further restriction on the size of $\lambda_2,\delta_2,$ and $M_2$ and the amplitude $\|f\|_{-1,\mathcal{D}}$ of the source; we skip the details here.

\section{Discrete gap solitons: Proof of Theorem \ref{thm_defect_gap_solitons}} 
\label{sec_gap_soliton}

\subsection{Roadmap} \label{subsec_soliton_road}

Motivated by the study of the whole-space gap-solitons in \cite{pankov2010soliton_discrete,Pankov2005soliton_continuous}, our proof for the existence of defect gap-solitons proceeds in two steps. Due to the loss of compactness of the unit ball in $\ell^2(\mathbb{Z}^2;\mathbb{C}^{d})$, we first find a sequence of (approximate) solutions $a^{[k]}$ to \eqref{eq_discrete_station_eq_defect} in finite-dimensional spaces ($k-$periodic spaces as introduced later). Then we extract a subsequence $a^{[k_i]}$ which (weakly) converges to a solution to \eqref{eq_discrete_station_eq_defect} in $\ell^2(\mathbb{Z}^2;\mathbb{C}^{d})$.

To start, we define the following functional space:
\begin{equation*}
    \ell^2_{k}(\mathbb{Z}^2;\mathbb{C}^{d}):=\big\{a\in \ell^{2}_{loc}(\mathbb{Z}^2;\mathbb{C}^{d}):\, a_{\bm{n}+k\bm{e}_i}=a_{\bm{n}},\, i=1,2\big\}.
\end{equation*}
Note that $\ell^2_{k}(\mathbb{Z}^2;\mathbb{C}^{d})\simeq \mathbb{C}^{k^2d}$ is a finite-dimensional Hilbert space equipped with the local $\ell^2-$inner product and norm:
\begin{equation*}
(a,b)_{\ell^2_{k}(\mathbb{Z}^2;\mathbb{C}^{d})}:=(a\cdot\chi_{Y_k},b\cdot\chi_{Y_k})_{\ell^2(\mathbb{Z}^2;\mathbb{C}^{d})},\quad \|a\|_{\ell^2_{k}(\mathbb{Z}^2;\mathbb{C}^{d})}:=\sqrt{(a,a)_{\ell^2_{k}(\mathbb{Z}^2;\mathbb{C}^{d})}}.
\end{equation*}
Here, $Y_k:=\{\bm{n}:0\leq n_1,n_2<k\}\subset \mathbb{Z}^2$ denotes the primitive $k-$cell. The $\ell^{p}_{k}$ norm for any $p\geq 1$ is defined similarly; note that all $\ell^{p}_{k}$ norms on $\ell^2_{k}(\mathbb{Z}^2;\mathbb{C}^{d})$ are equivalent. In the sequel, we denote $\ell^{p}=\ell^{p}(\mathbb{Z}^2;\mathbb{C}^d)$ and $\ell^{p}_{k}=\ell^{p}_{k}(\mathbb{Z}^2;\mathbb{C}^d)$ for brevity. The following operators between $\ell^{p}$ and $\ell^{p}_{k}$ are useful: the truncation operator $R^{[k]}$ is defined to restrict a sequence on $Y_k$:
\begin{equation*}
(R^{[k]} a)_{\bm{n}}:=
\left\{
\begin{aligned}
& a_{\bm{n}},\quad \bm{n}\in Y_k, \\
& 0,\quad \text{else,}
\end{aligned}
\right.
\end{equation*}
and the periodization operator $S_{k}$ first truncates a sequence onto $Y_k$, then extends it periodically:
\begin{equation*}
(S^{[k]} a)_{\bm{n}}:=(R^{[k]} a)_{(n_1 \text{ mod }k,n_2 \text{ mod } k)}.
\end{equation*}
Due to the periodicity of $\mathcal{C}$ as shown in Proposition \ref{prop_cap_operator}, it is invariant on $\ell^p_{k}$; we denote the restriction of $\mathcal{C}$ on $\ell^2_{k}$ by $\mathcal{C}^{[k]}$. The energy functional on $\ell^2_{k}$ associated with the nonlinear problem \eqref{eq_discrete_station_eq} is given by
\begin{equation*}
J^{[k]}(a):=\frac{1}{2}(\mathcal{C}^{[k]}a,a)_{\ell^2_{k}}-\frac{1}{2}\lambda\|a\|_{\ell^2_{k}}^2+\frac{1}{2}(S^{[k]}VR^{[k]}a,a)_{\ell^2_k}-\frac{1}{4}\lambda\sigma \|a\|_{\ell^4_{k}}^4.
\end{equation*}
By definition, it is directly seen that $J^{[k]}$ is a real-valued $C^1$ functional and its derivative is given by
\begin{equation*}
(J^{[k]})^{\prime}(a)=\mathcal{C}^{[k]}a-\lambda a+S^{[k]}VR^{[k]}a-\lambda\sigma |a|^2a .
\end{equation*}

The main result of the first step is that for any $\lambda\in \mathcal{I}$ and $k$ being large, if the size of the defect $V$ is suitably controlled, then $J^{[k]}$ has a critical point with a uniformly bounded norm. Moreover, the projections of the critical points onto a specific vector is uniformly bounded from below. Specifically, we have:
\begin{proposition} \label{prop_periodic_gap_solitons}
Suppose that $\mathcal{I}$ is a spectral gap of $\mathcal{C}$ as stated in Theorem \ref{thm_gap_solitons} and fix $\lambda\in \mathcal{I}$. There exists $z_0\in \ell^2$, $k_0\in\mathbb{N}$, $M_0>0$ such that for any $k>k_0$ and
\begin{equation*}
2\|V\|_{\mathcal{B}(\ell^\infty,\ell^{1})}<\delta:=\text{dist}\{\lambda,\text{Spec}(\mathcal{C})\},
\end{equation*}
there exists $a^{[k]}\in \ell^2_{k}$ that is a critical point of the functional $J^{[k]}$ and satisfies
\begin{equation} \label{eq_periodic_gap_solitons_1}
\|a^{[k]}\|_{\ell^2_{k}}\leq M_1,\quad (R^{[k]}a^{[k]},z_0)_{\ell^2}\geq m_1,
\end{equation}
where $m_1,M_1>0$ are independent of $k$.
\end{proposition}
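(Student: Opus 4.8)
We plan to obtain $a^{[k]}$ as a critical point of the $C^1$ functional $J^{[k]}$ on the finite‑dimensional space $\ell^2_k$ by an indefinite (linking‑type) min–max argument, in the spirit of the whole‑space constructions of \cite{pankov2010soliton_discrete,Pankov2005soliton_continuous}, and then to track the $k$‑dependence of every constant so as to extract the uniform bounds in \eqref{eq_periodic_gap_solitons_1}. Write $L^{[k]}:=\mathcal{C}^{[k]}+S^{[k]}VR^{[k]}-\lambda$, a bounded self‑adjoint operator on $\ell^2_k$, so that $J^{[k]}(a)=\tfrac12(L^{[k]}a,a)_{\ell^2_k}-\tfrac14\lambda\sigma\|a\|_{\ell^4_k}^4$ with $\lambda\sigma>0$. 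The first step is the spectral picture. By Floquet–Bloch theory the $k$‑periodic spectrum satisfies $\text{Spec}(\mathcal{C}^{[k]})\subseteq\text{Spec}(\mathcal{C})$ — the admissible Bloch parameters of $\mathcal{C}^{[k]}$ form a finite subset of the Brillouin zone — so $\text{dist}\{\lambda,\text{Spec}(\mathcal{C}^{[k]})\}\ge\delta$ for \emph{all} $k$, while $\bigcup_k\text{Spec}(\mathcal{C}^{[k]})$ is dense in $\text{Spec}(\mathcal{C})$, so for all $k$ larger than some $k_0$ the set $\text{Spec}(\mathcal{C}^{[k]})$ meets both $(-\infty,\inf\mathcal{I})$ and $(\sup\mathcal{I},\infty)$. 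Since $\|S^{[k]}VR^{[k]}\|_{\mathcal{B}(\ell^2_k)}\le\|V\|_{\mathcal{B}(\ell^{\infty},\ell^{1})}<\delta/2$ (using $\|\cdot\|_{\ell^\infty}\le\|\cdot\|_{\ell^2}$ and $\|\cdot\|_{\ell^2}\le\|\cdot\|_{\ell^1}$ on sequence spaces), elementary perturbation theory shows that $L^{[k]}$ has a spectral gap about $0$ of half‑width $\ge\delta/2$, with nontrivial spectrum on both sides. We then split $\ell^2_k=X^+_k\oplus X^-_k$ orthogonally according to the sign of $L^{[k]}$, so that the quadratic form $q^{[k]}(a):=(L^{[k]}a,a)_{\ell^2_k}$ obeys $q^{[k]}(a)\ge\tfrac\delta2\|a\|_{\ell^2_k}^2$ for $a\in X^+_k$ and $q^{[k]}(a)\le-\tfrac\delta2\|a\|_{\ell^2_k}^2$ for $a\in X^-_k$, uniformly in $k$; this is the ``valley–top'' structure that drives the existence of a saddle point.

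The heart of the matter is to make the linking geometry quantitative \emph{uniformly in $k$}. On $X^+_k$ one has $J^{[k]}(a)\ge\tfrac\delta4\|a\|_{\ell^2_k}^2-\tfrac{\lambda\sigma}4\|a\|_{\ell^2_k}^4$ (because $\|\cdot\|_{\ell^4}\le\|\cdot\|_{\ell^2}$), so $J^{[k]}\ge\alpha>0$ on $\{a\in X^+_k:\|a\|_{\ell^2_k}=\rho\}$ for a fixed small $\rho$, with $\alpha,\rho$ independent of $k$. For the ``cap'' we fix, once and for all, a finitely supported $z_0\in\ell^2$ with $((\mathcal{C}+V-\lambda)z_0,z_0)_{\ell^2}=:\beta_0>0$ — possible since $\mathcal{C}+V-\lambda$ has positive spectrum and finitely supported vectors are dense — so that $z_0\in\ell^2_k$ and $q^{[k]}(z_0)=\beta_0$ for $k$ large; writing $z_0=z_0^++z_0^-$ with $z_0^{\pm}\in X^{\pm}_k$, the chain $\beta_0\le q^{[k]}(z_0^+)\le\|L^{[k]}\|\,\|z_0^+\|_{\ell^2}^2$ gives $\|z_0^+\|_{\ell^2}\ge\gamma_0>0$ uniformly. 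The decisive input here is that, exactly as in the proof of Proposition \ref{prop_cap_operator}, a Combes–Thomas estimate shows the spectral (band) projections $P^{\pm}_k$ onto $X^{\pm}_k$ have exponentially decaying kernels \emph{with rate and prefactor independent of $k$} — because $\lambda$ is uniformly separated from $\text{Spec}(\mathcal{C}^{[k]})$ and $\mathcal{C}$ has exponential off‑diagonal decay — so that Schur's test yields $\sup_k\|P^{\pm}_k\|_{\mathcal{B}(\ell^p_k)}<\infty$ for every $p\ge1$. In particular $z_0^+$ is uniformly exponentially localized, whence $\|z_0^+\|_{\ell^4}\ge c_0>0$ uniformly, and $\|P^+_k a\|_{\ell^4}\le C\|a\|_{\ell^4}$ for all $a$. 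Combining $q^{[k]}\le0$ on $X^-_k$, the orthogonality $X^+_k\perp X^-_k$, and a dichotomy according to whether $\|P^+_k(v+tz_0^+)\|_{\ell^4}=t\,\|z_0^+\|_{\ell^4}$ is small or not relative to $\|v+tz_0^+\|_{\ell^4}$, one checks that $J^{[k]}(v+tz_0^+)\le0$ as soon as $\|v+tz_0^+\|_{\ell^2_k}\ge N_1$ with $N_1$ independent of $k$. Hence, with $\rho<N_1$, the set $Q_k:=\{v+tz_0^+:v\in X^-_k,\ t\ge0,\ \|v+tz_0^+\|_{\ell^2_k}\le N_1\}$ and the sphere $\{a\in X^+_k:\|a\|_{\ell^2_k}=\rho\}$ link, $J^{[k]}\le0$ on $\partial Q_k$, and $\sup_{Q_k}J^{[k]}\le\tfrac12\|L^{[k]}\|N_1^2=:\bar C$ uniformly. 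Since $\dim\ell^2_k<\infty$ no Palais–Smale condition is needed, and the linking theorem yields a critical point $a^{[k]}$ with $J^{[k]}(a^{[k]})=c_k\in[\alpha,\bar C]$.

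The uniform bound $\|a^{[k]}\|_{\ell^2_k}\le M_1$ then follows from the Nehari identity together with the same projection estimates. Testing $(J^{[k]})'(a^{[k]})=0$ against $a^{[k]}$ gives $q^{[k]}(a^{[k]})=\lambda\sigma\|a^{[k]}\|_{\ell^4_k}^4$, hence $c_k=\tfrac14\lambda\sigma\|a^{[k]}\|_{\ell^4_k}^4$ and $\tfrac{4\alpha}{\lambda\sigma}\le\|a^{[k]}\|_{\ell^4_k}^4\le\tfrac{4\bar C}{\lambda\sigma}$. Testing against $a^{[k],\pm}:=P^{\pm}_k a^{[k]}$, using that $X^{\pm}_k$ are invariant under $L^{[k]}$, the bounds $|q^{[k]}(a^{[k],\pm})|\ge\tfrac\delta2\|a^{[k],\pm}\|_{\ell^2_k}^2$ and $\|a^{[k],\pm}\|_{\ell^4}\le C\|a^{[k]}\|_{\ell^4}$, one gets $\|a^{[k],\pm}\|_{\ell^2_k}^2\le\tfrac{2\lambda\sigma C}{\delta}\|a^{[k]}\|_{\ell^4_k}^4\le\tfrac{8\bar CC}{\delta}$, and therefore $\|a^{[k]}\|_{\ell^2_k}\le M_1$ with $M_1$ independent of $k$.

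It remains to produce the lower bound $(R^{[k]}a^{[k]},z_0)_{\ell^2}\ge m_1$, and this is the step I expect to be the main obstacle, since it is where the loss of compactness (degeneration of the $\ell^4$‑versus‑$\ell^2$ norm equivalence as $k\to\infty$) bites. From $\|a^{[k]}\|_{\ell^4_k}^4\le\|a^{[k]}\|_{\ell^\infty}^2\|a^{[k]}\|_{\ell^2_k}^2\le M_1^2\|a^{[k]}\|_{\ell^\infty}^2$ and the lower bound on $\|a^{[k]}\|_{\ell^4_k}$ one gets $\|a^{[k]}\|_{\ell^\infty}\ge m_\infty>0$, i.e.\ every $a^{[k]}$ carries a bump of fixed minimal height at some site $\bm n_k$. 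The plan is then to normalize the location of this bump using the translation covariance of the $V$‑free part of $J^{[k]}$ (genuine because $\mathcal{C}^{[k]}$ is periodic) together with the smallness of $V$, so that $\bm n_k$ may be taken in a fixed bounded region and $z_0$ chosen (localized, of matching phase) with $(R^{[k]}a^{[k]},z_0)_{\ell^2}\ge m_1>0$; equivalently, this non‑vanishing is precisely what guarantees a nontrivial weak limit in the subsequent extraction $k_i\to\infty$, and is expected to follow from a discrete concentration–compactness alternative applied to $L^{[k]}a^{[k]}=\lambda\sigma|a^{[k]}|^2a^{[k]}$ — if $(R^{[k]}a^{[k]},z_0)\to0$ for every fixed localized $z_0$ then $\|a^{[k]}\|_{\ell^\infty}\to0$, hence $\|a^{[k]}\|_{\ell^4}\to0$, contradicting $c_k\ge\alpha$. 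Carrying this out rigorously so that a single fixed $z_0$ works for \emph{all} large $k$ in the presence of the symmetry‑breaking defect $V$, together with the uniform spectral‑gap and band‑projection estimates of the second step, is where the bulk of the work lies.
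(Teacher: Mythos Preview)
Your linking construction and the uniform-in-$k$ spectral and $\ell^p$ projection estimates run parallel to the paper's (Lemmas \ref{lem_projection_lp_norm}, \ref{lem_converge_spectral_projection}) and would deliver a critical point $a^{[k]}$ together with the norm bound $\|a^{[k]}\|_{\ell^2_k}\le M_1$. The genuine gap is exactly where you locate it: the lower bound $(R^{[k]}a^{[k]},z_0)_{\ell^2}\ge m_1$ for a \emph{single fixed} $z_0$. Your proposed route via concentration--compactness plus translation normalization cannot close. The implication ``$(R^{[k]}a^{[k]},z_0)\to0$ for every fixed localized $z_0$ $\Rightarrow$ $\|a^{[k]}\|_{\ell^\infty}\to0$'' is false: the bump at $\bm n_k$ may drift to infinity while the sup-norm stays bounded below. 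In the defect-free setting one repairs this by translating $a^{[k]}$ so the bump sits in a fixed window, and the translate is again a critical point. Here $V$ breaks the translation symmetry of $J^{[k]}$; the translate of $a^{[k]}$ is \emph{not} a critical point, and smallness of $\|V\|$ does not help because the error in the translated Euler--Lagrange equation is $\mathcal O(1)$ in $k$ regardless. The paper flags precisely this obstruction at the end of Section \ref{subsec_soliton_road} as the reason the standard concentration argument of \cite{pankov2010soliton_discrete,Pankov2005soliton_continuous} must be replaced.

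The paper's remedy is not to derive the lower bound a posteriori but to build it into the admissible set. One fixes $z_0\in\mathrm{Ran}(P_+)$ once (projection of the \emph{unperturbed} $\mathcal C$, not of $L^{[k]}$), sets $z_0^{[k]}:=P_+^{[k]}S^{[k]}z_0/\|P_+^{[k]}S^{[k]}z_0\|_{\ell^2_k}$, and takes
\[
M^{[k]}:=\bigl\{a\in\mathrm{Ran}(P_-^{[k]})\oplus\mathbb R^+z_0^{[k]}:\ \|a\|_{\ell^2_k}<\rho,\ (a,z_0^{[k]})_{\ell^2_k}>r\bigr\}.
\]
Lemma \ref{lem_admissible_set} shows $\sup_{\partial M^{[k]}}J^{[k]}<\sup_{M^{[k]}}J^{[k]}$ uniformly in $k$; since $M^{[k]}$ sits in a finite-dimensional space, the maximum of $J^{[k]}$ on $\overline{M^{[k]}}$ is attained in the interior and yields the critical point $a^{[k]}$. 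Both inequalities in \eqref{eq_periodic_gap_solitons_1} then come for free from membership in $M^{[k]}$, together with the convergence $R^{[k]}P_+^{[k]}S^{[k]}\to P_+$ strongly in $\ell^2$ (Lemma \ref{lem_converge_spectral_projection}) to transfer $(a^{[k]},z_0^{[k]})_{\ell^2_k}>r$ to $(R^{[k]}a^{[k]},z_0)_{\ell^2}\ge m_1$. In short: the location control must be encoded in the variational set from the outset, not recovered after the fact.
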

The proof is given in Section \ref{subsec_soliton_periodic}. The key idea is that the lower-gap and upper-gap spectrum of $\mathcal{C}^{[k]}$ is linked by the nonlinear functional $J^{[k]}$ and forms a valley-top structure, which leads to the existence of critical points of $J^{[k]}$.

Due to the uniform boundedness of $\|a^{[k]}\|_{\ell^2_{k}}$ as shown in \eqref{eq_periodic_gap_solitons_1}, it is possible to extract a subsequence from $a^{[k]}$ that converges weakly to a bounded state $a\in \ell^2$, which solves \eqref{eq_discrete_station_eq} and decays exponentially as we will prove in Section \ref{subsec_soliton_concentration}.
\begin{proposition} \label{prop_concentration}
There exists a subsequence $a^{[k_{i}]}$ of $a^{[k]}$ which converges weakly in $\ell^{2}_{loc}(\mathbb{Z}^2;\mathbb{C}^{d})$ to a nonzero limit $a\in \ell^{2}$. Moreover, $a$ solves \eqref{eq_discrete_station_eq_defect} and decays exponentially at infinity.
\end{proposition}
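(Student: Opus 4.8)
The plan is to pass to the limit in the family $\{a^{[k]}\}$ of finite-dimensional critical points produced by Proposition \ref{prop_periodic_gap_solitons}, exploiting weak compactness of balls in $\ell^2$, and then to bootstrap mere $\ell^2$-membership of the limit into exponential decay by a Combes--Thomas argument. \emph{Extraction.} Since $\|R^{[k]}a^{[k]}\|_{\ell^2}=\|a^{[k]}\|_{\ell^2_{k}}\le M_1$ uniformly in $k$, along a subsequence $k_i\to\infty$ the $R^{[k_i]}a^{[k_i]}$ converge weakly in $\ell^2$ to some $a\in\ell^2$ with $\|a\|_{\ell^2}\le M_1$; equivalently $a^{[k_i]}_{\bm n}\to a_{\bm n}$ for each fixed $\bm n$ (weak $\ell^2_{loc}(\mathbb Z^2;\mathbb C^d)$ convergence). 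Pairing the second bound in \eqref{eq_periodic_gap_solitons_1} with the fixed vector $z_0\in\ell^2$ gives $(a,z_0)_{\ell^2}=\lim_i (R^{[k_i]}a^{[k_i]},z_0)_{\ell^2}\ge m_1>0$, so $a\ne 0$.

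Next I would show that $a$ solves \eqref{eq_discrete_station_eq_defect}. Each $a^{[k]}$ satisfies $(J^{[k]})'(a^{[k]})=0$, i.e. componentwise
\begin{equation*}
\bigl(\mathcal{C}^{[k]}a^{[k]}\bigr)_{\bm n}-\lambda a^{[k]}_{\bm n}+\bigl(S^{[k]}VR^{[k]}a^{[k]}\bigr)_{\bm n}-\lambda\sigma|a^{[k]}_{\bm n}|^2a^{[k]}_{\bm n}=0 .
\end{equation*}
Fix $\bm n$ and send $i\to\infty$. The linear and cubic terms converge to $\lambda a_{\bm n}$ and $\lambda\sigma|a_{\bm n}|^2a_{\bm n}$ by componentwise convergence. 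For the capacitance term I use the exponential decay $\|\mathcal{C}_{\bm n,\bm m}\|\le\alpha e^{-\beta|\bm n-\bm m|}$ of Proposition \ref{prop_cap_operator}: truncating the sum to $|\bm m-\bm n|\le L$ costs only $O(M_1e^{-\beta L/2})$ uniformly in $k$, the remaining finite sum tends to $\sum_{|\bm m-\bm n|\le L}\mathcal{C}_{\bm n,\bm m}a_{\bm m}$, and letting $L\to\infty$ yields $(\mathcal{C}^{[k_i]}a^{[k_i]})_{\bm n}\to(\mathcal{C}a)_{\bm n}$. For the defect term, self-adjointness of $V$ and boundedness $\ell^\infty\to\ell^1$ give $V\mathbbm 1_{\{\bm n\}}\in\ell^1\subset\ell^2$; identifying $(S^{[k]}VR^{[k]}a^{[k]})_{\bm n}$ with $(R^{[k]}a^{[k]},V\mathbbm 1_{\{\bm n\}})_{\ell^2}$ for $k$ large, the weak convergence yields $(Va)_{\bm n}$ (a finite-range truncation as above handles a non-diagonal $V$). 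Passing to the limit gives $(\mathcal{C}a+Va-\lambda a-\lambda\sigma|a|^2a)_{\bm n}=0$ for every $\bm n$; all four terms lie in $\ell^2$, so the equation holds in $\ell^2$.

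For exponential decay, note $\|V\|_{\mathcal{B}(\ell^2)}\le\|V\|_{\mathcal{B}(\ell^\infty,\ell^1)}<\tfrac12\,\mathrm{dist}\{\lambda,\mathrm{Spec}(\mathcal{C})\}$, so $\mathcal{C}+V-\lambda$ is boundedly invertible on $\ell^2$ with $\mathrm{dist}\{\lambda,\mathrm{Spec}(\mathcal{C}+V)\}\ge\tfrac12\,\mathrm{dist}\{\lambda,\mathrm{Spec}(\mathcal{C})\}=:d>0$, and \eqref{eq_discrete_station_eq_defect} becomes $a=(\mathcal{C}+V-\lambda)^{-1}(\lambda\sigma|a|^2a)$. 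Since $\mathcal{C}+V$ retains exponentially decaying off-diagonal matrix elements (immediate when $V$ is an $\ell^1$ multiplication operator; in general after peeling off a finite-range piece), a standard Combes--Thomas estimate---exactly as in the proof of Proposition \ref{prop_cap_operator}, now for $\mathcal{C}+V$---provides $\epsilon_0>0$ and $C>0$ (independent of $N$) with $\|w^{(N)}(\mathcal{C}+V-\lambda)^{-1}(w^{(N)})^{-1}\|_{\mathcal{B}(\ell^2)}\le C$ for all $N$, where $w^{(N)}_{\bm n}:=e^{\epsilon\min(|\bm n|,N)}$ and $\epsilon<\epsilon_0$. Applying this to $a=(\mathcal{C}+V-\lambda)^{-1}(\lambda\sigma|a|^2a)$ and using $w^{(N)}(|a|^2a)=|a|^2\cdot(w^{(N)}a)$ pointwise gives $\|w^{(N)}a\|_{\ell^2}\le C|\lambda\sigma|\,\||a|^2(w^{(N)}a)\|_{\ell^2}$. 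Since $a\in\ell^2$, $|a_{\bm n}|^2\to0$; fixing $R$ with $|a_{\bm n}|^2\le(2C|\lambda\sigma|)^{-1}$ for $|\bm n|>R$ and splitting over $\{|\bm n|\le R\}$ and its complement, the inner part is $\le e^{\epsilon R}\|a\|_{\ell^2}^3$ (independent of $N$) and the outer part is absorbed into $\tfrac12\|w^{(N)}a\|_{\ell^2}$. Hence $\|w^{(N)}a\|_{\ell^2}\le 2C|\lambda\sigma|\,e^{\epsilon R}\|a\|_{\ell^2}^3$ uniformly in $N$, and $N\to\infty$ (monotone convergence) yields $\sum_{\bm n}e^{2\epsilon|\bm n|}|a_{\bm n}|^2<\infty$.

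I expect the exponential-decay step to be the main obstacle: one must make the Combes--Thomas conjugation bound uniform in the truncation $N$ while accommodating the defect $V$ (transparent for $\ell^1$ multiplication operators, needing a small splitting argument for general self-adjoint $V$ bounded $\ell^\infty\to\ell^1$), after which the self-absorption of the cubic term---which crucially uses that $a$ is genuinely in $\ell^2$, hence vanishes pointwise at infinity---is routine. A secondary care is the identification of the weak $\ell^2_{loc}$ limit near the cell boundaries, which is precisely why the fixed direction $z_0$ and the uniform lower bound in Proposition \ref{prop_periodic_gap_solitons} are built in.
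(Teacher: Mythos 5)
Your extraction and limit-identification steps match the paper's proof almost verbatim: the uniform bound $\|a^{[k]}\|_{\ell^2_k}\le M_1$ together with Lemma~\ref{lem_local_bound_weak_converge} gives the two weak convergences, the fixed test vector $z_0$ and the lower bound $m_1$ force $a\ne 0$, and passing to the limit in $(J^{[k_i]})'(a^{[k_i]})=0$ against compactly supported test functions (your truncation of the $\mathcal{C}$-sum and your rewriting of the defect term as $(R^{[k_i]}a^{[k_i]},V\delta_{\bm n})_{\ell^2}$) is exactly the paper's \eqref{eq_concentration_proof_3}. Where you diverge is the exponential decay: the paper simply observes that $a$ is an in-gap eigenvector of $\tilde{\mathcal{C}}=\mathcal{C}+V-\lambda\sigma|a|^2$ and cites \cite{pankov2010soliton_discrete}, whereas you spell out a Combes--Thomas argument with the truncated weights $w^{(N)}$ and a self-absorption of the cubic term using $|a_{\bm n}|\to 0$. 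That is a genuinely more self-contained route and the absorption step is correct. One caution, which you yourself flag: the Combes--Thomas conjugation bound $\|w^{(N)}(\mathcal{C}+V-\lambda)^{-1}(w^{(N)})^{-1}\|\le C$ requires the matrix elements of $\mathcal{C}+V$ to decay exponentially off the diagonal, and an operator $V$ that is merely self-adjoint and bounded $\ell^\infty\to\ell^1$ need not have this property (columns are uniformly $\ell^1$, not exponentially decaying); ``peeling off a finite-range piece'' does not by itself repair this, since the far tail of $V$ is small in norm but not in support. For the motivating case of a multiplication operator $V\in\ell^1$ the conjugation is trivially controlled (diagonal), so your argument closes; for the full stated class the exponential-decay conclusion should be read as implicitly restricted to defects with at least short-range (e.g.\ exponentially decaying) kernels, a restriction the paper leaves tacit when citing \cite{pankov2010soliton_discrete}.
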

This concludes the proof of Theorem \ref{thm_gap_solitons} on the existence of defect gap solitons. We point out that the second condition in \eqref{eq_periodic_gap_solitons_1} ensures that the limiting bounded state $a\in \ell^2$ is nontrivial because $(a,z_0)_{\ell^2}=\lim_{i\to\infty}(R^{[k_i]}a^{[k_i]},z_0)_{\ell^2}>0$ as we will prove in Section \ref{subsec_soliton_concentration}. Compared with the study of periodic gap solitons in \cite{pankov2010soliton_discrete,Pankov2005soliton_continuous,pelinovsky2011localization}, where the nontriviality of the limit is guaranteed by the periodicity of the structure and a concentration argument, our problem \eqref{eq_discrete_station_eq_defect} loses periodicity due to the presence of the defect $V$. It requires a stronger restriction on the finite-dimensional solutions constructed in Proposition \ref{prop_periodic_gap_solitons} such that the limit we find in Proposition \ref{prop_concentration} is nonzero. This motivates the designation of \eqref{eq_periodic_gap_solitons_1}.

\subsection{Periodic in-gap solution: Proof of Proposition \ref{prop_periodic_gap_solitons}} \label{subsec_soliton_periodic}

\subsubsection{Preliminaries}

We first present some preliminaries. The first lemma is a standard statement on the weak compactness of uniformly bounded periodic sequences. Note that we denote all the $
\ell^p-\ell^q$ dual products ($1/p+1/q=1$) by $(\cdot,\cdot)_{\ell^2}$ for brevity. 
\begin{lemma} \label{lem_local_bound_weak_converge}
Let $a^{[k]}\in \ell_k^2$ ($k=1,2,\cdots$) satisfy $\|a^{[k]}\|_{\ell_k^2}\leq C$ for some $k-$independent constant $C>0$. Then $a^{[k]}$ admits a subsequence $a^{[k_i]}$ that converges $\ell^\infty-$weakly to $a\in \ell^2$ in the sense that
\begin{equation} \label{eq_local_bound_weak_converge_1}
(a^{[k_i]},b)_{\ell^2} \to (a,b)_{\ell^2}, \quad \text{for any $b\in \ell^1$} ,
\end{equation}
with $\|a\|_{\ell^2}\leq C$. Moreover, $R^{[k_i]}a^{[k_i]}$ converges $\ell^2-$weakly to $a$:
\begin{equation} \label{eq_local_bound_weak_converge_2}
(R^{[k_i]}a^{[k_i]},b)_{\ell^2} \to (a,b)_{\ell^2}, \quad \text{for any $b\in \ell^2$} .
\end{equation}
\end{lemma}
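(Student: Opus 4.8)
The plan is to establish \eqref{eq_local_bound_weak_converge_1} by a Cantor diagonal extraction producing a pointwise limit, and then to read off \eqref{eq_local_bound_weak_converge_2} from that limit together with the $\ell^2$-boundedness of the truncations $R^{[k]}a^{[k]}$.

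First I would use $k$-periodicity to upgrade the local bound to a uniform pointwise bound: since every value of $a^{[k]}$ is already attained inside one period, $\|a^{[k]}\|_{\ell^\infty}=\max_{\bm{n}\in Y_k}|a^{[k]}_{\bm{n}}|\le\|a^{[k]}\|_{\ell^2_k}\le C$ for all $k$. Because $\mathbb{Z}^2$ is countable and each coordinate sequence $(a^{[k]}_{\bm{n}})_k$ stays in the compact ball of radius $C$ in $\mathbb{C}^d$, a diagonal argument yields a subsequence $a^{[k_i]}$ and an array $a=(a_{\bm{n}})$ with $a^{[k_i]}_{\bm{n}}\to a_{\bm{n}}$ for every $\bm{n}$ and $|a_{\bm{n}}|\le C$. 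To see that $a\in\ell^2$ with $\|a\|_{\ell^2}\le C$, I would fix a finite $F\subset\mathbb{Z}^2$; for $i$ large $F$ is contained in one period of $a^{[k_i]}$, so $\sum_{\bm{n}\in F}|a^{[k_i]}_{\bm{n}}|^2\le\|a^{[k_i]}\|_{\ell^2_{k_i}}^2\le C^2$, and letting $i\to\infty$ and then taking the supremum over $F$ gives the bound. For the convergence in \eqref{eq_local_bound_weak_converge_1}, fix $b\in\ell^1$ and $\varepsilon>0$, choose a finite $F$ with $\sum_{\bm{n}\notin F}|b_{\bm{n}}|<\varepsilon$, and split $|(a^{[k_i]}-a,b)_{\ell^2}|\le\sum_{\bm{n}\in F}|a^{[k_i]}_{\bm{n}}-a_{\bm{n}}|\,|b_{\bm{n}}|+2C\varepsilon$; the finite sum tends to $0$ by pointwise convergence and $\varepsilon$ is arbitrary.

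For \eqref{eq_local_bound_weak_converge_2} I would note that $\|R^{[k_i]}a^{[k_i]}\|_{\ell^2}=\|a^{[k_i]}\|_{\ell^2_{k_i}}\le C$, so $(R^{[k_i]}a^{[k_i]})_i$ is bounded in the reflexive space $\ell^2$; hence it suffices to check that every $\ell^2$-weak limit point equals $a$, and by density of finitely supported sequences together with the uniform bound it is enough to test against $b$ with finite support. For such $b$ its support lies in the period $Y_{k_i}$ once $i$ is large, so $(R^{[k_i]}a^{[k_i]},b)_{\ell^2}=\sum_{\bm{n}\in\operatorname{supp}b}a^{[k_i]}_{\bm{n}}\cdot\overline{b_{\bm{n}}}\to\sum_{\bm{n}}a_{\bm{n}}\cdot\overline{b_{\bm{n}}}=(a,b)_{\ell^2}$, again by the pointwise convergence established in the first step.

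The $\varepsilon/2$ splitting and the diagonal extraction are routine; the one point that needs care is the consistency of the two limits, namely that the array $a$ obtained as the pointwise (equivalently $\ell^1$-tested) limit of $a^{[k_i]}$ is also the $\ell^2$-weak limit of the truncations $R^{[k_i]}a^{[k_i]}$. This rests on the fact that the windows $Y_{k_i}$ eventually absorb any fixed bounded region, so that in the limit no mass of $a$ is discarded by the truncation; I expect the bookkeeping here, making sure a given finite test set sits inside a single period for all large $i$ (it is used both in the Fatou-type estimate and in the last display), to be the only genuinely delicate ingredient.
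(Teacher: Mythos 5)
Your proof is correct in spirit and takes a genuinely cleaner route than the paper's. Where the paper extracts a limit by iterated $\ell^2$-convergence of the truncations $R^{[i]}a^{[k]}$ on the nested windows $Y_i$ --- which forces them into the somewhat awkward reduction ``without loss of generality $a^{[k]}$ is real-valued and non-negative,'' needed so that the partial limits $d^{[j]}$ are monotone and hence converge --- you instead observe that the periodicity-derived uniform bound $\|a^{[k]}\|_{\ell^\infty}\le\|a^{[k]}\|_{\ell^2_k}\le C$ makes each coordinate sequence precompact in $\mathbb{C}^d$, so a Cantor diagonal gives a pointwise limit directly, valid for complex vector-valued sequences with no auxiliary decomposition. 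The Fatou-type bound (exploiting that the modular reduction $\bm n\mapsto\bm n\bmod k_i$ is injective on any fixed finite $F$ once $k_i$ is large) and the $\ell^1$-truncation argument for \eqref{eq_local_bound_weak_converge_1} are then routine, and your passage to \eqref{eq_local_bound_weak_converge_2} via boundedness plus testing against finitely supported $b$ is essentially the paper's Step~3.

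One subtlety is worth flagging, though it is shared with the paper and is an artifact of the paper's conventions rather than a defect of your argument. Both your step $(R^{[k_i]}a^{[k_i]},b)_{\ell^2}=\sum_{\bm n\in\operatorname{supp}b}a^{[k_i]}_{\bm n}\cdot\overline{b_{\bm n}}$ for large $i$, and the paper's analogous claim $(a^{[k_i]},b)_{\ell^2}=(R^{[i]}a^{[k_i]},b)_{\ell^2}$ for compactly supported $b$, silently assume $\operatorname{supp}(b)\subset Y_{k_i}$ eventually. But with the paper's definition $Y_k=\{\bm n:0\le n_1,n_2<k\}$ the windows do not exhaust $\mathbb{Z}^2$: a finitely supported $b$ touching a negative coordinate never sits inside any $Y_k$. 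In one dimension, the $k$-periodic sequences with $a^{[k]}_n=1$ iff $n\equiv -1\pmod k$ have pointwise limit $\delta_{-1}$ and do satisfy \eqref{eq_local_bound_weak_converge_1} with $a=\delta_{-1}$, yet $R^{[k]}a^{[k]}=\delta_{k-1}\rightharpoonup 0\neq a$, so \eqref{eq_local_bound_weak_converge_2} fails; the paper's construction would instead produce $a=0$, for which \eqref{eq_local_bound_weak_converge_1} fails. The fix is simply to center the windows, e.g.\ $Y_k=\{-\lfloor k/2\rfloor\le n_1,n_2<\lceil k/2\rceil\}$, after which both your proof and the paper's go through unchanged. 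You correctly identified this ``given finite set eventually sits inside one period'' bookkeeping as the only delicate ingredient --- it is, and as written it is the one place where the statement and both proofs quietly rely on a property of $Y_k$ that the paper's literal definition does not provide.
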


\begin{proof}

Without loss of generality, we assume that the sequence $a^{[k]}$ is real-valued and non-negative. To obtain \eqref{eq_local_bound_weak_converge_1}, we first show that \eqref{eq_local_bound_weak_converge_1} holds for compactly supported test functions, then we apply a standard truncation argument to complete the proof.

{\color{blue}Step 1.} Let $c^{[i,k]}:=R^{[i]}a^{[k]}\in \ell^2$. First, for any fixed $i\geq 1$, $c^{[i,k]}$ is supported in the finite-dimensional space $Y_i$. Hence, by the boundedness of $a^{[k]}$, we can extract a subsequence of $c^{[i,k]}$, denoted as $\{c^{[i,k_j]}\}_{j\geq 1}$, which converges in $\ell^2$, whose limit is apparently supported in $Y_i$. Next, we can select a subsequence of $\{k_j\}_{j\geq 1}$, which is still denoted as $\{k_j\}$, such that $\{c^{[i+1,k_j]}\}_{j\geq 1}$ strongly converges to an element supported in $Y_{i+1}\supset Y_{i}$. Inductively, one can construct a sequence by a diagonal argument, denoted as $\{c^{[i,k_i]}\}_{i\geq 1}$, such that for any $j\geq 1$
\begin{equation*}
R^{[j]}c^{[i,k_i]} \text{ converges to a compactly supported function } d^{[j]} \text{ as $i\to\infty$.}
\end{equation*}
Moreover, we note that the limit $a:=\lim_{j\to\infty}d^{[j]}$ exists because $d^{[j]}$ is monotonic ($d^{[j]}$ agrees with $d^{[j-1]}$ on $Y_{j-1}$ and is non-negative outside $Y_{j-1}$), and $a\in \ell^2$ thanks to the uniform bound of $a^{[k]}$:
\begin{equation*}
\|d^{[j]}\|_{\ell^2}=\lim_{i\to \infty} \|R^{[j]}c^{[i,k_i]}\|_{\ell^2}
=\lim_{i\to \infty} \|a^{[k_i]}\|_{\ell^2_{j}}\leq \sup_{k}\|a^{[k]}\|_{\ell^2_{k}}\leq C \,\, (\forall j\geq 1),
\end{equation*}
which implies that $\|a\|_{\ell^2}=\lim_{j\to\infty}\|d^{[j]}\|_{\ell^2}\leq C$. Now, the $\ell^{\infty}-$weak convergence of $a^{[k_i]}$ follows immediately. For any compactly supported $b\in \ell^2$, $(a^{[k_i]},b)_{\ell^2}=(R^{[i]}a^{[k_i]},b)_{\ell^2}$ for $i$ being sufficient large and hence,
\begin{equation*}
\lim_{i\to \infty}(a^{[k_i]},b)_{\ell^2}=\lim_{i\to \infty}(R^{[i]}a^{[k_i]},b)_{\ell^2}
=(a,b)_{\ell^2}.
\end{equation*}

{\color{blue}Step 2.} For $b\in \ell^1$, we decompose it into a compactly supported part and a remainder. Namely, for any $\epsilon>0$, let $b_0$ be compactly supported and satisfy $\|b-b_0\|_{\ell^1}\leq \epsilon$. Then we see
\begin{equation*}
\begin{aligned}
|(a^{[k_i]}-a,b)_{\ell^2}|
&\leq |(a^{[k_i]}-a,b_0)_{\ell^2}|+|(a^{[k_i]}-a,b-b_0)_{\ell^2}| \\
&\leq |(a^{[k_i]}-a,b_0)_{\ell^2}|
+\|a^{[k_i]}-a\|_{\ell^\infty}\|b-b_0\|_{\ell^1} \\
&\leq |(a^{[k_i]}-a,b_0)_{\ell^2}| +2C\epsilon .
\end{aligned}
\end{equation*}
Passing $i\to \infty$ and by the arbitrariness of $\epsilon$, one sees $(a^{[k_i]}-a,b)_{\ell^2}\to 0$ and concludes the proof of \eqref{eq_local_bound_weak_converge_1}.

{\color{blue}Step 3.} Now we prove \eqref{eq_local_bound_weak_converge_2}. As in Step 2, we first decompose $b\in \ell^2$ into a compactly supported part $b_0$ and a remainder $b_1$ with $\|b_1\|_{\ell^2}<\epsilon$. The key point is that the part of $R^{[k_i]}a^{[k_i]}-a$ outside $\text{supp}(b_0)$ is uniformly bounded, \textit{i.e.},
\begin{equation*}
\big\|\big(R^{[k_i]}a^{[k_i]}-a\big)\cdot (1-\chi_{\text{supp}(b_0)})\big\|_{\ell^2}\leq \|R^{[k_i]}a^{[k_i]}-a\|_{\ell^2}\leq \|a^{[k_i]}\|_{\ell^2_k}+\|a\|_{\ell^2}\leq 2C . 
\end{equation*}
Hence, for $i$ being large such that $\text{supp}(b_0)\subset \text{supp}(R^{[k_i]}a^{[k_i]})$, one sees
\begin{equation*}
\begin{aligned}
|(R^{[k_i]}a^{[k_i]}-a,b)_{\ell^2}|
&\leq |(R^{[k_i]}a^{[k_i]}-a,b_0)_{\ell^2}|+\big|\big(\big(R^{[k_i]}a^{[k_i]}-a\big)\cdot (1-\chi_{\text{supp}(b_0)}),b_1\big)_{\ell^2}\big| \\
&\leq |(a^{[k_i]}-a,b_0)_{\ell^2}|
+2C\epsilon .
\end{aligned}
\end{equation*}
This concludes the proof of \eqref{eq_local_bound_weak_converge_2} by passing $i\to \infty$ and using the arbitrariness of $\epsilon$.
\end{proof}

The next lemma regards the $\ell^{p}_{k}$ boundedness of the spectral projection associated with the capacitance operator $C^{[k]}$. To be precise, we first fix some notation. Let $$\mathcal{I}_{+}:=\text{Spec}(\mathcal{C})\cap (\sup \mathcal{I},\infty) \quad \text{and}  \quad \mathcal{I}_{-}:=\text{Spec}(\mathcal{C})\cap (-\infty,\inf \mathcal{I})$$ be the upper-gap and lower-gap spectrum of the capacitance operator $\mathcal{C}$ on $\ell^2$, respectively, and let $P_{\pm}$ be the corresponding spectral projections. Recall that $\mathcal{C}^{[k]}$ is the restriction of $\mathcal{C}$ to the periodic functional space $\ell_k^2$. By the discrete Bloch decomposition (\textit{cf.} \cite[Chapter 3]{conca1995fluids}), the upper-gap and lower-gap spectrum of $\mathcal{C}^{[k]}$, denoted as $\mathcal{I}_{\pm}^{[k]}$, are 
non-empty subsets of $\mathcal{I}_{\pm}$, respectively, \textit{i.e.}, $\emptyset\neq \mathcal{I}_{\pm}^{[k]}\subset \mathcal{I}_{\pm}$. We denote the spectral projections of $\mathcal{C}^{[k]}$ onto $\mathcal{I}_{\pm}^{[k]}$ by $P_{\pm}^{[k]}$, respectively. With this notation, the second lemma reads as follows, which is trivial for $p=2$, but the estimate for $p>2$ (especially for $p=4$) is required in the proof of Proposition \ref{prop_periodic_gap_solitons}. 

\begin{lemma} \label{lem_projection_lp_norm}
The spectral projection $P^{[k]}_{\pm}$ extends to $\ell^{p}_{k}\supset\ell^{2}_{k}$ for any $p\geq 2$. Its operator norm satisfies the estimate
\begin{equation} \label{eq_projection_lp_norm}
\|P^{[k]}_{\pm}\|_{\mathcal{B}(\ell^{2}_{k})}\leq N_p,
\end{equation}
where $N_p>0$ depends on $p$ and is independent of $k$.
\end{lemma}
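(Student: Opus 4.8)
The plan is to exploit the fact that the spectral projections $P_\pm^{[k]}$ are convolution operators on the periodic lattice whose kernels decay exponentially, uniformly in $k$, and then to bound the $\ell^p_k\to\ell^p_k$ operator norm of such an operator by the $\ell^1$ norm of its kernel via (a periodic version of) Young's inequality. First I would recall that, by Proposition \ref{prop_cap_operator}, $\mathcal{C}^{[k]}$ is the restriction to $\ell^2_k$ of the translation-invariant operator $\mathcal{C}$ with exponentially decaying matrix elements $\mathcal{C}_{\bm n,\bm m}=\mathcal{C}(\bm n-\bm m)$; hence $P_\pm^{[k]}$ commutes with the $\mathbb{Z}^2/k\mathbb{Z}^2$-translations and is therefore itself a (periodic) convolution operator, $(P_\pm^{[k]}a)_{\bm n}=\sum_{\bm m\in Y_k}K^{[k]}_\pm(\bm n-\bm m)\,a_{\bm m}$, where $K^{[k]}_\pm(\bm j)=\sum_{\bm\ell\in k\mathbb{Z}^2}\widetilde K^{[k]}_\pm(\bm j+\bm\ell)$ is the periodization of a kernel $\widetilde K^{[k]}_\pm$ on $\mathbb{Z}^2$. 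The key estimate is that $\widetilde K^{[k]}_\pm$ decays exponentially with rate and constant independent of $k$: this follows from a Combes–Thomas argument (as already invoked in the proof of Proposition \ref{prop_cap_operator}, citing \cite[Proposition 3.1]{qiu2025bulk}) applied to the Riesz-projection contour integral $P_\pm^{[k]}=\frac{1}{2\pi i}\oint_\Gamma (\zeta-\mathcal{C}^{[k]})^{-1}\,d\zeta$, with $\Gamma$ a fixed contour encircling $\overline{\mathcal{I}_+}$ (resp.\ $\overline{\mathcal{I}_-}$) and staying at a $k$-independent positive distance from $\text{Spec}(\mathcal{C})\supset\text{Spec}(\mathcal{C}^{[k]})$; the resolvent kernel obeys $|(\zeta-\mathcal{C}^{[k]})^{-1}(\bm n,\bm m)|\le C e^{-\beta|\bm n-\bm m|}$ with $C,\beta$ depending only on the geometry and on $\text{dist}(\Gamma,\text{Spec}(\mathcal{C}))$, hence uniformly in $k$.

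Granting the uniform exponential bound $|\widetilde K^{[k]}_\pm(\bm j)|\le C e^{-\beta|\bm j|}$, the periodized kernel satisfies
\begin{equation*}
\sum_{\bm j\in Y_k}|K^{[k]}_\pm(\bm j)|\le \sum_{\bm j\in\mathbb{Z}^2} C e^{-\beta|\bm j|}=:N,
\end{equation*}
a bound independent of $k$ (and of $p$). Then for $a\in\ell^p_k$, writing $(P_\pm^{[k]}a)_{\bm n}=\sum_{\bm j}K^{[k]}_\pm(\bm j)a_{\bm n-\bm j}$ and applying Minkowski's inequality for the periodic $\ell^p_k$ norm term by term,
\begin{equation*}
\|P_\pm^{[k]}a\|_{\ell^p_k}\le\sum_{\bm j\in Y_k}|K^{[k]}_\pm(\bm j)|\,\|a(\cdot-\bm j)\|_{\ell^p_k}=\Big(\sum_{\bm j\in Y_k}|K^{[k]}_\pm(\bm j)|\Big)\|a\|_{\ell^p_k}\le N\,\|a\|_{\ell^p_k},
\end{equation*}
using that periodic translation is an isometry of $\ell^p_k$. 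This gives \eqref{eq_projection_lp_norm} with $N_p:=N$, which is in fact independent of $p$ as well.

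I expect the main obstacle to be the uniformity in $k$ of the Combes–Thomas bound — specifically, checking that the constants produced by the exponential-weight conjugation argument depend on $\mathcal{C}$ only through its off-diagonal decay (the $\alpha,\beta$ of Proposition \ref{prop_cap_operator}) and through the fixed spectral gap, and in no way through the lattice size $k$. This is true because $\mathcal{C}^{[k]}$ is literally the restriction of the single operator $\mathcal{C}$ to an invariant subspace, so its matrix elements and its spectrum (a subset of $\text{Spec}(\mathcal{C})$) inherit all the relevant bounds from $\mathcal{C}$; one must simply be careful that the periodization step does not worsen the $\ell^1$-kernel norm, which it does not since summing the periodized kernel over one period equals summing the original kernel over all of $\mathbb{Z}^2$. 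A secondary, purely bookkeeping point is to phrase Young/Minkowski correctly on the finite abelian group $(\mathbb{Z}/k\mathbb{Z})^2$; this is routine once the convolution structure of $P_\pm^{[k]}$ is established.
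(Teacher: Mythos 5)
Your proof is correct, and it takes a genuinely different (arguably cleaner) route from the paper's. Both proofs start from the same key input — the Riesz contour-integral representation of $P^{[k]}_\pm$ together with the exponential off-diagonal decay of the resolvent, giving a kernel bound $\|K_{\bm n,\bm m}\|\le Ce^{-\gamma|\bm n-\bm m|}$ uniformly in $k$ — but they diverge in how the $\ell^p_k\to\ell^p_k$ bound is extracted. The paper writes $(P^{[k]}_+a)_{\bm n}=\sum_{\bm m\in\mathbb{Z}^2}K_{\bm n,\bm m}a_{\bm m}$ as an infinite sum (justified since $a$ is periodic and $K$ decays), splits $|K|=|K|^{1/q}|K|^{1/p}$ and applies H\"older, then decomposes the $\bm m$-sum into the diagonal block $\bm m\in Y_k$ (handled by the symmetry $K_{\bm n,\bm m}=K_{\bm m,\bm n}^{T}$) and the off-diagonal tails $\bm m\notin Y_k$ (resummed using periodicity of $a$ and the exponential decay), arriving at $N_p=C_1^{1-1/p}(C_1+C_3)^{1/p}$. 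You instead exploit the translation-invariance of $\mathcal{C}$ more directly: $P^{[k]}_\pm$ is a convolution on the finite group $(\mathbb{Z}/k\mathbb{Z})^2$, its kernel is the periodization of the $k$-independent full-space kernel, and Young's/Minkowski's inequality gives $\|P^{[k]}_\pm a\|_{\ell^p_k}\le\|K^{[k]}_\pm\|_{\ell^1(Y_k)}\|a\|_{\ell^p_k}\le\|K_\pm\|_{\ell^1(\mathbb{Z}^2)}\|a\|_{\ell^p_k}$. Your version yields a $p$-independent constant and avoids the H\"older/interpolation bookkeeping; the paper's version is marginally more general in that it only requires the kernel to be symmetric and exponentially decaying, not a convolution. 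One small imprecision to flag: you describe the $k$-uniform decay as coming from Combes--Thomas applied to $\mathcal{C}^{[k]}$ directly, which would naturally produce decay in the torus metric on $(\mathbb{Z}/k\mathbb{Z})^2$ rather than a $k$-independent kernel $\widetilde K^{[k]}_\pm$ on $\mathbb{Z}^2$; the cleanest fix is to take $\widetilde K^{[k]}_\pm$ to be the full-space kernel $K_\pm$ of $P_\pm$ (literally independent of $k$), exactly as the paper does by putting $(\mathcal{C}-\lambda)^{-1}$ rather than $(\mathcal{C}^{[k]}-\lambda)^{-1}$ inside its Riesz integral. Either reading of the claim suffices for the $\ell^1$ bound, so this is cosmetic, but the write-up blurs the distinction.
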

\begin{proof}
We prove the statement only for $P^{[k]}_{+}$; the proof for $P^{[k]}_{-}$ is similar. Note that, on $\ell^{2}_{k}$, $P^{[k]}_{+}$ can be written as an integral operator
\begin{equation} \label{eq_projection_integral_expression}
(P^{[k]}_{+}a)_{\bm{n}}^{(j)}
=\sum_{\substack{1\leq j\leq d \\ \bm{m}\in\mathbb{Z}^2}}K_{\bm{n},\bm{m}}^{i,j}a_{\bm{m}}^{(j)} 
\end{equation}
with exponentially decaying kernel $K_{\bm{n},\bm{m}}\in \mathbb{M}^{d\times d}$, which can be seen by the Riesz formula \eqref{eq_riesz_formula},
\begin{equation*}
K_{\bm{n},\bm{m}}^{i,j}=\frac{i}{2\pi}\int_{\Gamma}((\mathcal{C}-\lambda)^{-1}\delta_{\bm{m}}^{(j)},\delta_{\bm{n}}^{(i)})_{\ell^2}dz,
\end{equation*}
and the exponential decay of the resolvent $(\mathcal{C}-\lambda)^{-1}$ when $\lambda\notin \text{Spec}(\mathcal{C})$. Therefore, we have
\begin{equation} \label{eq_projection_kernel_decay}
\|K_{\bm{n},\bm{m}}\|_{\mathbb{M}^{d\times d}}\leq Ce^{-\gamma|\bm{n}-\bm{m}|}, \quad \forall \bm{n},\bm{m}\in\mathbb{Z}^2,
\end{equation}
with $C,\gamma$ being independent of $k$. Moreover, the kernel $K$ is symmetric since the capacitance operator $\mathcal{C}$ is Hermitian
\begin{equation} \label{eq_projection_kernel_symmetry}
K_{\bm{n},\bm{m}}^{i,j}=K_{\bm{m},\bm{n}}^{j,i},\quad \forall 1\leq i,j\leq d,\, \bm{n},\bm{m}\in \mathbb{Z}^2 .
\end{equation}
For each $a\in \ell^p_k$, we define $P_{+}^{[k]}a$ by the integral expression \eqref{eq_projection_integral_expression}. This is well defined and can be justified as follows. Note that by the Hölder inequality,
\begin{equation*}
\begin{aligned}
\sum_{\substack{1\leq j\leq d \\ \bm{m}\in\mathbb{Z}^2}}|K_{\bm{n},\bm{m}}^{i,j}||a_{\bm{m}}^{(j)}|
&=\sum_{\substack{1\leq j\leq d \\ \bm{m}\in\mathbb{Z}^2}}|K_{\bm{n},\bm{m}}^{i,j}|^{1/q}|K_{\bm{n},\bm{m}}^{i,j}|^{1/p}|a_{\bm{m}}^{(j)}| \\
&\leq \big(\sum_{\substack{1\leq j\leq d \\ \bm{m}\in\mathbb{Z}^2}}|K_{\bm{n},\bm{m}}^{i,j}| \big)^{1/q} \big(\sum_{\substack{1\leq j\leq d \\ \bm{m}\in\mathbb{Z}^2}}|K_{\bm{n},\bm{m}}^{i,j}| |a_{\bm{m}}^{(j)}|^{p}\big)^{1/p},
\end{aligned}
\end{equation*}
where $q=\frac{p}{p-1}$. On the other hand, by the exponential decay \eqref{eq_projection_kernel_decay}, the first factor on the right-hand side is finite and uniformly bounded, \textit{i.e.},  $$C_1:=\sup_{\substack{1\leq i\leq d \\ \bm{n}\in\mathbb{Z}^2}}\sum_{\substack{1\leq j\leq d \\ \bm{m}\in\mathbb{Z}^2}}|K_{\bm{n},\bm{m}}^{i,j}| <\infty .$$ 
Hence,
\begin{equation} \label{eq_projection_lp_norm_proof_1}
\begin{aligned}
\|P^{[k]}_{+}a\|_{\ell_{k}^{p}}^{p}
&\leq \sum_{\substack{1\leq i\leq d \\ \bm{n}\in Y_k}}\big(\sum_{\substack{1\leq j\leq d \\ \bm{m}\in\mathbb{Z}^2}}|K_{\bm{n},\bm{m}}^{i,j}||a_{\bm{m}}^{(j)}| \big)^{p}
\leq C_1^{p/q} \sum_{\substack{1\leq i\leq d \\ \bm{n}\in Y_k}}\sum_{\substack{1\leq j\leq d \\ \bm{m}\in\mathbb{Z}^2}}|K_{\bm{n},\bm{m}}^{i,j}| |a_{\bm{m}}^{(j)}|^{p} \\
&=C_1^{p/q}\Big[ \sum_{\substack{1\leq i\leq d \\ \bm{n}\in Y_k}}\sum_{\substack{1\leq j\leq d \\ \bm{m}\notin Y_k}}|K_{\bm{n},\bm{m}}^{i,j}| |a_{\bm{m}}^{(j)}|^{p}+\sum_{\substack{1\leq i\leq d \\ \bm{n}\in Y_k}}\sum_{\substack{1\leq j\leq d \\ \bm{m}\in Y_k}}|K_{\bm{n},\bm{m}}^{i,j}| |a_{\bm{m}}^{(j)}|^{p} \Big] \\
&=:C_1^{p/q}( I_1+I_2 ) .
\end{aligned}
\end{equation}
We estimate $I_1$ and $I_2$ separately. First, note that by \eqref{eq_projection_kernel_decay}
\begin{equation*}
\begin{aligned}
\sum_{\substack{1\leq i\leq d \\ \bm{n}\in Y_k}}|K_{\bm{n},\bm{m}}^{i,j}|
&\leq C\sum_{\bm{n}\in Y_k} e^{-\frac{\gamma}{2}|\bm{n}-\bm{m}|}e^{-\frac{\gamma}{2}|\bm{n}-\bm{m}|}
\leq Ce^{-\frac{\gamma}{2}(|\bm{m}|-k)}\sum_{\bm{n}\in Y_k} e^{-\frac{\gamma}{2}|\bm{n}-\bm{m}|} \\
&\leq Ce^{-\frac{\gamma}{2}(|\bm{m}|-k)}\sum_{\bm{n}\in \mathbb{Z}^2} e^{-\frac{\gamma}{2}|\bm{n}-\bm{m}|}
\leq C_2 e^{-\frac{\gamma}{2}(|\bm{m}|-k)},
\end{aligned}
\end{equation*}
where $C_2>0$ is independent of $k$ and $p$. Consequently,
\begin{equation*}
\begin{aligned}
I_1&\leq C_2 \sum_{\substack{1\leq j\leq d \\ \bm{m}\notin Y_k}} |a_{\bm{m}}^{(j)}|^{p} e^{-\frac{\gamma}{2}(|\bm{m}|-k)} 
=C_2 \sum_{|\bm{l}|>0}\sum_{\substack{1\leq j\leq d \\ \bm{m}\in Y_k+k\bm{l}}} |a_{\bm{m}}^{(j)}|^{p} e^{-\frac{\gamma}{2}(|\bm{m}|-k)} \\
&\leq C_2 \sum_{|\bm{l}|>0}e^{-\frac{\gamma}{2}k|\bm{l}|}\sum_{\substack{1\leq j\leq d \\ \bm{m}\in Y_k+k\bm{l}}} |a_{\bm{m}}^{(j)}|^{p}
=C_2 \|a\|_{\ell^p_k}^p\sum_{|\bm{l}|>0}e^{-\frac{\gamma}{2}k|\bm{l}|},
\end{aligned}
\end{equation*}
where the $k-$periodicity of $a$ is applied in the last step. By an elementary calculation, it can be seen that the sum on at the right-hand side is uniformly bounded for $k\geq 1$. Hence, we know
\begin{equation} \label{eq_projection_lp_norm_proof_2}
\begin{aligned}
I_1\leq C_3\|a\|_{\ell^p_k}^p ,
\end{aligned}
\end{equation}
with $C_3>0$ being independent of $k$. The estimate of $I_2$ is simple. In fact, using \eqref{eq_projection_kernel_symmetry}, we have
\begin{equation} \label{eq_projection_lp_norm_proof_3}
\begin{aligned}
I_2= \sum_{\substack{1\leq j\leq d \\ \bm{m}\in Y_k}} |a_{\bm{m}}^{(j)}|^{p} \sum_{\substack{1\leq i\leq d \\ \bm{n}\in Y_k}}|K_{\bm{n},\bm{m}}^{i,j}|
\leq C_4\sum_{\substack{1\leq j\leq d \\ \bm{m}\in Y_k}} |a_{\bm{m}}^{(i)}|^{p} =C_4\|a\|_{\ell^p_k}^p,
\end{aligned}
\end{equation}
where $$C_4:=\sup_{\substack{1\leq j\leq d \\ \bm{m}\in\mathbb{Z}^2}}\sum_{\substack{1\leq i\leq d \\ \bm{n}\in\mathbb{Z}^2}}|K_{\bm{n},\bm{m}}^{i,j}|=\sup_{\substack{1\leq i\leq d \\ \bm{n}\in\mathbb{Z}^2}}\sum_{\substack{1\leq j\leq d \\ \bm{m}\in\mathbb{Z}^2}}|K_{\bm{n},\bm{m}}^{i,j}|=C_1 <\infty.$$ With \eqref{eq_projection_lp_norm_proof_1}-\eqref{eq_projection_lp_norm_proof_3}, we arrive at
\begin{equation} \label{eq_projection_lp_norm_proof_4}
\|P^{[k]}_{+}a\|_{\ell_{k}^{p}}\leq N_p \|a\|_{\ell_{k}^{p}},
\end{equation}
with $N_p:=C_1^{1-\frac{1}{p}}(C_1+C_3)^{\frac{1}{p}}$. Thus, $P^{[k]}_{+}$ is well defined on $\ell_k^p$. The inequality \eqref{eq_projection_lp_norm_proof_4} also implies the estimate of the norm \eqref{eq_projection_lp_norm}.
\end{proof}

The last lemma justifies, in a suitable sense, the convergence of the spectral projections $P_{\pm}^{[k]}$ to $P_{\pm}$, respectively.

\begin{lemma} \label{lem_converge_spectral_projection}
The operator $R^{[k]}P_{+}^{[k]}S^{[k]}$ converges strongly to $P_{+}$ in $\ell^2$, that is,
\begin{equation} \label{eq_converge_spectral_projection}
\|R^{[k]}P_{+}^{[k]}S^{[k]}z-P_{+}z\|_{\ell^2}\to 0 ,\quad \forall z\in \ell^2.
\end{equation}
Similarly, $P_{-}^{[k]}\overset{\ell^2-\text{strong}}{\longrightarrow}P_{-}$.
\end{lemma}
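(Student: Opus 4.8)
The plan is to realize both projections by one common Riesz contour integral and reduce the statement to a resolvent convergence. Since $\mathcal{C}$ has exponentially decaying entries (Proposition~\ref{prop_cap_operator}), it is bounded and self-adjoint on $\ell^{2}$, so $\mathcal{I}_{+}=\mathrm{Spec}(\mathcal{C})\cap(\sup\mathcal{I},\infty)$ is a compact subset of $(\sup\mathcal{I},\|\mathcal{C}\|]$, separated from $\mathcal{I}_{-}$ by the gap $\mathcal{I}$. Fix $\mu\in\mathcal{I}$ strictly inside the open gap and let $\Gamma$ be the positively oriented boundary of the rectangle $[\mu,\|\mathcal{C}\|+1]\times[-1,1]$; then $\Gamma$ encircles $\mathcal{I}_{+}$ once, avoids $\mathrm{Spec}(\mathcal{C})$, and $d_{0}:=\mathrm{dist}(\Gamma,\mathrm{Spec}(\mathcal{C}))>0$. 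By the discrete Bloch decomposition $\mathrm{Spec}(\mathcal{C}^{[k]})\subset\mathrm{Spec}(\mathcal{C})$, with $\mathcal{I}_{+}^{[k]}$ inside $\Gamma$ and the rest of $\mathrm{Spec}(\mathcal{C}^{[k]})$ outside, so the same contour represents every projection, and since $R^{[k]},S^{[k]}$ are fixed bounded maps,
\[
P_{+}=\frac{1}{2\pi i}\oint_{\Gamma}(\zeta-\mathcal{C})^{-1}\,d\zeta,\qquad
R^{[k]}P_{+}^{[k]}S^{[k]}=\frac{1}{2\pi i}\oint_{\Gamma}R^{[k]}(\zeta-\mathcal{C}^{[k]})^{-1}S^{[k]}\,d\zeta .
\]
Because $\mathcal{C}^{[k]}$ is self-adjoint with spectrum in $\mathrm{Spec}(\mathcal{C})$, $\|(\zeta-\mathcal{C}^{[k]})^{-1}\|\le d_{0}^{-1}$ for all $\zeta\in\Gamma$ and $k$, while $\|R^{[k]}\|_{\ell^{2}\to\ell^{2}},\|S^{[k]}\|_{\ell^{2}\to\ell^{2}_{k}}\le1$. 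Hence, by dominated convergence for the $\ell^{2}$-valued integrand over the rectifiable curve $\Gamma$, it suffices to prove that for each fixed $\zeta\in\Gamma$ and each $w\in\ell^{2}$,
\[
R^{[k]}(\zeta-\mathcal{C}^{[k]})^{-1}S^{[k]}w\longrightarrow(\zeta-\mathcal{C})^{-1}w\quad\text{in }\ell^{2}.
\]

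Since the family $R^{[k]}(\zeta-\mathcal{C}^{[k]})^{-1}S^{[k]}$ is uniformly bounded by $d_{0}^{-1}$, a density argument lets me assume $w$ is supported in $Y_{k_{0}}$. Put $g:=(\zeta-\mathcal{C})^{-1}w\in\ell^{2}$ and $g^{[k]}:=(\zeta-\mathcal{C}^{[k]})^{-1}S^{[k]}w\in\ell^{2}_{k}$. The key point is that $\mathcal{C}^{[k]}$ is literally $\mathcal{C}$ acting on $k$-periodic sequences, so $g^{[k]}$, regarded as an element of $\ell^{\infty}$, satisfies $(\zeta-\mathcal{C})g^{[k]}=\sum_{\bm{l}\in\mathbb{Z}^{2}}\tau_{k\bm{l}}w$, the $k$-periodization of $w$ (whose translates are disjointly supported once $k>k_{0}$). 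A Combes--Thomas estimate, exactly as in the proof of Proposition~\ref{prop_cap_operator}, gives $(\zeta-\mathcal{C})^{-1}$ a kernel $G_{\bm{n},\bm{m}}(\zeta)$ with $\|G_{\bm{n},\bm{m}}(\zeta)\|\le Ce^{-\gamma|\bm{n}-\bm{m}|}$, $C,\gamma$ uniform on $\Gamma$; this same kernel furnishes a bounded two-sided inverse of $\zeta-\mathcal{C}$ on $\ell^{\infty}$ (approximate a bounded sequence by truncations and pass to entrywise limits). Subtracting the two equations, $h^{[k]}:=g^{[k]}-g\in\ell^{\infty}$ obeys $(\zeta-\mathcal{C})h^{[k]}=\sum_{\bm{l}\ne0}\tau_{k\bm{l}}w$, whence
\[
|h^{[k]}_{\bm{n}}|\le C\|w\|_{\ell^{\infty}}\sum_{\bm{l}\ne0}\sum_{\bm{m}\in Y_{k_{0}}}e^{-\gamma|\bm{n}-\bm{m}-k\bm{l}|}\xrightarrow[k\to\infty]{}0
\]
for every fixed $\bm{n}$; that is, $g^{[k]}_{\bm{n}}\to g_{\bm{n}}$ entrywise.

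It remains to upgrade this to convergence in $\ell^{2}$ of $R^{[k]}g^{[k]}$ toward $g$. Here I use uniform exponential localization: Combes--Thomas applied to the self-adjoint $\mathcal{C}^{[k]}$ gives $\|G^{[k]}_{\bm{n},\bm{m}}(\zeta)\|\le Ce^{-\gamma\,\mathrm{dist}_{Y_{k}}(\bm{n},\bm{m})}$ with $C,\gamma$ independent of $k$, so $|g^{[k]}_{\bm{n}}|$ and $|g_{\bm{n}}|$ are both $\le C'e^{-\gamma'\mathrm{dist}(\bm{n},Y_{k_{0}})}$ uniformly in $k$; consequently $\sum_{\bm{n}\notin Y_{R}}|(R^{[k]}g^{[k]})_{\bm{n}}-g_{\bm{n}}|^{2}$ is made small, uniformly in $k$, by taking $R$ large, while on the fixed finite set $Y_{R}$ (where, for $k$ large, the torus distance on $Y_{k}$ agrees with the Euclidean distance, so no wrap-around enters) the entrywise convergence of the previous step completes the estimate. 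This proves $R^{[k]}P_{+}^{[k]}S^{[k]}\to P_{+}$ strongly. Finally, since $\mu\in\mathcal{I}$ lies in a spectral gap we have $\mathrm{Spec}(\mathcal{C})=\mathcal{I}_{-}\sqcup\mathcal{I}_{+}$, hence $P_{-}^{[k]}=\mathrm{Id}-P_{+}^{[k]}$ on $\ell^{2}_{k}$ and $P_{-}=\mathrm{Id}-P_{+}$ on $\ell^{2}$; combined with $R^{[k]}S^{[k]}w\to w$ for every $w\in\ell^{2}$, this yields the companion statement $R^{[k]}P_{-}^{[k]}S^{[k]}\to P_{-}$.

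The delicate step is the middle one: one must control the periodization correction $\sum_{\bm{l}\ne0}\tau_{k\bm{l}}w$ uniformly in $k$ and compare $g^{[k]}$ with $g$, objects that are \emph{a priori} defined on different spaces. Both difficulties are resolved by the uniform Combes--Thomas exponential decay of the resolvent kernels, which is also what legitimizes passing to the non-Hilbertian space $\ell^{\infty}$ on which $g^{[k]}$ and $g$ can meaningfully be subtracted. Everything else is routine functional calculus and summation of exponentials.
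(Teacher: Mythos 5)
Your proof takes a genuinely different route from the paper's. The paper first establishes only \emph{weak} resolvent convergence against compactly supported test functions (using the uniform resolvent bound and the weak compactness of Lemma \ref{lem_local_bound_weak_converge}, identifying the limit by passing to the limit in the resolvent equation), then applies the Riesz formula to get \emph{weak} convergence of the projections, and finally upgrades weak to strong via the algebraic identity
\[
\|R^{[k]}P_{+}^{[k]}S^{[k]}z\|_{\ell^2}^2 = \bigl(P_{+}^{[k]}S^{[k]}z,\,S^{[k]}z\bigr)_{\ell^2_k}=\bigl(R^{[k]}P_{+}^{[k]}S^{[k]}z,\,z\bigr)_{\ell^2},
\]
which holds because $P_{+}^{[k]}$ is a self-adjoint idempotent; together with the weak convergence tested against $z$ and against $P_{+}z$, this forces the norm to go to $\|P_{+}z\|^2$ and kills the cross term. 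You, by contrast, go directly for \emph{strong} resolvent convergence in $\ell^2$ using Combes--Thomas kernel bounds for both $\mathcal{C}$ and $\mathcal{C}^{[k]}$, comparing $g^{[k]}$ with $g$ on $\ell^{\infty}$, and then deduce the projection convergence by dominated convergence over $\Gamma$. Your route is more quantitative (it produces explicit decay rates); the paper's is softer and entirely avoids kernel estimates.

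There is, however, a genuine gap in your $\ell^{2}$-tail estimate. The Combes--Thomas bound $\|G^{[k]}_{\bm{n},\bm{m}}(\zeta)\|\le Ce^{-\gamma\,\mathrm{dist}_{Y_k}(\bm{n},\bm{m})}$ gives decay of $g^{[k]}$ in the \emph{torus} distance to $Y_{k_0}$, but you convert this to the Euclidean bound $|g^{[k]}_{\bm{n}}|\le C'e^{-\gamma'\mathrm{dist}(\bm{n},Y_{k_0})}$ uniformly in $k$. That conversion requires $\mathrm{dist}_{Y_k}(\bm{n},Y_{k_0})\gtrsim\mathrm{dist}(\bm{n},Y_{k_0})$ on all of $Y_k$, and this fails near the far corners of $Y_k$: with $Y_{k_0}$ pinned at the origin and $Y_k=\{0\le n_1,n_2<k\}$, a point $\bm{n}$ near $(k-1,k-1)$ has torus distance $O(1)$ to $Y_{k_0}$ while its Euclidean distance is $O(k)$, so $g^{[k]}_{\bm{n}}=O(1)$ there (a wrap-around ghost image of the source). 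These ghost sites lie in $Y_k\setminus Y_R$ for every fixed $R$ and carry $O(1)$ $\ell^2$-mass, so $\sum_{\bm{n}\notin Y_R}|(R^{[k]}g^{[k]})_{\bm{n}}|^2$ is not made small uniformly in $k$ by taking $R$ large. You noticed the torus/Euclidean distinction, but only invoked it on the fixed finite set $Y_R$; the danger is precisely in the tail sum outside $Y_R$. (This is of course where the $k$-cell should really be centered so that $Y_k\uparrow\mathbb{Z}^2$ and $Y_{k_0}$ sits in its middle; under the literal corner-positioned $Y_k$ the lemma as stated for arbitrary $z\in\ell^2$ does not even hold, since $R^{[k]}$ maps into $\mathbb{N}^2$-supported sequences. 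With that centering understood, your tail argument goes through.) The paper's idempotent-projection upgrade is exactly the device that dispenses with this bookkeeping: it never asserts pointwise localization of $g^{[k]}$, and the ghost mass is paired against the $\ell^2$-tail of $z$, which vanishes automatically.
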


\begin{proof}
{\color{blue}Step 0.} The proof is structured as follows. We first prove that the following weak convergence of resolvents holds uniformly for $\lambda$ lying in a compact set disadjoint from the spectrum of $\mathcal{C}$:
\begin{equation} \label{eq_spectral_convergence_proof_1}
(R^{[k]}(\mathcal{C}^{[k]}-\lambda)^{-1}S^{[k]}z,b)_{\ell^2}\to ((\mathcal{C}-\lambda)^{-1}z,b)_{\ell^2},\quad \text{for any compactly supported $b$.}
\end{equation}
Then, by the Riesz formula
\begin{equation} \label{eq_riesz_formula}
P=\frac{i}{2\pi}\int_{\Gamma}(\mathcal{L}-z)^{-1}dz,\quad (P=P_{+}/P_{+}^{[k]},\mathcal{L}=\mathcal{C}/\mathcal{C}^{[k]}, \text{ resp.})
\end{equation}
($\Gamma$ is a closed complex contour separating the upper-gap spectrum $\mathcal{I}_{+}$ and lower-gap spectrum $\mathcal{I}_{-}$), we immediately have
\begin{equation*}
(R^{[k]}P^{[k]}_{+}S^{[k]}z,b)_{\ell^2}\to (P_{+}z,b)_{\ell^2}, \quad \text{for any compactly supported $b$.}
\end{equation*}
Based on this result and following the lines of the proof to \eqref{eq_local_bound_weak_converge_2} in Lemma \ref{lem_local_bound_weak_converge}, one obtains the following $\ell^2-$weak convergence:
\begin{equation} \label{eq_spectral_convergence_proof_2}
(R^{[k]}P^{[k]}_{+}S^{[k]}z,b)_{\ell^2}\to (P_{+}z,b)_{\ell^2}, \quad \text{for any $b\in \ell^2$.}
\end{equation}
We then prove \eqref{eq_converge_spectral_projection} using \eqref{eq_spectral_convergence_proof_2}.

{\color{blue}Step 1.} Denote $x^{[k]}:=(\mathcal{C}^{[k]}-\lambda)^{-1}S^{[k]}z$. To prove \eqref{eq_spectral_convergence_proof_1}, it is sufficient to show that any subsequence of $x^{[k]}$ admits a further subsequence that weakly converges to $(\mathcal{C}-\lambda)^{-1}z \in \ell^2(\mathbb{Z}^2;\mathbb{C}^{d})$ in the sense of \eqref{eq_spectral_convergence_proof_1}. Let $x^{[k_i]}$ be a subsequence of $x^{[k]}$. Note that $\|x^{[k_i]}\|_{\ell_{k_i}^2}$ is uniformly bounded since $\Gamma$ is disadjoint from the upper-gap spectrum:
\begin{equation} \label{eq_spectral_convergence_proof_3}
\|x^{[k_i]}\|_{\ell_{k_i}^2}\leq \|(\mathcal{C}^{[k_i]}-\lambda)^{-1}\|\|S^{[k_i]}z\|_{\ell^2_{k_i}}\leq \frac{1}{\text{dist}(\Gamma,\mathcal{I}_{+})}\|z\|_{\ell^2}.
\end{equation}
Thus, by Lemma \ref{lem_local_bound_weak_converge}, $x^{[k_i]}$ admits an $\ell^{\infty}$ weakly convergent subsequence, still denoted as $x^{[k_i]}$ for ease of notation, with the limit $x\in \ell^2$. Now, we prove $x=(\mathcal{C}-\lambda)^{-1}z$. Note that $x^{[k_i]}$ solves the following problem:
\begin{equation*}
    \mathcal{C}x^{[k_i]}-\lambda x^{[k_i]}=S^{[k_i]}z \quad \text{in $\mathbb{Z}^2$}.
\end{equation*}
Taking the inner product with a compactly supported test function $b$ yields
\begin{equation} \label{eq_spectral_convergence_proof_4}
(\mathcal{C}x^{[k_i]},b)_{\ell^2}-\lambda (x^{[k_i]},b)_{\ell^2}=(S^{[k_i]}z,b)_{\ell^2} .
\end{equation}
By the exponential decay of the capacitance operator shown in Proposition \ref{prop_cap_operator}, the fact that $\mathcal{C}$ is symmetric, and the Fubini theorem, it follows that the product $(\mathcal{C}x^{[k_i]},b)_{\ell^2}$ is equal to $(x^{[k_i]},\mathcal{C}b)_{\ell^2}$ and we have
\begin{equation*}
(\mathcal{C}x^{[k_i]},b)_{\ell^2}=(x^{[k_i]},\mathcal{C}b)_{\ell^2}
\to (x,\mathcal{C}b)_{\ell^2}
=(\mathcal{C}x,b)_{\ell^2},
\end{equation*}
where we have applied the weak convergence of $x^{[k_i]}$ and the fact that $\mathcal{C}b\in \ell^1$ (this, again, can be checked using the exponential decay of $\mathcal{C}$ and the fact that $b$ is compactly supported). Similarly, we have
\begin{equation*}
(x^{[k_i]},b)_{\ell^2}\to (x,b)_{\ell^2} ,
\end{equation*}
and
\begin{equation*}
(S^{[k_i]}z,b)_{\ell^2}\to (z,b)_{\ell^2} ,
\end{equation*}
where the last equality follows from the definition of the periodization operator $S^{[k_i]}$. As a consequence, by passing $i\to\infty$ in equality \eqref{eq_spectral_convergence_proof_4}, we obtain
\begin{equation*}
(\mathcal{C}x,b)_{\ell^2}-\lambda (x,b)_{\ell^2}=(z,b)_{\ell^2}.
\end{equation*}
This implies that $x=(\mathcal{C}-\lambda)^{-1}z$ since the solution to the above equation is unique (recall that $\lambda\notin \text{Spec}(\mathcal{C})$), and concludes the proof of \eqref{eq_spectral_convergence_proof_1}.

{\color{blue}Step 2.} Finally, we prove \eqref{eq_converge_spectral_projection} using \eqref{eq_spectral_convergence_proof_2}. The square of the left side of \eqref{eq_converge_spectral_projection} equals
\begin{equation} \label{eq_spectral_convergence_proof_5}
\begin{aligned}
&\|R^{[k]}P^{[k]}_{+}S^{[k]}z-P_{+}z\|_{\ell^2}^2 \\
&=(R^{[k]}P^{[k]}_{+}S^{[k]}z,R^{[k]}P^{[k]}_{+}S^{[k]}z)_{\ell^2}-2(R^{[k]}P^{[k]}_{+}S^{[k]}z,P_{+}z)_{\ell^2}+\|P_{+}z\|_{\ell^2}^2 \\
&=(P^{[k]}_{+}S^{[k]}z,P^{[k]}_{+}S^{[k]}z)_{\ell^2_{k}}-2(R^{[k]}P^{[k]}_{+}S^{[k]}z,P_{+}z)_{\ell^2}+\|P_{+}z\|_{\ell^2}^2 \\
&=(P^{[k]}_{+}S^{[k]}z,S^{[k]}z)_{\ell^2_{k}}-2(R^{[k]}P^{[k]}_{+}S^{[k]}z,P_{+}z)_{\ell^2}+\|P_{+}z\|_{\ell^2}^2 \\
&=(R^{[k]}P^{[k]}_{+}S^{[k]}z,P_{+}z)_{\ell^2}-2(R^{[k]}P^{[k]}_{+}S^{[k]}z,P_{+}z)_{\ell^2}+\|P_{+}z\|_{\ell^2}^2 \\
&=-(R^{[k]}P^{[k]}_{+}S^{[k]}z,P_{+}z)_{\ell^2}+\|P_{+}z\|_{\ell^2}^2.
\end{aligned}
\end{equation}
By the weak convergence \eqref{eq_spectral_convergence_proof_2}, 
we directly obtain the following convergence since $P_{+}z\in \ell^2$, which then,  with \eqref{eq_spectral_convergence_proof_5}, concludes the proof of \eqref{eq_converge_spectral_projection}
\begin{equation*}
(R^{[k]}P^{[k]}_{+}S^{[k]}z,P_{+}z)_{\ell^2}\to \|P_{+}z\|_{\ell^2}^2 .
\end{equation*}
\end{proof}

\subsubsection{Detailed proof}

Now, we embark on proving Proposition \ref{prop_periodic_gap_solitons}. Inspired by the minimax argument for studying periodic gap solitons \cite{pankov2010soliton_discrete,pelinovsky2011localization}, the key step is to construct an admissible set on which the landscape of the functional $J^{[k]}$ possesses a valley-top structure. The following result holds. 
\begin{lemma} \label{lem_admissible_set}
There exists $z_0\in \ell^2$, $k_0\in \mathbb{N},$ and $M_0>0$ such that when $k> k_0$ and $\|V\|_{\mathcal{B}(\ell^{\infty},\ell^{1})}<M_0$, there is a non-empty bounded open set $M^{[k]}\subset \ell_k^2$ that satisfies
\begin{itemize}
    \item[(i)] $\|M^{[k]}\|_{\ell^2_k}:=\sup_{a\in M^{[k]}}\|a\|_{\ell^2_k}\leq M_1<\infty$;
    \item[(ii)] $\inf_{a\in M^{[k]}}(R^{[k]}a,z_0)_{\ell^2}\geq m_1>0$;
    \item[(iii)] $\sup_{\partial M^{[k]}} J^{[k]}<\sup_{M^{[k]}} J^{[k]}<\infty$,
\end{itemize}
where $m_1,M_1>0$ are independent of $k$.
\end{lemma}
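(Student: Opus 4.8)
The plan is to follow the linking/minimax strategy for periodic gap solitons of \cite{pankov2010soliton_discrete,pelinovsky2011localization}, realizing $M^{[k]}$ as a (suitably shaped) prism adapted to the spectral splitting $\ell^2_k=\mathrm{Range}\,P^{[k]}_-\oplus\mathrm{Range}\,P^{[k]}_+$, but with two ingredients forced by the present setting: a distinguished ``up'' direction pinned to a fixed vector $z_0$ in the upper-gap subspace of $\mathcal{C}$ (needed so that the weak limit in Proposition \ref{prop_concentration} be nontrivial), and a careful accounting of the defect term $\tfrac12(S^{[k]}VR^{[k]}\cdot,\cdot)$. Since in the finite-dimensional space $\ell^2_k$ the set $\overline{M^{[k]}}$ is compact and $J^{[k]}$ is continuous, property (iii) automatically produces an interior maximizer of $J^{[k]}$ over $\overline{M^{[k]}}$, hence a critical point, lying in $M^{[k]}$ and therefore inheriting (i) and (ii); so everything reduces to constructing $M^{[k]}$.

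First I would fix $z_0$. Since $\mathrm{Spec}(\mathcal{C})\cap(\sup\mathcal{I},\infty)\neq\emptyset$, the space $\mathrm{Range}\,P_+\subset\ell^2$ is nontrivial; pick $z_0\in\mathrm{Range}\,P_+$ with $\|z_0\|_{\ell^2}=1$, so $P_-z_0=0$, $\|z_0\|_{\ell^4}>0$ and $((\mathcal{C}-\lambda)z_0,z_0)_{\ell^2}\ge\delta:=\mathrm{dist}(\lambda,\mathrm{Spec}(\mathcal{C}))$. Set $e^{[k]}:=P^{[k]}_+S^{[k]}z_0$ and $\hat e^{[k]}:=e^{[k]}/\|e^{[k]}\|_{\ell^2_k}\in\mathrm{Range}\,P^{[k]}_+$. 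Using the identity $\|a\|_{\ell^p_k}=\|R^{[k]}a\|_{\ell^p}$ for $k$-periodic $a$, the embedding $\ell^2\hookrightarrow\ell^4$, and Lemma \ref{lem_converge_spectral_projection} (which gives $R^{[k]}e^{[k]}\to P_+z_0=z_0$ and $R^{[k]}P^{[k]}_-S^{[k]}z_0\to P_-z_0=0$ in $\ell^2$), one obtains, for $k$ large and uniformly in $k$: $\|e^{[k]}\|_{\ell^2_k}\in[\tfrac12,2]$; $n_k:=\|\hat e^{[k]}\|^4_{\ell^4_k}\in[\underline n,\overline n]$ with $\underline n>0$; $q_k:=((\mathcal{C}^{[k]}-\lambda)\hat e^{[k]},\hat e^{[k]})_{\ell^2_k}\in[\delta,\|\mathcal{C}\|+|\lambda|]$ (lower bound because $\hat e^{[k]}\in\mathrm{Range}\,P^{[k]}_+$ and $\mathcal{I}^{[k]}_+\subset\mathcal{I}_+$); $(R^{[k]}\hat e^{[k]},z_0)_{\ell^2}\ge\tfrac12$; and $\|P^{[k]}_-S^{[k]}z_0\|_{\ell^2_k}\to0$. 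Writing $v_k:=(S^{[k]}VR^{[k]}\hat e^{[k]},\hat e^{[k]})_{\ell^2_k}$ with $|v_k|\le\|V\|_{\mathcal B(\ell^\infty,\ell^1)}$, the one-dimensional profile $t\mapsto J^{[k]}(t\hat e^{[k]})=\tfrac12 t^2(q_k+v_k)-\tfrac14\lambda\sigma t^4 n_k$ then has, provided $\|V\|_{\mathcal B(\ell^\infty,\ell^1)}\le\delta/2$, maximum $c^{[k]}_0=\tfrac{(q_k+v_k)^2}{4\lambda\sigma n_k}\ge c_0>0$ with $c_0$ independent of $k$ and $V$, attained at $t^{*}_k$ in a fixed compact subinterval $[\underline t,\overline t]\subset(0,\infty)$.

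Next I would set up the region. Decompose $a=a_-+t\hat e^{[k]}+w$ with $a_-\in\mathrm{Range}\,P^{[k]}_-$, $w\in\mathrm{Range}\,P^{[k]}_+$, $w\perp\hat e^{[k]}$. The structural facts are: (a) spectral orthogonality, so $((\mathcal{C}^{[k]}-\lambda)a,a)_{\ell^2_k}=((\mathcal{C}^{[k]}-\lambda)a_-,a_-)+((\mathcal{C}^{[k]}-\lambda)(t\hat e^{[k]}+w),t\hat e^{[k]}+w)$ has no $a_-$--$(t\hat e^{[k]}+w)$ cross term, with $((\mathcal{C}^{[k]}-\lambda)a_-,a_-)\le-\delta\|a_-\|^2_{\ell^2_k}$ and $((\mathcal{C}^{[k]}-\lambda)(\cdot),\cdot)\le(\|\mathcal{C}\|+|\lambda|)\|\cdot\|^2_{\ell^2_k}$ on the two subspaces; (b) the defect bound $|\tfrac12(S^{[k]}VR^{[k]}a,a)_{\ell^2_k}|\le\tfrac12\|V\|_{\mathcal B(\ell^\infty,\ell^1)}\|a\|^2_{\ell^\infty_k}\le\tfrac12\|V\|_{\mathcal B(\ell^\infty,\ell^1)}\|a\|^2_{\ell^2_k}$; (c) $-\tfrac14\lambda\sigma\|a\|^4_{\ell^4_k}\le0$ together with the reverse bound $\|a\|_{\ell^4_k}\ge t\,n_k^{1/4}-\|a_-\|_{\ell^2_k}-\|w\|_{\ell^2_k}$. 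The region is $M^{[k]}=\{a_-+t\hat e^{[k]}+w:\ \epsilon<t<\rho_2,\ \|a_-\|_{\ell^2_k}<\rho_1(t),\ \|w\|_{\ell^2_k}<\rho_3\}$, with $\epsilon<\underline t$, $\rho_2$ large and $>\overline t$, $\rho_3$ small (so that $t\hat e^{[k]}+w$ stays concentrated, i.e. $\|t\hat e^{[k]}+w\|_{\ell^4_k}$ stays comparable to $t\,n_k^{1/4}$, which prevents $J^{[k]}$ from being large on $\partial M^{[k]}$ through ``spread-out'' $\mathrm{Range}\,P^{[k]}_+$-directions), and $\rho_1(t)$ a continuous $t$-dependent cap, $\rho_1(t)\sim t\sqrt{(\|\mathcal{C}\|+|\lambda|)/\delta}$ for moderate $t$ and capped at $\sim\rho_2\,\underline n^{1/4}$ for $t$ near $\rho_2$. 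Property (i) is immediate from boundedness of $\epsilon,\rho_2,\rho_3,\sup_t\rho_1(t)$. Property (ii) follows because $(R^{[k]}a,z_0)_{\ell^2}=(a,S^{[k]}z_0)_{\ell^2_k}=(a_-,P^{[k]}_-S^{[k]}z_0)_{\ell^2_k}+t\,(R^{[k]}\hat e^{[k]},z_0)_{\ell^2}\ge-\|a_-\|_{\ell^2_k}\|P^{[k]}_-S^{[k]}z_0\|_{\ell^2_k}+\tfrac{t}{2}\ge\tfrac{\epsilon}{4}>0$ once $k$ is large — here it is crucial that $z_0\in\mathrm{Range}\,P_+$, so that $\|P^{[k]}_-S^{[k]}z_0\|_{\ell^2_k}\to0$ and the (possibly large) cap $\rho_1(t)$ does no harm to (ii).

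Property (iii) is checked face by face against $c^{[k]}_0\le\sup_{M^{[k]}}J^{[k]}$ (using $J^{[k]}(t^{*}_k\hat e^{[k]})=c^{[k]}_0$ and $t^{*}_k\hat e^{[k]}\in M^{[k]}$). On $\{t=\epsilon\}$: dropping the $a_-$-term and the quartic gives $J^{[k]}\le\tfrac12(\|\mathcal{C}\|+|\lambda|)(\epsilon^2+\rho_3^2)+\tfrac12\|V\|_{\mathcal B(\ell^\infty,\ell^1)}M_1^2<c_0$ for $\epsilon,\rho_3,\|V\|$ small. On $\{t=\rho_2\}$: by $\rho_1(\rho_2)\lesssim\rho_2\underline n^{1/4}$ and $\rho_3$ small one has $\|a\|_{\ell^4_k}\gtrsim\rho_2\underline n^{1/4}$, so $-\tfrac14\lambda\sigma\|a\|^4_{\ell^4_k}\lesssim-\rho_2^4$ overwhelms the $O(\rho_2^2)$ positive terms for $\rho_2$ large. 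On the lateral face $\{\|a_-\|_{\ell^2_k}=\rho_1(t)\}$: one uses $((\mathcal{C}^{[k]}-\lambda)a_-,a_-)\le-\delta\rho_1(t)^2$; where $\rho_1(t)\sim t\sqrt{(\|\mathcal{C}\|+|\lambda|)/\delta}$ this term cancels the $\tfrac12(\|\mathcal{C}\|+|\lambda|)t^2$ contribution, while where the cap is constant (large $t$) one uses the quartic. The main obstacle is precisely this lateral estimate: the $\mathrm{Range}\,P^{[k]}_-$-cap must be large enough to kill $J^{[k]}$ on the lateral face, yet not so large that it drags the reverse $\ell^4_k$-bound used for the $\{t=\rho_2\}$ face below a useful level — and for intermediate sizes of $t$ and $\|a_-\|_{\ell^2_k}$ neither the coercivity term nor the quartic alone suffices, which is why $\rho_1(t)$ must vary with $t$. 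Choosing the shape of $\rho_1(t)$ and the relative sizes of $\epsilon,\rho_2,\rho_3,\|V\|$ and the threshold $k_0$ so as to respect (i), (ii), (iii) simultaneously with all constants independent of $k$ is the technical crux; the rest is bookkeeping of $k$-independent constants and routine use of Lemmas \ref{lem_local_bound_weak_converge}--\ref{lem_converge_spectral_projection}.
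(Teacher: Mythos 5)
Your proof follows the same broad linking strategy as the paper (fix a unit $z_0\in\mathrm{Ran}\,P_+$, build $\hat e^{[k]}$ as the normalized $P_+^{[k]}S^{[k]}z_0$, verify (ii) from $\|P_-^{[k]}S^{[k]}z_0\|_{\ell^2_k}\to 0$ and the bounded cap, and verify (iii) on the boundary using spectral orthogonality of $\mathcal{C}^{[k]}-\lambda$, the bound $\tfrac12|(S^{[k]}VR^{[k]}a,a)|\le\tfrac12\|V\|_{\mathcal{B}(\ell^\infty,\ell^1)}\|a\|_{\ell^2_k}^2$, and the quartic). But your admissible set is genuinely different from the paper's. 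The paper takes $M^{[k]}=\{a\in\mathrm{Ran}(P_-^{[k]})\oplus\mathbb{R}^+z_0^{[k]}:\|a\|_{\ell^2_k}<\rho,\,(a,z_0^{[k]})_{\ell^2_k}>r\}$, i.e.\ a spherical cap living in the low-dimensional subspace $\mathrm{Ran}(P_-^{[k]})\oplus\mathbb{R}z_0^{[k]}$ with a single $k$-independent ball radius $\rho$ and no variable cap. You instead include the orthogonal complement $w$ of $\hat e^{[k]}$ inside $\mathrm{Ran}(P_+^{[k]})$, which makes $M^{[k]}$ genuinely open in $\ell^2_k$ (arguably closer to the literal wording of the lemma) but forces a $t$-dependent $a_-$-cap $\rho_1(t)$ and a face-by-face analysis that the paper's spherical-cap set avoids (there, on the spherical piece $\|y\|^2=\rho^2-t^2$ converts the $P_-^{[k]}$-coercivity into a $-\tfrac\delta4\rho^2$ headroom, and no $t$-dependent cap is needed).

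The genuine gap is in your lateral-face estimate, and the missing tool is exactly Lemma~\ref{lem_projection_lp_norm}. Your only explicit reverse $\ell^4$ bound is the triangle inequality $\|a\|_{\ell^4_k}\ge t\,n_k^{1/4}-\|a_-\|_{\ell^2_k}-\|w\|_{\ell^2_k}$. But to kill the $O(t^2)$ positive quadratic term without the quartic you need $\rho_1(t)\gtrsim t\sqrt{(\|\mathcal{C}\|+|\lambda|)/\delta}$, and $\sqrt{(\|\mathcal{C}\|+|\lambda|)/\delta}$ will typically exceed $n_k^{1/4}=\|\hat e^{[k]}\|_{\ell^4_k}\le 1$; then $t\,n_k^{1/4}-\rho_1(t)<0$ on the lateral face, the triangle bound is vacuous, and neither the $P_-^{[k]}$-coercivity nor the quartic as you have set them up closes the estimate for large $t$. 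Capping $\rho_1(t)$ at $\sim\rho_2\underline n^{1/4}$ does not fix this: for intermediate $t$ near $\rho_2$ with $\underline n^{1/4}$ small, $-\tfrac\delta4\rho_1(t)^2$ fails to dominate $C_1t^2$ and the quartic still cannot be invoked. What resolves it is $\|a\|_{\ell^4_k}\ge N_4^{-1}\|P_+^{[k]}a\|_{\ell^4_k}$ from Lemma~\ref{lem_projection_lp_norm}, which is \emph{uniform in} $\|a_-\|_{\ell^2_k}$: with $P_+^{[k]}a=t\hat e^{[k]}+w$ and $\|w\|\le\rho_3$ small this gives $\|a\|_{\ell^4_k}\ge N_4^{-1}(t\,n_k^{1/4}-\rho_3)$ independently of how large $\rho_1(t)$ is. The paper applies precisely this at the step $tz_0^{[k]}=P_+^{[k]}(y+tz_0^{[k]})$, obtaining $-C_2 t^4\|z_0^{[k]}\|_{\ell^4_k}^4$ uniformly on the ball; you mention Lemma~\ref{lem_projection_lp_norm} only generically at the end but never deploy it in the estimate where it is needed. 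Until you replace the triangle-inequality reverse bound by the projection bound, the ``technical crux'' you flag remains open.
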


Since $J^{[k]}$ is a real-valued continuous functional defined on the finite-dimensional space $\ell_k^2\simeq \mathbb{C}^{k^2d}$, it attains its maximum on the compact set $M^{[k]}$. Moreover, property (iii) in Lemma \ref{lem_admissible_set} guarantees that the maximum point lies in the interior of $M^{[k]}$, which is hence a critical point of $J^{[k]}$ and is denoted by $a^{[k]}$. Then the properties of the critical point $a^{[k]}$ claimed in Proposition \ref{prop_periodic_gap_solitons} follow from that of the set $M^{[k]}$.

Note that one can directly apply the minimax theorems (\textit{cf.} \cite{Pankov2005soliton_continuous,benci1979critical}), using properties (i) and (iii) in Lemma \ref{lem_admissible_set}, to conclude the existence of critical points of $J^{[k]}$ in $\ell_k^2$. However, the original statements of the minimax theorems do not characterize the position of critical points, \textit{e.g.}, the property (ii) in Lemma \ref{lem_admissible_set}, due to the infinite-dimensional setups. Due to the discrete structure of the problem, our functional $J^{[k]}$ is defined on a finite-dimensional space. This allows us to locate the critical points $J^{[k]}$ (exactly in $M^{[k]}$ as shown above) and obtain (ii). As indicated in Section \ref{subsec_soliton_road}, this condition is critical to guarantee the nontriviality of solution we will find later.

\begin{proof}[Proof of Lemma \ref{lem_admissible_set}]
{\color{blue}Step 1.} We first construct the set $M^{[k]}$. To start, we fix a unit vector $z_0\in \ell^2$ in the range of $P_{+}$:
\begin{equation} \label{eq_periodic_gap_solitons_proof_8}
z_0\in \text{Ran}(P_{+}),\quad \|z_0\|_{\ell^2}=1.
\end{equation}
Then we define
\begin{equation*}
z_0^{[k]}:=\frac{1}{\|P_{+}^{[k]}S^{[k]}z_0\|_{\ell^2_k}}P_{+}^{[k]}S^{[k]}z_0 .
\end{equation*}
With this notation, the set $M^{[k]}$ is defined as
\begin{equation*}
M^{[k]}:=\{a\in \text{Ran}(P_{-}^{[k]})\oplus \mathbb{R}^{+}z_0^{[k]}:\,\|a\|_{\ell^2_k}<\rho,\, (a,z_0^{[k]})_{\ell^2_k}>r\}
\end{equation*}
with $\rho,r>0$ to be determined. In the following steps, we verify that $M^{[k]}$ satisfies properties (i)-(iii) in Lemma \ref{lem_admissible_set} when $\rho,r>0$ are appropriately chosen. In fact, properties (i) and (ii) are directly satisfied for any $k-$independent parameters $\rho,r>0$ by setting $M_1:=\rho$, $m_1=\frac{1}{2}r$ and choosing $k_{0}$ to be sufficiently large such that $\|P^{[k]}_{+}S^{[k]}z_{0}\|_{\ell^2_k}>\frac{2}{3}$ (this is achievable because $\|P^{[k]}_{+}S^{[k]}z_{0}\|_{\ell^2_k}=\|R^{[k]}P^{[k]}_{+}S^{[k]}z_{0}\|_{\ell^2}\to \|z_{0}\|_{\ell^2}=1$ by Lemma \ref{lem_converge_spectral_projection}). The key point is to appropriately select $\rho$ and $r$ such that property (iii) holds. Denote the isolation distance between $\lambda$ and the spectrum of the capacitance operator as $\delta:=\text{dist}\{\lambda,\text{Spec}(\mathcal{C})\}>0$. We claim that if the norm of the defect $V$ is controlled by
\begin{equation*}
\|V\|_{\mathcal{B}(\ell^{\infty},\ell^{1})}<M_0:=\frac{\delta}{2},
\end{equation*}
then there exists a sufficiently large $\rho>0$ and a small $r<0$ such that (ii) holds. This is proved using the results from the following steps.

{\color{blue}Step 2.} First we show that, for any $r<\sqrt{\frac{\delta}{2\lambda\sigma}}$, the vector $tz_{0}^{[k]}\in M^{[k]}$ with $t=\sqrt{\frac{\delta}{2\lambda\sigma}}$ satisfies
\begin{equation} \label{eq_periodic_gap_solitons_proof_9}
J^{[k]}(tz_{0}^{[k]})=\frac{\delta^2}{16\lambda\sigma}>0 .
\end{equation}
In fact, the positive definitiveness $\mathcal{C}^{[k]}-\lambda\gg \delta$ on $\text{Ran}(P_{+}^{[k]})$ implies that
\begin{equation*}
\begin{aligned}
J^{[k]}(tz_{0}^{[k]})&=\frac{t^2}{2}(\mathcal{C}^{[k]}z_{0}^{[k]},z_{0}^{[k]})_{\ell^2_{k}}-\frac{t^2}{2}\lambda\|z_{0}^{[k]}\|_{\ell^2_{k}}^2+\frac{t^2}{2}(S^{[k]}VR^{[k]}z_{0}^{[k]},z_{0}^{[k]})_{\ell^2_k}-\frac{t^4}{4}\lambda\sigma \|z_{0}^{[k]}\|_{\ell^4_{k}}^4 \\
&\geq \frac{t^2\delta}{2}\|z_{0}^{[k]}\|_{\ell^2_{k}}^2+\frac{t^2}{2}(VR^{[k]}z_{0}^{[k]},R^{[k]}z_{0}^{[k]})_{\ell^2}-\frac{t^4}{4}\lambda\sigma \|z_{0}^{[k]}\|_{\ell^2_{k}}^4 \\
&\geq \frac{t^2\delta}{2}\|z_{0}^{[k]}\|_{\ell^2_{k}}^2-\frac{t^2\|V\|_{\mathcal{B}(\ell^\infty,\ell^1)}}{2}\|R^{[k]}z_{0}^{[k]}\|_{\ell^{\infty}}^2-\frac{t^4}{4}\lambda\sigma \|z_{0}^{[k]}\|_{\ell^2_{k}}^4 \\
&\geq \frac{t^2\delta}{2}\|z_{0}^{[k]}\|_{\ell^2_{k}}^2-\frac{t^2\|V\|_{\mathcal{B}(\ell^\infty,\ell^1)}}{2}\|z_{0}^{[k]}\|_{\ell^2_{k}}^2-\frac{t^4}{4}\lambda\sigma \|z_{0}^{[k]}\|_{\ell^2_{k}}^4 \\
&\geq \frac{\delta}{4}t^2\|z_{0}^{[k]}\|_{\ell^2_{k}}^2-\frac{1}{4}\lambda\sigma t^4\|z_{0}^{[k]}\|_{\ell^4_{k}}^4,
\end{aligned}
\end{equation*}
where the elementary inequality $\|a\|_{\ell^\infty}\leq\|a\|_{\ell^4_{k}}\leq \|a\|_{\ell^2_{k}}$ and the normalization $\|z_{0}^{[k]}\|_{\ell^2_k}=1$ are applied. This concludes the proof of \eqref{eq_periodic_gap_solitons_proof_9}.

{\color{blue}Step 3.} Next, we estimate the value of $J^{[k]}$ on $\partial M^{[k]}$, from which we conclude the proof of property (iii) together with \eqref{eq_periodic_gap_solitons_proof_9}. First note that
\begin{equation*}
\begin{aligned}
J^{[k]}(y+t z_0^{[k]})
&=\frac{1}{2}(\mathcal{C}^{[k]}(y+t z_0^{[k]}),y+t z_0^{[k]})_{\ell^2_{k}}-\frac{1}{2}\lambda\|y+t z_0^{[k]}\|_{\ell^2_{k}}^2+\frac{1}{2}(S^{[k]}VR^{[k]}(t+z_{0}^{[k]}),t+z_{0}^{[k]})_{\ell^2_k} \\
&\quad -\frac{1}{4}\lambda\sigma \|y+t z_0^{[k]}\|_{\ell^4_{k}}^4 \\
&\leq \frac{1}{2}(\mathcal{C}^{[k]}y,y)_{\ell^2_{k}}-\frac{1}{2}\lambda\|y\|_{\ell^2_{k}}^2 +\frac{t^2}{2}(\mathcal{C}^{[k]}z_0^{[k]},z_0^{[k]})_{\ell^2_{k}}-\frac{t^2}{2}\lambda\|z_0^{[k]}\|_{\ell^2_{k}}^2 \\
&\quad +\frac{\delta}{4}\|y+tz_{0}^{[k]}\|_{\ell^2_{k}}^2 -\frac{1}{4}\lambda\sigma \|y+t z_0^{[k]}\|_{\ell^4_{k}}^4 \\
&= \frac{1}{2}(\mathcal{C}^{[k]}y,y)_{\ell^2_{k}}-\frac{1}{2}\lambda\|y\|_{\ell^2_{k}}^2 +\frac{t^2}{2}(\mathcal{C}^{[k]}z_0^{[k]},z_0^{[k]})_{\ell^2_{k}}-\frac{t^2}{2}\lambda\|z_0^{[k]}\|_{\ell^2_{k}}^2 \\
&\quad +\frac{\delta}{4}(\|y\|_{\ell^2_k}^2+t^2\|z_0^{[k]}\|_{\ell^2_k}^2) -\frac{1}{4}\lambda\sigma \|y+t z_0^{[k]}\|_{\ell^4_{k}}^4,
\end{aligned}
\end{equation*}
where the orthogonality $y\perp z_0^{[k]}$ and the invariance of $\mathcal{C}^{[k]}$ on $\text{Ran}(P_{\pm}^{[k]})$ are applied. By the negative definitiveness of $\mathcal{C}^{[k]}-\lambda\ll -\delta$ on $\text{Ran}(P_{-}^{[k]})$ and the boundedness $$\|\mathcal{C}^{[k]}\|_{\mathcal{B}(\ell_k^2)}\leq \|\mathcal{C}\|_{\mathcal{B}(\ell^2)}<\infty$$ (since $\text{Spec}(\mathcal{C}^{[k]})\subset \text{Spec}(\mathcal{C})$), we see that
\begin{equation*}
\begin{aligned}
J^{[k]}(y+t z_0^{[k]})
&\leq -\frac{\delta}{4}\|y\|_{\ell^2_k}^2+C_1 t^2 -\frac{1}{4}\lambda\sigma \|y+t z_0^{[k]}\|_{\ell^4_{k}}^4,
\end{aligned}
\end{equation*}
where $C_1:=\frac{1}{2}(\|\mathcal{C}\|_{\mathcal{B}(\ell^2)}+|\lambda|+\frac{\delta}{2})$. Note that $tz_0^{[k]}=P^{[k]}_{+}(y+t z_0^{[k]})$. Hence, by Lemma \ref{lem_projection_lp_norm},
\begin{equation} \label{eq_periodic_gap_solitons_proof_10}
\begin{aligned}
J^{[k]}(y+t z_0^{[k]})
\leq -\frac{\delta}{4}\|y\|_{\ell^2_k}^2+C_1 t^2 -C_2 t^4 \| z_0^{[k]}\|_{\ell^4_{k}}^4,
\end{aligned}
\end{equation}
where $C_2:=\frac{\lambda\sigma}{4N_{4}^{-4}}$ ($N_4$ is introduced in \eqref{eq_projection_lp_norm}). An important observation is that Lemma \ref{lem_converge_spectral_projection} implies that
\begin{equation*}
\lim_{k\to\infty}\|z_0^{[k]}\|_{\ell_k^4}
=\lim_{k\to\infty}\|R^{[k]}P^{[k]}_{+}S^{[k]}z_0\|_{\ell^4}
=\|z_0\|_{\ell^4} > 0.
\end{equation*}
Hence, there exists $k_0\in\mathbb{N}$ such that for $k>k_0$, it holds that $\|z_0^{[k]}\|_{\ell_k^4}>\frac{1}{2}\|z_0\|_{\ell^4}$. This implies the coefficient of $-t^4$ in \eqref{eq_periodic_gap_solitons_proof_10} is nonzero (this helps to control the upper bound $\sup_{\partial M^{[k]}}J^{[k]}$ as will be seen later):
\begin{equation} \label{eq_periodic_gap_solitons_proof_11}
\begin{aligned}
J^{[k]}(y+t z_0^{[k]})
\leq -\frac{\delta}{4}\|y\|_{\ell^2_k}^2+C_1 t^2 -C_3 t^4 , 
\end{aligned}
\end{equation}
where $C_3:=\frac{1}{2}C_2\|z_0\|_{\ell^4}^4>0$.

Now, with \eqref{eq_periodic_gap_solitons_proof_11}, we can estimate the boundary value of $J^{[k]}$. Note that
\begin{equation*}
\begin{aligned}
\partial M^{[k]}
&=\big\{y+t z_0^{[k]}\in \text{Ran}(P_{-}^{[k]})\oplus \mathbb{R}^{+}z_0^{[k]}:\,t=\sqrt{\rho^2-\|y\|_{\ell^2_k}^2}> 0\big\} \\
&\quad  \bigcup \big\{y+r z_0^{[k]}\in \text{Ran}(P_{-}^{[k]})\oplus \mathbb{R}^{+}z_0^{[k]}:\, \|y\|_{\ell^2_k}^2\leq\rho^2-r^2\big\}.
\end{aligned}
\end{equation*}
When $t=\sqrt{\rho^2-\|y\|_{\ell^2_k}^2}> 0$, we see from \eqref{eq_periodic_gap_solitons_proof_11} that
\begin{equation*}
J^{[k]}(y+t z_0^{[k]})
\leq -\frac{\delta}{4}\rho^2+(C_1+\frac{\delta}{4}) t^2 -C_3 t^4 .
\end{equation*}
Since $(C_1+\frac{\delta}{4}) t^2 -C_3 t^4$ is bounded from above as $t\to\infty$ (thanks to the fact that $C_3>0$), we can take $\rho$ to be sufficiently large such that $J^{[k]}(y+t z_0^{[k]}) \leq 0$. On the other hand, we further impose that
\begin{equation*}
r<\sqrt{\frac{\delta^2}{32C_1\lambda\sigma}}.
\end{equation*}
As a consequence, when $t=r$ and $\|y\|_{\ell^2_k}^2\leq\rho^2-r^2$, \eqref{eq_periodic_gap_solitons_proof_11} implies that
\begin{equation*}
J^{[k]}(y+r z_0^{[k]})
\leq C_1 t^2 <\frac{\delta^2}{32\lambda\sigma}.
\end{equation*}
In conclusion, by selecting $$\rho>\sup_{t\in\mathbb{R}}\Big\{(C_1+\frac{\delta}{4}) t^2 -C_3 t^4 \Big\}\quad \text{and} \quad r<\sqrt{\frac{\delta^2}{32C_1\lambda\sigma}},$$ it holds that
\begin{equation*}
\sup_{\partial M^{[k]}} J^{[k]} \leq \frac{\delta^2}{32\lambda\sigma} .
\end{equation*}
This, together with \eqref{eq_periodic_gap_solitons_proof_9}, gives
\begin{equation*}
\sup_{\partial M^{[k]}} J^{[k]} \leq\frac{\delta^2}{32\lambda\sigma}<\frac{\delta^2}{16\lambda\sigma}\leq \sup_{M^{[k]}} J^{[k]},
\end{equation*}
and concludes the proof of property (iii).
\end{proof}
 
\subsection{Proof of Proposition \ref{prop_concentration}} \label{subsec_soliton_concentration}

In this section, we prove Proposition \ref{prop_concentration} using a standard compactness argument. First, for the sequence $a^{[k]}$ introduced in Proposition \ref{prop_periodic_gap_solitons}, its boundedness \eqref{eq_periodic_gap_solitons_1} and the weak compactness shown in Lemma \ref{lem_local_bound_weak_converge} indicate that we can extract a subsequence $a^{[k_i]}$ such that
\begin{equation}
\label{eq_concentration_proof_1}
a^{[k_i]}\overset{\ell^{\infty}-\text{weak}}{\rightharpoonup}a,\quad R^{[k_i]}a^{[k_i]}\overset{\ell^{2}-\text{weak}}{\rightharpoonup}a
\quad \text{with $\|a\|_{\ell^2}\leq M_1$.}
\end{equation}
We first show that $a\neq 0$. Thanks to the lower bound estimate in \eqref{eq_periodic_gap_solitons_1} and the weak convergence of $R^{[k_i]}a^{[k_i]}$, we see that
\begin{equation*}
m_1\leq (R^{[k_i]}a^{[k_i]},z_0)_{\ell^2}\to (a,z_0)_{\ell^2},
\end{equation*}
which implies that $a$ is nontrivial.

Next, we verify that $a$ solves equation \eqref{eq_discrete_station_eq_defect}. By Proposition \ref{prop_periodic_gap_solitons}, each $a^{[k_i]}$ satisfies
\begin{equation} \label{eq_concentration_proof_2}
\mathcal{C}^{[k_i]}a^{[k_i]}+S^{[k_i]}VR^{[k_i]}a^{[k_i]}-\lambda a^{[k_i]}-\lambda\sigma |a^{[k_i]}|^2 a^{[k_i]}=0 .
\end{equation}
We claim that for any compactly supported test function $b$,
\begin{equation} \label{eq_concentration_proof_3}
\begin{aligned}
&(\mathcal{C}^{[k_i]}a^{[k_i]},b)_{\ell^2}\to (\mathcal{C}a,b)_{\ell^2},\quad
(S^{[k_i]}VR^{[k_i]}a^{[k_i]},b)_{\ell^2}\to (Va,b)_{\ell^2},\\
&(a^{[k_i]},b)_{\ell^2}\to (a,b)_{\ell^2},\quad
(|a^{[k_i]}|^2 a^{[k_i]},b)_{\ell^2}\to (|a|^2 a,b)_{\ell^2}.
\end{aligned}
\end{equation}
Then, multiplying \eqref{eq_concentration_proof_2} with Kronecker function $b=\delta_{\bm{n}}^{(i)}$ and passing $i\to\infty$, we see that $a$ solves  pointwise the following nonlinear equation:
\begin{equation} \label{eq_concentration_proof_4}
\mathcal{C}a+Va-\lambda(1+\sigma|a|^2)a=0.
\end{equation}
Since $b$ is compactly supported, both convergences $$(a^{[k_i]},b)_{\ell^2}\to (a,b)_{\ell^2} \quad \text{and} \quad  (|a^{[k_i]}|^2 a^{[k_i]},b)_{\ell^2}\to (|a|^2 a,b)_{\ell^2}$$ follow directly from the weak convergence \eqref{eq_concentration_proof_1}. On the other hand, following the same lines as in Step 1 of Lemma \ref{lem_converge_spectral_projection}, we can prove that $$(\mathcal{C}^{[k_i]}a^{[k_i]},b)_{\ell^2}\to (\mathcal{C}a,b)_{\ell^2}.$$ Hence, we only need to prove that $(S^{[k_i]}VR^{[k_i]}a^{[k_i]},b)_{\ell^2}\to (Va,b)_{\ell^2}$. Note that by the definition of the periodization and truncation operators, and the self-adjointness of $V$, we have
\begin{equation*}
(S^{[k_i]}VR^{[k_i]}a^{[k_i]},b)_{\ell^2}=(VR^{[k_i]}a^{[k_i]},R^{[k_i]}b)_{\ell^2}=(R^{[k_i]}a^{[k_i]},VR^{[k_i]}b)_{\ell^2}.
\end{equation*}
Applying the weak convergence $a^{[k_i]}\overset{\ell^{\infty}-\text{weak}}{\rightharpoonup}a$ and the fact that $R^{[k_i]}b=b$ for sufficiently large $i$, we see
\begin{equation*}
(S^{[k_i]}VR^{[k_i]}a^{[k_i]},b)_{\ell^2}=(R^{[k_i]}a^{[k_i]},VR^{[k_i]}b)_{\ell^2}\to (a,Vb)_{\ell^2}=(Va,b)_{\ell^2}.
\end{equation*}
This concludes the proof of \eqref{eq_concentration_proof_3}.


Finally, since $a\in \ell^2(\mathbb{Z}^2;\mathbb{C}^{d})$, \eqref{eq_concentration_proof_4} implies that $a$ can be seen as the eigenfunction of the operator $\tilde{\mathcal{C}}:=\mathcal{C}-\lambda\sigma |a|^2+V$. Note that $\tilde{\mathcal{C}}$ is an $\ell^2$-bounded perturbation of $\mathcal{C}$. Therefore, $a$ is an in-gap defect mode corresponding to the unperturbed operator $\mathcal{C}$. In that case, it is well known that $a$ decays exponentially at infinity \cite{pankov2010soliton_discrete}.

\section{Concluding remarks} \label{sec:conclusion}

In this paper, we have developed a comprehensive mathematical framework toward analyzing nonlinear resonances in periodic systems of high-contrast subwavelength resonators. We have derived a characterization of soliton-like solutions and established their existence in both full- and half-space crystals. In order to do so, we have established a tight-binding approximation of the linear capacitance operator in subwavelength resonator crystals. Our results in this paper would allow us to study the topological properties of periodic lattices of subwavelength nonlinear resonators, such as the emergence of nonlinearity-induced topological edge states, and to also elucidate the interplay between nonlinearity and disorder. This would be the subject of forthcoming works. 

\subsection*{Acknowledgments} This work was supported in part by the Swiss National Science Foundation grant number 2025-10004276. It was initiated while H.A. was visiting the Hong Kong Institute for Advanced Study as a Senior Fellow. 

\footnotesize
\bibliographystyle{plain}
\bibliography{ref}

\end{document}